\newtheorem{lemma}{Lemma}
\newtheorem{theorem}{Theorem}
\begin{document}

\title{Semiparametric Bayesian Estimation of Dynamic Discrete Choice Models}
\author{Andriy Norets \thanks{Department of Economics, Brown University; andriy$\_$norets@brown.edu} and Kenichi Shimizu \thanks{Department of Economics, University of Alberta; kenichi.shimizu@ualberta.ca } \thanks{This version: \today}}
\maketitle

\begin{abstract}
We propose a tractable semiparametric estimation method for structural dynamic discrete choice models.
The distribution of additive utility shocks in the proposed framework is modeled by location-scale mixtures of extreme value distributions with  varying numbers of mixture components. Our approach exploits the analytical tractability of extreme value distributions in the multinomial choice settings and the flexibility of the location-scale mixtures. 
We implement the Bayesian approach to inference using Hamiltonian Monte Carlo 
and an approximately optimal reversible jump algorithm.
In our simulation experiments, we show that the standard dynamic logit model can deliver misleading results, especially about counterfactuals, when 
the shocks are not extreme value distributed.
Our semiparametric approach delivers reliable inference in these settings.
We develop theoretical results on approximations by location-scale mixtures in an appropriate distance and 
posterior concentration of the set identified utility parameters and the distribution of shocks in the model.

\end{abstract}

\begin{keyword}
Dynamic Discrete choice, Bayesian nonparametrics, set identification, location-scale mixtures, MCMC, Hamiltonian Monte Carlo, reversible jump
\end{keyword}



\section{Introduction}
A dynamic discrete choice model is a dynamic program with discrete controls. These models have been used widely in various fields of economics, including labour economics, health economics, and industrial organization. See, for example, \citet{Rust_handbook:94} and \citet{Aguirregabiria_Mira:10} for literature surveys. In such models, a forward-looking decision-maker chooses an action from a finite set in each time period. The actions affect decision-maker's per-period payoff and the evolution of state variables. The decision-maker maximizes the expected sum of current and discounted future per-period payoffs. 

Some state variables in these models are usually assumed to be unobserved by the econometrician (see, for example, page 1008 of \citet{Rust:87} for further discussion). Most of the previous work on estimation of dynamic discrete choice models imposes specific parametric assumptions on the distribution of the unobserved states or utility shocks. 
The most commonly used parametric assumption is that the unobserved states are extreme value independently identically distributed (i.i.d.).
As shown in \cite{Rust:87} and \citet{Rust_handbook:94}, under this assumption, the integrals over the unobserved states in the likelihood and the Bellman equations have closed form expressions, which considerably 
alleviates the computational burden of the model solution and estimation.
At the same time, it is well known in the literature that imposing parametric distributional assumptions can be problematic, see, for example, \citet{Manski_book:99}. 
Thus, it is desirable to relax these assumptions if possible. 

There are several previous papers that treat the unobserved state distribution nonparametrically for the binary choice case. \citet{Aguirregabiria:10} shows the nonparametric identification of the shock distribution under particular assumptions on the per-period payoffs.  \citet{NoretsTang2013} show that under an unknown distribution of the unobserved state and discrete observed states, the utility parameters and the unobserved state distribution are only set-identified. They also show how to compute the identified sets. \cite{BuchholzShumXu:20}  provide identification results for the per-period payoffs when the observed state is continuous.
The framework of \cite{ChristensenConnault2023} for structural models expressed through a finite number of moment equalities and inequalities can be used to check the sensitivity of counterfactuals to the variation of the utility shocks distribution within a neighborhood of extreme value distribution in dynamic discrete choice models with a finite observed state space; Rust's binary choice model of bus engine replacement is used in that paper for illustration.

For the multinomial choice case, \citet{chen_2017} uses  exclusion restrictions (a subset of the state variables affects only current utility, but not state transition probabilities) to obtain identification and estimation results. In settings without exclusion restrictions, \citet{Norets_ddc_mult:11} shows that it is in principle possible to extend the method from \citet{NoretsTang2013} to compute the identified set in  multinomial case, but it is computationally very difficult.

In this paper, we propose a tractable semiparametric estimation method applicable to the general multinomial choice case. It is based on modeling the unknown distribution of shocks by a finite mixture of extreme value distributions with a varying number of mixture components.
Our approach exploits the analytical tractability of extreme value distributions and the flexibility of the location-scale mixtures.
The unobserved utility shocks can be integrated out analytically in the likelihood function and the expected value functions, similarly to the case with extreme value distributed shocks. At the same time, we show that the location-scale mixtures can approximate densities from a large nonparametric class in an appropriate distance 
and that for any given distribution of utility shocks, a finite mixture of extreme value distributions can deliver exactly the same conditional choice probabilities.
Posterior concentration on the identified sets of utility parameters and the distribution of shocks is an implication of these results.
We implement the Bayesian approach to inference for the model using Hamiltonian Monte Carlo 
and an approximately optimal reversible jump Markov chain Monte Carlo (MCMC) algorithm from \cite{Norets2017mcmc}. 
Similarly to \citet{NoretsTang2013}, frequentist confidence sets for identified sets can also be computed from the MCMC output.

We apply our framework to binary and multinomial choice models.
For the binary dynamic choice model from \cite{Rust:87}, our approach delivers estimation results that are consistent with the previous literature on 
semiparametric estimation (\citet{NoretsTang2013}). 
For the multinomial choice model of medical care use and work absence from \citet{Gilleskie_Eca:98}, 
we demonstrate how uncertainty about model parameters and counterfactuals increases when the distributional assumptions on the shocks are relaxed.
Moreover, we show that the standard dynamic logit model can deliver misleading results, especially about counterfactuals, when the shocks are not extreme value distributed.
Our semiparametric approach delivers reliable inference in these settings.

Even when the distribution of the utility shocks is assumed to be known, parameters and counterfactuals in dynamic discrete choice models could still be only set identified
under a variety of scenarios such as 
very flexible specifications of utility functions, lack of exclusion restrictions or variation in transitions for the observed state variables, and unknown time discount factors;
see, for example, \cite{Rust_handbook:94}, \cite{MagnacThesmar:02},
\citet{NoretsTang2013}, \cite{AbbringDaljord2020}, and \cite{KalouptsidiScottSouzaRodrigues2021}.
Our estimation framework 
does not require any special adjustments to accommodate such scenarios 
since parameters and counterfactuals are already set identified under 
nonparametric specification of the distribution of shocks.

The rest of the paper is organized as follows. Section \ref{section:setup} describes the general model setup. In Section \ref{section:semi}, we introduce our semiparametric framework. In Section \ref{section:bayesian}, we describe the Bayesian estimation method. Section \ref{section:theory} presents theoretical results. Sections \ref{section:rust} and \ref{section:gilleskie} contain the applications. 
Possible framework generalizations are discussed in Section \ref{sec:extensions}.
Derivations, proofs, and implementation details are given in appendices.

\section{General Model Setup}
\label{section:setup}

In the infinite-horizon version of the model, the decision maker maximizes the expected discounted sum of the per-period payoffs
\begin{align}
\label{eq:sequence_formulation}
\max_{d_t,d_{t+1},\ldots} E_t \left(\sum_{j=0}^\infty \beta^j  u\left( x_{t+j},d_{t+j},\epsilon_{t+j}  \right)  \right),
\end{align}
where $d_t \in \{0,1,\ldots,J \}$ is the control variable, $x_t \in X=\{1,\ldots,K\}$ is the state variable observed by the econometrician, $\epsilon_t = (\epsilon_{t0},\epsilon_{t1},\ldots,\epsilon_{tJ})^T \in R^{J+1}$ is the state variable unobserved by the econometrician, $\beta$ is the time discount factor, and $u(x_t,d_t,\epsilon_t)$ is the per-period payoff.  The decision-maker observes both $x_t$ and $\epsilon_t$ at time $t$ before making the decision.

Following \cite{Rust:87} and the subsequent literature, we assume that (i) the per-period payoffs are additively separable in $\epsilon_t$,
$u(x_t,d_t,\epsilon_t)= u(x_t,d_t)  + \epsilon_{td_t}$; (ii) $\epsilon_t$'s are independent of other variables and independently identically distributed (i.i.d.) over time according to a distribution $F$ with zero mean; 
(iii) the observed states evolve according to a controlled Markov chain 
$G$ with transition probabilities $G_{x}^j =\{Pr(x_{t+1}|x_t=x,d_t=j), \, x_{t+1} \in X\}$ and 
an initial distribution $\{Pr(x_{1}), \, x_1 \in X\}$.
The utility functions are assumed to depend on a vector of unknown parameters, $\theta \in \mathbb{R}^{d_\theta}$, that are estimated.
Below, we often omit $\theta$ in $u(x_t,d_t; \theta)$ and related objects such as value functions for notation brevity.
As in \cite{Rust:87}, \citet{Gilleskie_Eca:98}, and most of the literature, the time discount factor is assumed to be known; it is usually calibrated to imply a reasonable value of a risk free annual interest rate. 
In most applications, the observed state transition probabilities are estimated in a first stage prior to estimation of the preference parameters $\theta$ since it can be done directly from the observed state transitions with a relatively high precision and without solving the dynamic program.  In line with that and for notation brevity we treat $G$ as known and fixed.
Extensions of our results and methodology to continuous $X$ and unknown $G$ are discussed in Section \ref{sec:extensions}.

Under mild regularity conditions (\cite{BhattacharyaMajumdar:89}), the decision problem in \eqref{eq:sequence_formulation} admits the following Bellman representation
\begin{align}
\label{eq:emax_def}
Q(x) = \int \max_{j=0,1,\ldots,J} \bigg[ u(x,j) + \beta G_x^j Q + \epsilon_j \bigg] dF(\epsilon),
\end{align}
where $Q$ is called the \textit{Emax} function and $G_x^j Q$ denotes $E(Q(x_{t+1})|x_t=x,d_t=j)$.
The conditional choice probability (CCP) can be expressed as 
\begin{align}
\label{eq:ccp_def}
p(d | x) = \int 1\bigg\{ u(x,d) + \beta G_x^d Q + \epsilon_d \geq  u(x,j) + \beta G_x^j Q + \epsilon_j, \forall j  \bigg\} dF(\epsilon).
\end{align}
For a panel of observations, $D^n=\{x_{it}, d_{it},\, i=1,\ldots,n,\, t=1,\ldots,T\}$, of $n$ decision makers over $T$ time periods, 
the partial likelihood function (with the fixed $G_x^j$ pre-estimated from the observed transitions as is commonly done in practice) can be expressed as 
\begin{align}
\label{eq:likelihood}
\log L(D^n) = \sum_{i,t} \log p(d_{it} | x_{it}).
\end{align}
\cite{Rust:87} proposed to solve the dynamic optimization problem by first
iterating on the Bellman equation \eqref{eq:emax_def} to get close to the fixed point $Q$ 
and then using a Newton  method that quickly converges to the fixed point from a close starting point.
With $Q$ at hand, one can compute the CCPs in \eqref{eq:ccp_def} and evaluate the likelihood function at a given $\left(u;\beta;G \right)$.
Alternatively, \cite{JuddSu:08} proposed to use constrained optimization to maximize the likelihood function subject to \eqref{eq:emax_def}.
In either scenario, assuming that $\epsilon_{tj}$'s are i.i.d. Gumbel (or extreme value type I) delivers analytical expressions for the integrals in \eqref{eq:emax_def} and \eqref{eq:ccp_def}, 
\begin{align}
\label{eq:dyn_logit}
	p(d| x) = \frac{e^{u(x,d) + \beta G_x^d Q}}{\sum_{j=0}^J e^{u(x,j) + \beta G_x^j Q}}, \; 
	Q(x)= \log \sum_{d=0}^J e^{u(x,d) + \beta G_x^d Q}.
\end{align}
In the resulting dynamic logit specification, the computational burden of the model solution and estimation
is considerably alleviated.  Hence, the dynamic logit is predominantly used in applications of the estimable dynamic discrete choice models.
At the same time, the econometrics literature suggests that the distributional assumptions could be problematic in general, see, for example, \citet{Manski_book:99}.
In the following section, we specify a non-parametric model for the distribution of shocks for the general multinomial choice case that provides analytical simplifications comparable to those of the dynamic logit.


\section{Semiparametric Model}
\label{section:semi}

Rather than making a particular parametric assumption, we model the distribution of unobserved states using a flexible mixture specification.
In order to reduce the number of parameters, we use an innocuous normalization $\epsilon_{t0} = 0$ (the agent's decisions and value functions do not change if $\epsilon_{t0}$ is subtracted from the per-period payoff $u(x_t,d_t,\epsilon_t)$ for all $d_t \in \{0,1,\ldots,J\}$). 

For $\mu \in \mathbb{R}^J$ and $\sigma>0$, let us define a multivariate Gumbel density by 
\begin{equation}
\label{eq:phi_gumbel}
\phi(z; \mu,\sigma) = \prod_{j=1}^J \frac{1}{\sigma} \phi\left( \frac{z_j-\mu_j}{\sigma} \right),\, \mbox{where } \phi(z_j) = e^{-z_j-\gamma -e^{-z_j-\gamma}} 
\end{equation}
is the univariate Gumbel density
and $\gamma$ is the Euler-Mascheroni constant.  Some relevant properties of the Gumbel distribution are outlined in Appendix \ref{EV}.

For $\mu_k \in \mathbb{R}^{J}$, $\sigma_k \in \mathbb{R}_+$, $\omega_k \in [0,1]$, $k=1,\ldots, m$, and $\sum_{k=1}^m \omega_k =1$,
we model the unknown density by a location-scale mixture of Gumbel densities
\begin{align}
\label{eq:mixture}
\epsilon_t \sim  \sum_{k=1}^m \omega_k \phi(\cdot  ; \mu_k,\sigma_k ),
\end{align}
with a variable number of mixture components $m$ for which a prior distribution on the set of positive integers is specified.
Mixture models are extensively used in econometrics and statistics literature, see monographs by 
 \cite{McLachlanPeel:00} and \cite{FruhwirthSchnatter:06} for references.
It is well known that location-scale mixtures with a variable or infinite number of components 
can approximate any continuous or smooth density arbitrarily well.
For example, Bayesian models based on normal mixtures deliver optimal up to a log factor posterior contraction rates in adaptive estimation of smooth densities 
(\cite{Rousseau:10}, \cite{ShenTokdarGhosal2013}, and \cite{NoretsPelenis_Eca_2022}).
To develop intuition for this type of results note that the standard nonparametric density estimator based on kernel $\phi$ is a special case of \eqref{eq:mixture},
or, alternatively and more in line with the actual proofs,
the expectation of the standard kernel density estimator is a continuous mixture that can be discretized into a special case of \eqref{eq:mixture}.
Thus, it is reasonable to expect that the specification \eqref{eq:mixture} is very flexible. Indeed, in Section  \ref{section:theory},
we show that it can approximate smooth multivariate densities arbitrarily well in an appropriate distance so that the 
conditional choice probabilities and the \textit{Emax} function implied by the model with \eqref{eq:mixture} approximate those from the model with an arbitrary smooth density for $\epsilon_t$.

The model specification with \eqref{eq:mixture} also possesses attractive analytical properties.  
If a normalization $\epsilon_{t0} = 0$ is not imposed and $(J+1)$-dimensional version of \eqref{eq:mixture} is used, then
$Q$ and $p$ could be expressed as mixtures of the appropriately recentered and rescaled expressions from the dynamic logit model \eqref{eq:dyn_logit}.
However, even if the normalization $\epsilon_{t0} = 0$ is imposed, 
which is preferred as it reduces the dimension of the distribution we model nonparametrically,
closed form expressions for $Q$ and $p$ are still available.  They are presented in the following lemma.

%


\begin{lemma}\label{lm:ccp_emax}
Suppose $\epsilon \sim  \sum_{k=1}^m \omega_k \phi(\cdot  ; \mu_k,\sigma_k )$. Then, 
\begin{equation}\label{eq:ccp}
p(d|x)=
	\begin{cases}
      		\sum_{k=1}^m \omega_k \exp\left[-e^{-a_{kx}} \right], & \text{if }\ d=0 \\
      		\sum_{k=1}^m \omega_k \exp\left[\frac{u(x,d) + \beta G_x^d Q +\mu_{dk}}{\sigma_k} - A_{kx} \right] \left\{1- \exp\left[-e^{-a_{kx}} \right]  \right\}   
      		, & \text{if }\ d=1,\ldots,J;
    	\end{cases}
\end{equation}
\begin{equation}\label{eq:emax}
Q(x) = \sum_{k=1}^m \omega_k \sigma_k \left[A_{kx} + E1( e^{-a_{kx}} ) \right], \mbox{ where } E1(z) = \int_{z}^{\infty}  e^{-t} / t dt,
\end{equation}
\begin{align*}
A_{kx} &= \log \sum_{j=1}^J \exp \left( \frac{u(x,j) + \beta G_x^j Q +\mu_{jk}}{\sigma_k} \right) \mbox{ and }
a_{kx} =  \frac{u(x,0) + \beta G_x^0 Q}{\sigma_k} + \gamma - A_{kx}.
\end{align*}
\end{lemma}

The derivations of \eqref{eq:ccp} and \eqref{eq:emax} can be found in Appendix \ref{sec:ccp_emax}.
The derivatives of \eqref{eq:ccp} and \eqref{eq:emax} that are useful for the model solution and estimation are given in 
Appendix \ref{sec:derivatives}.
Similarly to \cite{Rust:87}, we obtain the solution of the Bellman equation \eqref{eq:emax} by a Newton-Kantorovich method 
described in Appendix \ref{sec:NK}.

\section{Inference}
\label{section:bayesian}

\subsection{Motivation and Overview of the Bayesian Approach}

In estimation of models based on location-scale mixtures with a variable number of components, the econometrician  faces several problems.
First, the scale parameters need to be bounded away from zero;  otherwise, the likelihood function is unbounded.
Second, the likelihood function is a rather complex function of parameters with multiple modes. 
Third, the number of mixture components needs to be selected in the estimation procedure.
Finally, there is usually considerable uncertainty about the estimated parameter values and it should be taken into account in 
model predictions and counterfactual analysis.

The Bayesian approach to inference and the associated simulation methods are well suited for solving these problems.
Prior distributions can provide soft constraints for the scale parameters and an appropriate penalization for the number of mixture components or model complexity.
MCMC methods can successfully explore very complex posterior or likelihood surfaces.
Posterior predictive distributions for objects of interest automatically incorporate the uncertainty about parameter values including the number of mixture components.

\subsection{Normalizations}
\label{subsection:normalizations}

In addition to the normalization $\epsilon_{t0} = 0$, the scale of $\epsilon_t$ can be innocuously normalized.  
Instead, to simplify the MCMC algorithm we impose a location and scale normalization on the parameters of the per-period payoffs and keep the location and scale of 
\eqref{eq:mixture} unrestricted.  Specifically, consider a linear in parameters utility specification
\[
u(x_t,j,\epsilon_t;\theta)=\theta_j + z_j(x_t)^\prime \theta_{J+1:d_\theta} + \epsilon_{tj},
\]
where $z_j(x_t)$ are known functions of the observed state variables.
In this specification, the intercepts $\theta_j$, $j=0,\ldots,J$, can be fixed to arbitrary values as long as 
the locations of $\epsilon_{tj}$, $j=1,\ldots,J$, are unrestricted.  

In applications, we set $\theta_0=0$ and 
$\theta_j$, $j=1,\ldots,J$ to $\hat{\theta}_j^{dl}$, the estimates obtained under the dynamic logit specification.
To normalize the scale of $\epsilon_t$, we assume that the sign of one of the coefficients, 
say $\theta_{J+1}$, is known and we keep this coefficient fixed (to the corresponding dynamic logit estimate, $\hat{\theta}_{J+1}^{dl}$).
Thus, the MCMC algorithm produces draws of $\theta_{J+2:d_\theta}$ and the mixture parameters $(\psi_{1m},m)$ in \eqref{eq:mixture}.
For comparisons of the estimation results with the dynamic logit estimates and the identified sets in 
\cite{NoretsTang2013},
the parameter draws are renormalized for reporting as follows 
\begin{equation}
\label{eq:renorm_params}
s \cdot \bigg ( \hat{\theta}_{1:J}^{dl} + \sum_{k=1}^m \omega_k \mu_k, \hat{\theta}_{J+1}^{dl}, \theta_{J+2:d_\theta}
\bigg),
\end{equation}
where the addition of $\sum_{k=1}^m \omega_k \mu_k$ to the intercepts corresponds to the 
zero mean for shocks and the scale factor is defined by the mixture parameters 
\[
s = \log 2\big/ E[\tilde{\epsilon}_{t1} 1(\tilde{\epsilon}_{t1}\geq M_{\tilde{\epsilon}_{t1}})],
\]
where
$\tilde{\epsilon}_{t1} = \epsilon_{t1} - \sum_{k=1}^m \omega_k \mu_{1k}$,
$M_{\tilde{\epsilon}_{t1}}$ denotes the median of $\tilde{\epsilon}_{t1}$,
and $\epsilon_{t1} \sim \sum_{k=1}^m \omega_k \phi(\cdot  ; \mu_{1k},\sigma_k )$.
There are many possible scale normalizations. The particular scale normalization we use here reduces to the one introduced by \cite{NoretsTang2013} for the binary choice case.
Let us emphasize that the normalizations discussed above are innocuous for estimation and counterfactual analysis as long as the assumed sign of $\theta_{J+1}$ is correct.


\subsection{Priors}

Let us introduce the prior distributions for the parameters of the mixture in \eqref{eq:mixture}.
We use the following prior distributions on the number of mixture components and the mixing weights, 
\begin{align}
\label{eq:m_prior}
\Pi(m) & \propto e^{-\underbar{A}_m m (\log m)^\tau}, \\
\label{eq:omega_prior}
\Pi(\omega_1,\ldots,\omega_m|m) & = \mbox{Dirichlet}(\underbar{a}/m,\ldots,\underbar{a}/m),
\end{align}
where the hyperparameters $\underbar{a}$, $\underbar{A}_m$, and $\tau$ are specified in the applications below.
For the theoretical results obtained in the present paper, we only need $\Pi(m)>0, \forall m$ and full support 
on the simplex for $\Pi(\omega_1,\ldots,\omega_m|m)$.  Nevertheless, the functional forms in \eqref{eq:m_prior} and \eqref{eq:omega_prior} perform well in applications and deliver optimal posterior contraction rates in nonparametric multivariate density estimation by mixtures of normal distributions, see, for example, \cite{ShenTokdarGhosal2013} and \cite{NoretsPelenis_Eca_2022}.  
We allow the scale parameter $\sigma_k$ to have a multiplicative part $\sigma$ that is common across the mixture components: $\sigma_k=\tilde{\sigma}_k \cdot \sigma$. This multiplicative specification performs well in a variety of applications of location-scale mixture models (see, for example, \cite{Geweke:05})
and is also important for the aforementioned optimal posterior concentration results for mixtures of normals.
In the applications, we use finite mixtures of normals as flexible priors for $\log \sigma$, $\log \tilde{\sigma}_k$ and the location parameters $\mu_{kj}$.

\subsection{MCMC Algorithm}
\label{sec:mcmc_details}

Our MCMC algorithm for simulating from the model posterior distribution combines Hamiltonian Monte Carlo (HMC) for simulating parameters conditional on the number of mixture components and an approximately optimal reversible jump algorithm from \citet{Norets2017mcmc} for simulating the number of mixture components.
HMC is a very popular and efficient MCMC algorithm; see, for example, \citet{neal2012hmc} for an introduction.
HMC requires only evaluation of the likelihood and the prior and their derivatives.
The proposals in HMC are obtained following the Hamiltonian dynamics on the parameter space that describe the movement of a puck on a friction-less surface with some 
initial random momentum. 
For implementing the HMC step of the algorithm we utilize the HMC sampler from the Matlab Statistics and Machine Learning toolbox.
The package can choose HMC's parameters, such as a step size, automatically, and we perform this automatic initialization 
once for each value of $m$ that we encounter in the MCMC run.
The package works only with unbounded parameters.  Hence, we transform the bounded parameters, such as mixing weights and scales, for the HMC step.
The form of the prior for the transformed parameters and the derivatives of the likelihood used in the algorithm are reported in Appendix \ref{sec:auxiliary_res_det}.




For the reversible jump algorithm, we need to transform the mixing weights into unnormalized weights $\gamma_k$, $k=1,\ldots,m$,
so that they have interpretation under different values of $m$.  Specifically,
conditional on $m$,
$\omega_k = \gamma_k/\sum_{l=1}^m \gamma_l$ and the Dirichlet prior on
$(\omega_1,\ldots,\omega_m)$ corresponds to a gamma prior for the unnormalized weights: 
$\gamma_k|m \sim Gamma(\underline{a}/m,1)$, $k=1,\ldots,m$.  
Let $\psi_k = (\mu_k,\tilde{\sigma}_k,\gamma_k)$ and $\psi_{1m}=(\theta, \sigma,\psi_1,\ldots,\psi_m)$,
where $\theta$ includes model parameters such as coefficients in the utility functions.
With this notation, the likelihood function is denoted by $p(D^n|m,\psi_{1m})$.

The following short description of the reversible jump algorithm is adapted from \cite{NoretsPelenis_Eca_2022}, see 
\cite{Norets2017mcmc} for more details.
 Denote a proposal distribution for the parameter of a new mixture component $m+1$ by $\tilde{\pi}_{m+1}(\psi_{m+1}|D^n,\psi_{1m})$.
The algorithm works as follows.  Simulate proposal $m^\ast$ from
$Pr(m^\ast=m+1|m)=Pr(m^\ast=m-1|m)=1/2$.  If $m^\ast=m+1$, then also simulate $\psi_{m+1} \sim \tilde{\pi}_{m+1}(\psi_{m+1}|D^n,\psi_{1m})$.
Accept the proposal with probability $\min\{1, \alpha(m^\ast,m)\}$, where
\begin{align}
\alpha(m^\ast,m) &= \frac{p(D^n|m^\ast,\psi_{1m^\ast}) \Pi(\psi_{1m^\ast}|m^\ast)\Pi(m^\ast)}
{p(D^n|m,\psi_{1m}) \Pi(\psi_{1m}|m)\Pi(m) } \nonumber
\\
& 
\label{eq:alpha_ar}
\cdot
\left(
\frac{1\{m^\ast=m+1\}}{\tilde{\pi}_m(\psi_{m+1}|\psi_{1m},Y)} + 1\{m^\ast=m-1\}\tilde{\pi}_{m-1}(\psi_{m}|\psi_{1m-1},D^n)
\right).
\end{align}
Innocuous random relabeling of mixture components increases the acceptance probability for attempts to delete $m$-th mixture component ($m^\ast=m-1$).  
\cite{Norets2017mcmc} shows that an optimal choice of the proposal distribution $\tilde{\pi}_m$ is 
the conditional posterior $p(\psi_{m+1}|D^n,m+1,\psi_{1m})$.  
The conditional posterior can be evaluated up to a normalization constant;  however, it 
seems hard to directly simulate from it and compute the required normalization constant.
 Hence, we use a Gaussian approximation to $p(\psi_{m+1}|D^n,m+1,\psi_{1m})$ as the proposal (with the mean equal to the conditional posterior mode, obtained by a Newton method, and the variance equal to the inverse of the negative of the Hessian evaluated at the mode).

The algorithm pseudo code is presented below.


\begin{algorithm}[h]
\setstretch{1.00}
\caption{Estimation Algorithm}
Step 1: Pre-estimate/fix the observed state transition probabilities $G$.\\
Step 2: Tune HMC hyperparameters for each number of mixture components $m=1,...,\bar{m}$, where $\bar{m}$ is a pre-specified positive integer.\\
Step 3: Initialize the parameters $m^{(0)}$ and $\psi_{1m^{(0)}}^{(0)}$.\\
Step 4: Run MCMC by iterating the following two steps.\\
\KwOutput{The posterior draws: $\left\{m^{(\ell)}, \psi_{1m^{(\ell)}}^{(\ell)} \right\}$, $\ell=1,...,L$.}
\For{$\ell \in \{0,1,...,L-1\}$}{

    1) Simulate $m^{(\ell+1)}|\ldots$ using optimal reversible jump.
    \begin{itemize}
    \item Propose $m^\ast$ with $Pr(m^\ast=m^{(\ell)}+1|m^{(\ell)})=Pr(m^\ast=m^{(\ell)}-1|m^{(\ell)})=1/2$. 
    \item If $m^\ast=m^{(\ell)}+1$, then also simulate $\psi_{m^{(\ell)}+1}^{(\ell)}$ from 
$\tilde{\pi}_{m^{(\ell)}+1}(\cdot|D^n,\psi_{1m^{(\ell)}}^{(\ell)})$. 

\item If $m^\ast=m^{(\ell)}-1$, choose $k$ randomly from $\{1,\ldots,m^{(\ell)}\}$ and exchange values of $\psi_k^{(\ell)}$ and $\psi_{m^{(\ell)}}^{(\ell)}$.

\item With probability  $\min\{1, \alpha(m^\ast,m^{(\ell)})\}$ accept $m^{(\ell+1)}= m^\ast$; otherwise, $m^{(\ell+1)}=m^{(\ell)}$.
\end{itemize}

    2) Simulate $\psi_{1m^{(\ell+1)}}^{(\ell+1)}|\ldots$ using HMC. 
    \begin{itemize}
    \item Initialize HMC algorithm by the current value $\psi_{1m^{(\ell+1)}}^{(\ell)}$
    and the hyperparameters specific to $m=m^{(\ell+1)}$.
    If HMC hyper parameters have not yet been tuned/obtained for $m=m^{(\ell+1)}$, then obtain and store them. 
    \item Perform one or several iterations of the HMC algorithm to get $\psi_{1m^{(\ell+1)}}^{(\ell+1)}$.
    \end{itemize}

}
Step 5: Use MCMC draws $\left\{m^{(\ell)}, \psi_{1m^{(\ell)}}^{(\ell)} \right\}$ to conduct inference on parameters or functions of interest.
\label{alg:mcmc}
\end{algorithm}
\FloatBarrier

The Matlab code for the MCMC algorithm and replication instructions for the estimation results in the applications in Sections \ref{section:rust} and \ref{section:gilleskie} are publicly available.\footnote{\url{https://anorets.github.io/papers/mix_ddcm_code.zip}}


\pagebreak

\section{Approximation Results and Asymptotics}
\label{section:theory}

In this section, we show that location-scale mixtures of Gumbel densities can arbitrarily well approximate densities from a large nonparametric class.
These approximation results combined with the \cite{Schwartz:65}'s theorem imply a posterior consistency result for the set identified model parameters.
We also show that a model with a finite mixture of Gumbels can exactly match the CCPs  
from a model with an arbitrary distribution of shocks.

\subsection{Approximation Results}
\label{sec:approx_results}

Let us first define a distance for distributions of utility shocks: 
for $F_i$ with density $f_i$, $i=1,2$, 
\[
\rho(F_1,F_2) = \int (1+ \sum_{j=0}^J |\epsilon_j| ) | f_1(\epsilon) -f_2(\epsilon) | d\epsilon.
\]
This distance is appropriate for our purposes as the \textit{Emax} function and the conditional choice probabilities are continuous in that distance as shown in the following lemma.

\begin{lemma}
\label{lm:LipschitzCont_Q_p}
Suppose (i)  $|u(x,j)|\leq \bar{u}<\infty$ for all $x \in X$ and $j=0,1,\ldots,J$;
(ii) under $F$, the density for $\epsilon_j -\epsilon_d$ is bounded for all $j \ne d$;
(iii) under $F$, $E(|\epsilon_j|)$ is finite for all $j$.
Then, the \textit{Emax} function and the conditional choice probabilities are locally Lipschitz continuous in $F$,
\[
\sup_x |Q(x;F)-Q(x;\tilde{F})| \leq C \cdot \rho(F,\tilde{F}),
\]
\[
\sup_{d,x} |p(d|x;F)-p(d|x;\tilde{F})| \leq C^\prime \cdot \rho(F,\tilde{F}),
\]
where constants $C$ and $C^\prime$ depend on $\beta$, $\bar{u}$ and the bounds on the densities and moments in conditions (ii)-(iii).
\end{lemma}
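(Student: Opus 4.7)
The plan is to handle $Q$ first, because the CCP bound will use the $Q$ bound. Let
$T_F$ denote the Bellman operator acting on bounded $V:X\to\mathbb{R}$ by
\[
(T_F V)(x)=\int\max_{j}\bigl[u(x,j)+\beta G_x^j V+\epsilon_j\bigr]\,dF(\epsilon).
\]
The max functional is $1$-Lipschitz and each $G_x^j$ is a Markov average, so $T_F$ is a $\beta$-contraction in sup-norm; thus $Q(\cdot;F)=T_F Q(\cdot;F)$ is well defined. From the triangle identity
\[
Q_F-Q_{\tilde F}=T_F Q_F-T_F Q_{\tilde F}+T_F Q_{\tilde F}-T_{\tilde F}Q_{\tilde F},
\]
the contraction bound yields
\[
\|Q_F-Q_{\tilde F}\|_\infty\le(1-\beta)^{-1}\|T_F Q_{\tilde F}-T_{\tilde F}Q_{\tilde F}\|_\infty,
\]
so it suffices to bound $\|T_F V-T_{\tilde F} V\|_\infty$ for $V=Q_{\tilde F}$ in terms of $\rho(F,\tilde F)$.

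For that integral-against-density-difference bound I would write the integrand as $g(x,\epsilon):=\max_j\bigl[u(x,j)+\beta G_x^j V+\epsilon_j\bigr]$ and observe that $|g(x,\epsilon)|\le\bar u+\beta\|V\|_\infty+\sum_{j=0}^J|\epsilon_j|$. Condition (iii) gives $\|Q_{\tilde F}\|_\infty\le(1-\beta)^{-1}\bigl(\bar u+\sum_j E_{\tilde F}|\epsilon_j|\bigr)<\infty$ (by applying the fixed-point inequality to $Q_{\tilde F}$ itself), so
\[
\bigl|T_F V(x)-T_{\tilde F}V(x)\bigr|\le\int\bigl(\bar u+\beta\|V\|_\infty+\textstyle\sum_j|\epsilon_j|\bigr)|f-\tilde f|\,d\epsilon\le C_0\,\rho(F,\tilde F),
\]
which is exactly the weighted $L^1$ distance in the statement. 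This closes the $Q$ bound with constant $C=C_0/(1-\beta)$ depending only on $\beta$, $\bar u$, and the first absolute moments of $\tilde F$.

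For the CCP, write $A_F(x,d)=\{\epsilon:\epsilon_j-\epsilon_d\le c_{dj}(F),\ \forall j\ne d\}$ with threshold $c_{dj}(F)=u(x,d)-u(x,j)+\beta(G_x^d-G_x^j)Q_F$. Split
\[
p(d|x;F)-p(d|x;\tilde F)=\int\!\mathbf 1_{A_F}(f-\tilde f)\,d\epsilon\;+\;\int\!(\mathbf 1_{A_F}-\mathbf 1_{A_{\tilde F}})\tilde f\,d\epsilon.
\]
The first piece is at most $\int|f-\tilde f|\,d\epsilon\le\rho(F,\tilde F)$. For the second piece, the symmetric difference of the two regions is contained in $\bigcup_{j\ne d}\{\epsilon_j-\epsilon_d\text{ lies between }c_{dj}(F)\text{ and }c_{dj}(\tilde F)\}$, and by condition (ii) the $\tilde F$-probability of each of these strips is bounded by (bound on the density of $\epsilon_j-\epsilon_d$)$\times|c_{dj}(F)-c_{dj}(\tilde F)|\le 2\beta B\|Q_F-Q_{\tilde F}\|_\infty$. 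Combining with the $Q$ bound delivers the desired Lipschitz inequality for $p$, with constant $C'$ depending on $\beta$, $\bar u$, $B$, and the moment bound.

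\paragraph{Main obstacle.} The only delicate step is the symmetric-difference argument for the CCPs: I need condition (ii) precisely to convert a shift in the deterministic thresholds into a linear bound on a probability, and I must be careful that the change-of-$F$ term and the change-of-region term are handled separately so that the bounded-density hypothesis is applied under $\tilde F$ (not the intractable difference $f-\tilde f$). Everything else is a routine contraction/triangle inequality exercise once the weighted norm $\rho$ is matched to the linear growth of the integrand $g$ in $\epsilon$.
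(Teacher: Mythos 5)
Your proposal is correct and follows essentially the same route as the paper's proof: the same fixed-point perturbation decomposition $Q_F-Q_{\tilde F}=(T_FQ_F-T_FQ_{\tilde F})+(T_FQ_{\tilde F}-T_{\tilde F}Q_{\tilde F})$ with the $\beta$-contraction absorbing the first term and the weighted $L^1$ distance $\rho$ controlling the second, and the same split of the CCP difference into a change-of-measure term (bounded by $\rho$) plus a change-of-region term handled by containing the symmetric difference in strips for $\epsilon_j-\epsilon_d$ of width $O(\beta\|Q_F-Q_{\tilde F}\|_\infty)$ and applying the bounded-density condition (ii). The only cosmetic difference is which of the two distributions carries the region-difference integral (you use $\tilde F$, the paper uses $F_1$), which does not affect the argument.
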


The lemma holds irrespective of whether the innocuous normalization $\epsilon_{t0}=0$ is imposed.
Its proof is given in Appendix \ref{sec:proofs}.

The following lemma shows that densities satisfying smoothness and finite moment conditions can be approximated by mixtures of Gumbels in distance $\rho$.

\begin{lemma}
\label{lm:approx_by_mgumbles}
Let $f$ be a density on $\mathbb{R}^I$ satisfying a moment existence condition 
\[\int ||\mu||_2 f(\mu) d\mu < \infty,
\]
and a smoothness condition
\begin{equation}
\label{eq:f_smooth}
 |f(z+h) - f(z) | \leq ||h||_2 L_f(z) e^{\tau ||h||_2},
\end{equation}
 for some $\tau>0$ and an envelope function $L_f(\cdot)$ such that
\begin{equation}
\label{eq:L_inf_smooth}
 \int (1+ |z_i| ) L_f(z) dz < \infty , i=1,\ldots,I.
\end{equation}
Then, for any $\delta>0$, there exist $(m,\omega,\mu,\sigma)$ 
where 
$m \in \mathbb{Z}^+$,
$\omega_j \in [0,1]$ with $\sum_{j=1}^m \omega_j=1$,
$\mu_j \in \mathbb{R}^I$, $j=1,\ldots,m$, and 
$\sigma>0$
such that 
\[
\rho\left(f(  \cdot  ), \sum_{j=1}^m \omega_j \phi(\ \cdot \  ; \mu_j, \sigma)  \right) < \delta.
\]

\end{lemma}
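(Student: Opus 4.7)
The plan is to approximate $f$ in two stages: first by a continuous Gumbel location-scale mixture, and then by a finite discretization. I would define the continuous mixture
\[
f_\sigma(z) = \int \phi(z;\mu,\sigma) f(\mu)\, d\mu,
\]
and note that, because the univariate Gumbel in \eqref{eq:phi_gumbel} is shifted by $\gamma$ to have mean zero, the change of variables $u=(z-\mu)/\sigma$ yields $f_\sigma(z) = \int \phi_0(u)\, f(z-\sigma u)\, du$, where $\phi_0$ is the standard product Gumbel density. Applying the smoothness condition \eqref{eq:f_smooth} with $h=-\sigma u$ gives
\[
|f_\sigma(z)-f(z)| \;\leq\; \sigma L_f(z) \int \phi_0(u)\,\|u\|_2\, e^{\tau \sigma \|u\|_2}\, du.
\]
The Gumbel moment generating function is finite only on $(-\infty,1)$, so the integral on the right is finite provided $\sigma<1/\tau$. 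Integrating both sides against the $\rho$-weight $(1+\sum_j|z_j|)$ and invoking the envelope integrability \eqref{eq:L_inf_smooth} produces $\rho(f,f_\sigma)=O(\sigma)$, so $\sigma$ can be chosen to make this contribution less than $\delta/2$.

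The second step is to replace the continuous mixing measure $f(\mu)\,d\mu$ with a finitely supported one. I would choose a large cube $K\subset\mathbb{R}^I$ so that $\int_{K^c}(1+\|\mu\|_2) f(\mu)\, d\mu$ is negligible (this is where the first-moment hypothesis enters), partition $K$ into small cells $C_1,\ldots,C_m$ with representatives $\mu_k\in C_k$, and set $\omega_k \propto \int_{C_k} f(\mu)\, d\mu$ normalized to sum to one. Writing $\tilde f(z)=\sum_k \omega_k \phi(z;\mu_k,\sigma)$, I would decompose
\[
f_\sigma - \tilde f \;=\; \sum_k\int_{C_k}\!\bigl[\phi(z;\mu,\sigma)-\phi(z;\mu_k,\sigma)\bigr] f(\mu)\, d\mu \;+\; (\text{tail from } K^c) \;+\; (\text{normalization remainder}).
\]
The cellwise piece is controlled by the Lipschitz dependence of $\mu \mapsto \phi(z;\mu,\sigma)$ in the $\rho$-weighted $L^1$ norm at fixed $\sigma$, so it shrinks with the partition mesh. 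The remaining pieces are controlled via the elementary bound $\int(1+\sum_j|z_j|)\phi(z;\mu,\sigma)\, dz \leq C(1+\|\mu\|_2+\sigma)$ together with the choice of $K$. A triangle inequality $\rho(f,\tilde f)\leq \rho(f,f_\sigma)+\rho(f_\sigma,\tilde f)<\delta$ then completes the proof.

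The main obstacle is controlling the discretization in step two under the asymmetric tails of the Gumbel: unlike the Gaussian case, the kernel decays only singly exponentially on one side, so the constraint $\sigma<1/\tau$ is essentially tight and the weighted $L^1$ bounds must be established without appealing to Gaussian-style super-exponential control. In particular, I would need to track carefully how the linear weight $(1+\sum_j|z_j|)$ interacts with $\phi(z;\mu,\sigma)$ uniformly over $\mu\in K$, and to verify that once $\sigma$ and $K$ are fixed, the cell diameters can indeed be chosen small enough to absorb the cellwise error uniformly in $z$ after integration against the $\rho$-weight. The first-moment condition $\int \|\mu\|_2 f(\mu)\,d\mu<\infty$ is exactly what makes the tail truncation and the normalization remainder manageable, so it is used essentially.
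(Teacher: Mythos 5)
Your proposal is correct and follows essentially the same route as the paper's proof: a two-stage approximation, first bounding $\rho(f,f_\sigma)$ for the continuous Gumbel convolution via the smoothness condition \eqref{eq:f_smooth} and the finiteness of $\int \|u\|_2 e^{\tau\sigma\|u\|_2}\phi_0(u)\,du$ for small $\sigma$ (the paper's Lemma \ref{lm:lemma_bounded_integral}), then discretizing the mixing measure over a fine partition with a tail region handled by the first-moment hypothesis and the cellwise error handled by Lipschitz continuity of the Gumbel kernel in its location (the paper's Lemma \ref{lm:Lipschitz_Gumbel_location}). Your explicit treatment of the weight-normalization remainder is a minor refinement the paper glosses over, but it does not change the argument.
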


We conjecture that the smoothness and tail conditions on $f$ in the lemma can be weakened at the expense of the proof simplicity.
The lemma is proved in Appendix \ref{sec:proofs}.  The proof uses only smoothness and tail conditions on $\phi$ that are shown to hold for Gumbel densities
in Lemmas \ref{lm:lemma_bounded_integral} and \ref{lm:Lipschitz_Gumbel_location} in Appendix \ref{sec:proofs}.
Thus, Lemma \ref{lm:approx_by_mgumbles} holds for more general location-scale mixtures.  
These generalizations do not seem essential and we do not elaborate on them here for brevity.

The final intermediate result that we need for establishing posterior consistency 
is the continuity of finite Gumbel mixtures in parameters in distance $\rho$, which we present in the following lemma.

\begin{lemma}
\label{lm:Fcont_params}
Let $F^1$ and $F^2$ denote two mixtures of Gumbel densities on $\mathbb{R}^{J}$ with densities
$
f^i(\epsilon) = \sum_{k=1}^m \omega_k^i \phi\left( \epsilon; \mu_k^i, \sigma^i  \right)
$. Then, for a given $\delta>0$ and $F^1$, there exists $\tilde{\delta}>0$ such that
for any $F^2$ with parameters satisfying:
$|\sigma^1-\sigma^2|< \tilde{\delta}$, $|\omega_k^1-\omega_k^2|< \tilde{\delta}$, and 
$|\mu_k^1-\mu_k^2|< \tilde{\delta}$, $k=1,\ldots,m$, we have 
$\rho (  F^1, F^2 ) < \delta$.
\end{lemma}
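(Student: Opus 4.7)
The plan is a triangle-inequality decomposition that isolates the three parameter types — mixing weights, component locations, and the common scale — and then reduces each piece to a dominated convergence (weighted Scheff\'e) argument for a single Gumbel density. Let $\bar F$ be the mixture obtained from $F^1$ by replacing the weights with $\omega_k^2$ while keeping $(\mu_k^1,\sigma^1)$, and let $\hat F$ be the mixture obtained from $\bar F$ by further replacing the locations with $\mu_k^2$. Then
\[
\rho(F^1,F^2) \;\leq\; \rho(F^1,\bar F) + \rho(\bar F,\hat F) + \rho(\hat F,F^2),
\]
so it suffices to make each summand smaller than $\delta/3$ for $\tilde\delta$ small enough.

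For the weight piece, $|f^1-\bar f|\leq \sum_{k=1}^m |\omega_k^1-\omega_k^2|\,\phi(\,\cdot\,;\mu_k^1,\sigma^1)$, hence
\[
\rho(F^1,\bar F) \;\leq\; \tilde\delta\sum_{k=1}^m \int\Bigl(1+\sum_{j=1}^J|\epsilon_j|\Bigr)\phi(\epsilon;\mu_k^1,\sigma^1)\,d\epsilon \;=\; \tilde\delta\cdot C_1,
\]
with $C_1<\infty$ depending only on the fixed mixture $F^1$, since the first absolute moments of a Gumbel density have closed form and are finite. For the location piece I use the pointwise bound $|\bar f-\hat f|\leq \sum_k \omega_k^2\,|\phi(\,\cdot\,;\mu_k^1,\sigma^1)-\phi(\,\cdot\,;\mu_k^2,\sigma^1)|$ and treat each summand by a weighted Scheff\'e argument: $\phi(\epsilon;\mu,\sigma^1)$ is pointwise continuous in $\mu$, and the weighted integral $\mu\mapsto \int(1+\sum_j|\epsilon_j|)\phi(\epsilon;\mu,\sigma^1)\,d\epsilon$ is continuous in $\mu$ by the explicit Gumbel moment formulas. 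Dominating the positive part by the fixed integrable envelope $(1+\sum_j|\epsilon_j|)\phi(\,\cdot\,;\mu_k^1,\sigma^1)$ and applying dominated convergence then drives the $k$th summand to $0$ as $\mu_k^2\to\mu_k^1$. The scale piece $\rho(\hat F,F^2)$ is handled identically, now letting $\sigma^2\to\sigma^1$ with the locations held at $\mu_k^2$.

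The main obstacle is the weighted Scheff\'e step for the scale perturbation, since the Gumbel tail depends on $\sigma$. This is resolved by shrinking $\tilde\delta$ enough that $\sigma^2$ remains in a fixed closed interval around $\sigma^1>0$; on such an interval the family $\{(1+\sum_j|\epsilon_j|)\phi(\epsilon;\mu_k^2,\sigma^2)\}$ admits a uniform integrable envelope, exactly of the type supplied by the Gumbel smoothness and tail estimates of Lemmas~\ref{lm:lemma_bounded_integral} and~\ref{lm:Lipschitz_Gumbel_location} in Appendix~\ref{sec:proofs} (the same estimates that drive the proof of Lemma~\ref{lm:approx_by_mgumbles}). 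These estimates could, if desired, be sharpened into a quantitative Lipschitz dependence of $\rho$ on $\tilde\delta$, but the qualitative continuity claimed in the lemma follows directly from the three-step decomposition and dominated convergence outlined above.
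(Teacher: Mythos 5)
Your proof is correct, and it reaches the conclusion by a genuinely different route than the paper for the substantive step. Both arguments share the same triangle-inequality skeleton --- interpolate between $F^1$ and $F^2$ through mixtures that differ in one parameter type at a time, and bound the weight piece by $\tilde\delta$ times a finite weighted moment of fixed Gumbel components --- but they diverge in how the location and scale perturbations are controlled. The paper proves two auxiliary envelope lemmas (Lemmas~\ref{lm:Lipschitz_Gumbel_location} and~\ref{lm:Lipschitz_Gumbel_scale}) giving explicit integrable Lipschitz envelopes for $\phi$ in the location and in the inverse scale, and thereby obtains a quantitative modulus of continuity, $\rho(F^1,F^2)\leq \sum_k \omega_k^1\bigl[c_1\|(\mu_k^2-\mu_k^1)/\sigma^1\|_\infty + c_2|1/\sigma^1-1/\sigma^2|\bigr]+c_3\sum_k|\omega_k^1-\omega_k^2|$. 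You instead use a weighted Scheff\'e / dominated-convergence argument: pointwise continuity of $\phi(\epsilon;\mu,\sigma)$ in $(\mu,\sigma)$ plus continuity of the Gumbel first absolute moments gives $\int(1+\sum_j|\epsilon_j|)\,|\phi(\cdot;\mu_n,\sigma_n)-\phi(\cdot;\mu,\sigma)|\to 0$, which is softer but entirely sufficient for the qualitative claim and avoids the envelope lemmas altogether (your passing citation of Lemmas~\ref{lm:lemma_bounded_integral} and~\ref{lm:Lipschitz_Gumbel_location} as the source of the needed envelope is the paper's route, not really needed for Scheff\'e, where only convergence of the weighted integrals is required). Two small points to tighten: in your third step the locations $\mu_k^2$ move with $\tilde\delta$, so you should either note that the scale bound is uniform over $\mu_k^2$ in a bounded neighborhood of $\mu_k^1$ (a change of variables shows the weighted $L^1$ difference is at most $(1+\|\mu_k^2\|_1)$ times a translation-free quantity) or, more cleanly, run a single joint Scheff\'e step in $(\mu_k,\sigma)$; and the trade-off to be aware of is that your argument yields no rate, whereas the paper's Lipschitz constants are reused nowhere else, so nothing downstream is lost.
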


\subsection{Posterior Consistency}
\label{sec:post_cons}

Let us denote the short panel dataset by $D^n = \{d_{it},x_{it}, t=1,\ldots,T, \, i=1,\ldots,n\}$;
the observations are assumed to be independently identically distributed over $i$, with a small $T$ and a large $n$.
The utility function is parameterized by a vector $\theta \in \mathbb{R}^{d_\theta}$, $u(x,d; \theta)$.  
Let $P(\theta,F)=\{p(d|x;\theta,F), \, x \in X,\, d=0,\dots,J\}$ denote
the collection of the CCPs for the distribution of shocks $F$ and parameters $\theta$.

\begin{theorem}
\label{th:post_cons}
Let $(\theta_0,F_0)$ be the data generating values of parameters.
Suppose 
(i) The observed state space is finite, $X=\{1,\ldots,K\}$;
(ii) $G^d$, $d=0,\ldots,J$ and the distribution of the initial observed state $x_{i1}$ are known and fixed;
(iii) $\forall x \in X$, $\exists t \in \{1,\ldots,T\}$, such that $Pr(x_{it}=x)>0$;
(iv) $u(x,d; \theta)$ is continuous in $\theta$;
(v) $F_0$ satisfies the conditions of Lemma \ref{lm:approx_by_mgumbles};
(vi) For any $\delta>0$, $\Pi(B_\delta(\theta_0))>0$, where $B_\delta(\theta_0)$ is a ball with radius $\delta$ and center $\theta_0$;
(vii) For any $\delta>0$, positive integer $m$, $\mu_k \in \mathbb{R}^J$, $\sigma_k>0$, $w_k \geq 0$, $k=1,\ldots,m$, $\sum_{k=1}^m w_k=1$,
$\Pi(B_\delta(\mu_1,\sigma_1, \ldots, \mu_m,\sigma_m, w_1, \ldots,w_{k-1})|m)>0$.
Then, for any $\delta>0$,
\[
\Pi \big( \theta,F: ||P(\theta_0,F_0)-P(\theta,F) ||> \delta \big  |D^n \big) \rightarrow 0 \mbox{ almost surely.}
\]

\end{theorem}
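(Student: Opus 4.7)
The plan is to apply Schwartz's posterior consistency theorem to the i.i.d.\ sequence (across $i$) of panel trajectories $(d_{i,1:T}, x_{i,1:T})$, with parameter $(\theta, F)$ and pseudometric $d((\theta, F), (\theta_0, F_0)) = \|P(\theta, F) - P(\theta_0, F_0)\|_\infty$. Under conditions (i)--(iii), the log-likelihood is the partial likelihood \eqref{eq:likelihood}, and both the data-generating law and the model depend on $(\theta, F)$ only through the finite-dimensional vector $P(\theta, F) \in \mathbb{R}^{(J+1) K}$. Schwartz's theorem then delivers the conclusion once (a) every Kullback--Leibler neighborhood of $(\theta_0, F_0)$ has positive prior mass, and (b) for each $\delta > 0$ there exist uniformly exponentially consistent tests of $(\theta_0, F_0)$ against $\{(\theta, F) : \|P(\theta, F) - P(\theta_0, F_0)\|_\infty > \delta\}$.

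Condition (a) would be built by chaining Lemmas \ref{lm:approx_by_mgumbles}, \ref{lm:LipschitzCont_Q_p}, and \ref{lm:Fcont_params}. Fix $\epsilon > 0$. By Lemma \ref{lm:approx_by_mgumbles} and hypothesis (v), there exists a finite Gumbel mixture $F^\star$ with parameters $(m^\star, \omega^\star, \mu^\star, \sigma^\star)$ such that $\rho(F_0, F^\star) < \epsilon$. Lemma \ref{lm:LipschitzCont_Q_p} applies at $F_0$ by hypothesis (v) and uniformly at nearby Gumbel mixtures (whose contrast densities are bounded and whose moments are finite), so $\sup_{d, x} |p(d|x; \theta_0, F_0) - p(d|x; \theta_0, F^\star)| \lesssim \epsilon$. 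Combining Lemmas \ref{lm:Fcont_params} and \ref{lm:LipschitzCont_Q_p}, mixtures whose Euclidean parameters lie near $(\omega^\star, \mu^\star, \sigma^\star)$ have CCPs within $O(\epsilon)$ of $p(\cdot|\cdot; \theta_0, F^\star)$. Continuity of $u(x, d; \theta)$ in $\theta$ (condition (iv)), propagated through the Bellman map and the CCP formula in Lemma \ref{lemma:ccp_emax}, yields continuity of $P(\theta, F)$ in $\theta$, so a small ball around $\theta_0$ adds at most $O(\epsilon)$ to the sup-norm discrepancy. This produces an open product neighborhood $U$ of $(\theta_0, F^\star)$ in $(\theta, m, \omega, \mu, \sigma)$-space on which $\sup_{d, x} |p(d|x; \theta_0, F_0) - p(d|x; \theta, F)| = O(\epsilon)$; by conditions (vi)--(vii) together with $\Pi(m = m^\star) > 0$ from \eqref{eq:m_prior}, $\Pi(U) > 0$. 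Because the Gumbel-mixture CCPs from Lemma \ref{lemma:ccp_emax} are strictly positive and jointly continuous on $U$, they are uniformly bounded away from zero there, so sup-norm closeness of CCPs translates into KL closeness, completing the KL support argument.

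For condition (b), the finiteness of $X$ and $D$ reduces testing to a finite-dimensional parameter. Let $\hat{p}_n(d|x)$ denote the empirical choice frequency pooled over $i$ and those $t$ with $Pr(x_{it} = x) > 0$ (which exist by (iii)). Hoeffding's inequality gives exponential concentration of $\hat{p}_n(d|x)$ around $p(d|x; \theta_0, F_0)$, so the test that rejects when $\max_{d, x} |\hat{p}_n(d|x) - p(d|x; \theta_0, F_0)| > \delta/2$ has exponentially decaying Type~I error and Type~II error bounded exponentially, uniformly on the alternative $\|P(\theta, F) - P(\theta_0, F_0)\|_\infty > \delta$. Combined with (a), Schwartz's theorem delivers the posterior consistency statement.

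The main obstacle is part (a): coordinating the three continuity lemmas so that the resulting CCP-sup-norm neighborhood is open in the topology in which the prior assigns positive mass, while simultaneously ensuring the CCPs stay bounded away from zero there, so that sup-norm and KL neighborhoods are equivalent. Conditions (iii)--(vii) appear tailored precisely for this coordination, and the testing step and appeal to Schwartz are then standard.
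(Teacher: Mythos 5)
Your proposal is correct and follows essentially the same route as the paper: Schwartz's theorem combined with a Kullback--Leibler prior-support argument obtained by chaining Lemma \ref{lm:approx_by_mgumbles} (finite-mixture approximation of $F_0$), Lemma \ref{lm:LipschitzCont_Q_p} (Lipschitz continuity of CCPs in $F$), Lemma \ref{lm:Fcont_params} (continuity of mixtures in their parameters), continuity in $\theta$, and conditions (vi)--(vii). The only cosmetic difference is that you construct uniformly consistent tests explicitly via Hoeffding, whereas the paper instead invokes the weak-topology form of Schwartz's theorem and notes that, with $X$ finite and $G^d$ fixed, weak and Euclidean (and KL) neighborhoods of the CCP vector coincide; your explicit remark that the mixture CCPs are bounded away from zero is the same point the paper subsumes in that equivalence.
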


The theorem shows that the posterior concentrates on the set of parameters and distributions of shocks $(\theta,F)$ such that their implied 
CCPs $P(\theta,F)$ are arbitrarily close to the data generating CCPs $P(\theta_0,F_0)$.
To prove this result we use \cite{Schwartz:65} posterior consistency theorem: if the prior puts positive mass on any Kullback-Leibler neighborhood 
of the data generating distribution then the posterior puts probability converging to 1 on any weak neighborhood of the data generating distribution.
Since $X$ is finite, the convergence in weak topology and Kullback-Leibler divergence for distributions on $\{d_{it},x_{it}, t=1,\ldots,T\}$ 
are equivalent to convergence for vectors
$\{p(d|x), \, x \in X,\, d=0,\dots,J\}$ in a euclidean metric when $G^d$ and the distribution of the initial $x_{i1}$ are fixed
and satisfy our theorem condition (iii).
Thus, to obtain the conclusion of the theorem we only need to establish that 
the prior puts positive probability on any euclidean neighborhood of $P(\theta_0,F_0)$.
First, note that when $u(x,d; \theta)$ is continuous in $\theta$, $P(\theta,F)$ is also continuous in $\theta$ in our settings, see, for example, \cite{Norets_ddcm_diff_cont:09}; and, thus, Lipschitz continuity of $P(\theta,F)$ in $F$ from Lemma \ref{lm:LipschitzCont_Q_p} delivers continuity of $P(\theta,F)$ in $(\theta,F)$.
The finite mixture approximation result in Lemma \ref{lm:approx_by_mgumbles}, 
the continuity of $P(\theta,F)$ in $(\theta,F)$, the continuity of 
finite mixtures in parameters in Lemma \ref{lm:Fcont_params}, 
and the theorem conditions (vi) and (vii) on the priors, imply a positive prior probability for any 
neighborhood of $P(\theta_0,F_0)$, and thus, the theorem conclusion.
Possible extensions of the theorem (and the lemmas above) to continuous $X$ and unknown $G^d$ are discussed in Section \ref{sec:extensions}.


Theorem \ref{th:post_cons} characterizes the support of the posterior in the limit but not its shape, which can also be of interest.
Note that the data depend on $(\theta,F)$ only through CCPs $P(\theta,F)$ and the posterior for CCPs concentrates at $P(\theta_0,F_0)$. 
Therefore, the posterior for $(\theta,F)$ converges to
the conditional prior $\Pi(\theta,F|P)$ at $P=P(\theta_0,F_0)$ under continuity conditions on $\Pi(\theta,F|P)$, see, for example, 
\cite{PlagborgMoller2019}.
As the distribution of shocks is an infinite dimensional object and the solution to the dynamic program does not have a simple explicit
form,
it appears difficult to characterize the conditional prior $\Pi(\theta,F|P)$, which is implied by the map $P(\theta,F)$ and the prior on $(\theta,F)$. 
Nevertheless, we can deduce from our approximation and continuity results that under the conditions of Theorem \ref{th:post_cons},
for $\delta > 0$ there exists $\tilde{\delta}>0$
such that 
$\theta \in B_{\tilde{\delta}}(\theta_0)$ and $F \in B_{\tilde{\delta}}(F_0)$ imply 
$P(\theta,F) \in B_{\delta}(P(\theta_0,F_0))$ and
\begin{align*}
&\Pi\bigg(\theta \in B_{\tilde{\delta}}(\theta_0), F \in B_{\tilde{\delta}}(F_0) \bigg| P \in B_{\delta}(P(\theta_0,F_0))\bigg) = 
\frac{\Pi(\theta \in B_{\tilde{\delta}}(\theta_0), F \in B_{\tilde{\delta}}(F_0))}{\Pi\big (P \in B_{\delta}(P(\theta_0,F_0))\big)}
\\
&\geq
\Pi\big(\theta \in B_{\tilde{\delta}}(\theta_0), F \in B_{\tilde{\delta}}(F_0)\big)>0,
\end{align*}
which suggests that the conditional prior would not rule out the data generating parameter values.

\subsection{Exact Matching of CCPs}
\label{sec:exact_match_ccps}

In this subsection, we show that for a finite observed state space, our model formulation based on finite mixtures can exactly match the CCPs  
from a model with an arbitrary distribution of shocks.

\begin{lemma}
\label{lm:exact_match_ccps}

Suppose 
(i) The observed state space is finite, $X=\{1,\ldots,K\}$;
(ii) $(\theta_0,F_0)$ are the data generating values of parameters; 
(iii) $F_0$ has finite first moments and a density that is positive on $\mathbb{R}^J$.
Then there exists a finite mixture of Gumbels $F$ such that 
$P(\theta_0,F_0)=P(\theta_0,F)$. An upper bound on the number of mixture components in $F$ depends only on $K$ and $J$.

\end{lemma}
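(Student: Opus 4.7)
The plan is to produce a finite Gumbel mixture $F^\ast$ that shares the Emax function $Q_0$ of $F_0$, so that Bellman consistency is automatic under $(\theta_0, F^\ast)$. Let $Q_0$ be the Emax and $v_0(x,d) = u(x,d;\theta_0) + \beta G_x^d Q_0$ the systematic utilities under $(\theta_0, F_0)$. I first reduce the task to finding a finite Gumbel mixture $F^\ast$ such that: (a) $\Pr_{F^\ast}(v_0(x,d)+\epsilon_d = \max_j v_0(x,j)+\epsilon_j) = p_0(d|x)$ for all $d,x$; and (b) $E_{F^\ast}[\max_d v_0(x,d)+\epsilon_d] = Q_0(x)$ for all $x$. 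Condition (b) says $Q_0$ is a fixed point of the Bellman operator associated with $(\theta_0, F^\ast)$; by the contraction argument the fixed point is unique, so $Q_{F^\ast} = Q_0$, whence $v_{F^\ast} = v_0$ and the actual CCPs under $(\theta_0, F^\ast)$ are precisely those computed in (a), matching $p_0$.

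Define the linear map $\Psi: F \mapsto \bigl((\Pr_F(v_0(x,d)+\epsilon_d\ \text{is the max}))_{d,x},\ (E_F[\max_d v_0(x,d)+\epsilon_d])_x\bigr) \in (\Delta^J)^K \times \mathbb{R}^K$, whose target is finite dimensional because $X$ is finite. For any finite Gumbel mixture $F^\ast = \sum_k \omega_k \phi(\cdot;\mu_k,\sigma_k)$, $\Psi(F^\ast)$ is the corresponding convex combination of the points $\Psi(\phi(\cdot;\mu_k,\sigma_k))$. Let $G = \{\Psi(\phi(\cdot;\mu,\sigma)): \mu \in \mathbb{R}^J,\ \sigma>0\}$ denote the image of single-Gumbel components. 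Conditions (a)-(b) are then equivalent to $\Psi(F_0) \in \mathrm{conv}(G)$, and Caratheodory bounds the required number of components by $K(J+1)+1$ once this membership is established.

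To establish $\Psi(F_0) \in \mathrm{conv}(G)$, I would combine two observations. First, $\overline{\mathrm{conv}(G)} = \overline{\Psi(\mathcal{F})}$, where $\mathcal{F}$ is the set of all probability distributions on $\mathbb{R}^J$: the inclusion $\mathrm{conv}(G) \subseteq \Psi(\mathcal{F})$ is immediate; for the converse, every $F \in \mathcal{F}$ satisfies $\Psi(F) = \int \Psi(\delta_\epsilon) dF(\epsilon)$, which by Caratheodory in the finite-dimensional target equals a finite convex combination of point-mass values $\Psi(\delta_\epsilon)$, and each non-tie $\Psi(\delta_\epsilon)$ lies in $\overline{G}$ because $\phi(\cdot;\mu,\sigma) \Rightarrow \delta_\mu$ weakly as $\sigma \to 0$. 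The convex-analytic identity $\mathrm{rel\,int}(C) = \mathrm{rel\,int}(\overline{C})$ for convex $C$ then gives $\mathrm{rel\,int}(\mathrm{conv}(G)) = \mathrm{rel\,int}(\Psi(\mathcal{F}))$. Second, $\Psi(F_0) \in \mathrm{rel\,int}(\Psi(\mathcal{F}))$: for any $a,b \in \mathbb{R}^J$ and narrow unit-mass bump densities supported near $a$ and $b$, their difference $h$ has zero total mass and amplitude bounded by the positive density $f_0$, so $F_0 \pm \eta h \in \mathcal{F}$ for small $|\eta|$; by linearity $\Psi(F_0) \pm \eta \Psi(h) \in \Psi(\mathcal{F})$, so the direction $\Psi(h)$ is reachable through $\Psi(F_0)$ in an open segment. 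Varying the bump locations produces a family of directions whose span equals the affine hull of $\{\Psi(\delta_\epsilon)\}$, which coincides with the affine hull of $\Psi(\mathcal{F})$. Combining, $\Psi(F_0) \in \mathrm{rel\,int}(\mathrm{conv}(G)) \subseteq \mathrm{conv}(G)$, producing the required mixture.

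The main obstacle will be the relative-interior step, specifically verifying that the bump-perturbation directions $\Psi(h)$ actually span the full affine hull of $\Psi(\mathcal{F})$; making this precise requires a spanning argument that uses the piecewise-affine structure of $(\max, \arg\max)$ in $\epsilon$. A secondary technicality lies in the Gumbel-to-point-mass passage: $\Psi$ fails to be continuous at locations where $v_0(x,\cdot) + \epsilon$ has ties, but ties form a Lebesgue-null set, so the above arguments can be carried out while avoiding exceptional measure-zero sets.
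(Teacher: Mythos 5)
Your proposal is correct in outline but takes a genuinely different route from the paper. Both arguments rest on the same reduction: because the shocks enter additively and $X$ is finite, the Bellman operator and the CCPs depend on $F$ only through finitely many linear functionals of $F$ evaluated at the fixed systematic utilities $v_0$, so matching those functionals forces $Q_{F^\ast}=Q_0$ by uniqueness of the Bellman fixed point and hence matches the actual CCPs. From there the paths diverge. The paper works with the common refinement $\{A_1,\dots,A_L\}$ of the $K$ argmax partitions and matches the cell masses $q_l$ and cell first moments $r_l$ explicitly: it places $J+1$ Gumbel components around the conditional mean $r_l/q_l\in\mathrm{int}(A_l)$ of each cell and solves a linear system for the weights, showing the system is nonsingular with a strictly positive solution for small $\sigma$ by passing to the $\sigma\to 0$ limit. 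Your argument is non-constructive: you show $\Psi(F_0)$ lies in the relative interior of the convex set $\Psi(\mathcal F)$, identify $\overline{\mathrm{conv}(G)}$ with $\overline{\Psi(\mathcal F)}$ via small-$\sigma$ Gumbel approximations of point masses, and invoke $\mathrm{rel\,int}(C)=\mathrm{rel\,int}(\overline C)$ plus Caratheodory. What the paper's route buys is an explicit mixture and a transparent role for the positivity of $f_0$; what yours buys is a cleaner existence proof and a dimension-counting bound $K(J+1)+1$ on the number of components that does not grow with the size $L$ of the refinement.

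Two points need repair. First, the step you flag as the main obstacle --- that the bump directions span the direction space of $\mathrm{aff}(\Psi(\mathcal F))$ --- is actually the easy part: $\mathrm{aff}(\Psi(\mathcal F))=\mathrm{aff}(\{\Psi(\delta_\epsilon)\})$, whose direction space is by definition spanned by the differences $\Psi(\delta_a)-\Psi(\delta_b)$ over non-tie points $a,b$, and since spanning is an open condition a finite spanning subfamily survives the passage to narrow bumps; no appeal to the piecewise-affine structure of the argmax is needed. Second, and more substantively, the perturbation $F_0\pm\eta h$ with $h$ a difference of \emph{fixed} bump densities requires $f_0$ to be bounded away from zero on the bumps' supports, which does not follow from the lemma's hypothesis that $f_0$ is merely positive (a positive measurable density can have essential infimum zero on a compact set). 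The fix is to take as ``bumps'' the conditional densities $f_0 1_{B}/F_0(B)$ of $F_0$ on small balls $B$: then $f_0-\eta\, f_0 1_B/F_0(B)\ge 0$ for all $\eta\le F_0(B)>0$, and these conditional laws still converge weakly, with their first moments, to point masses as $B$ shrinks. This is the same use of positivity the paper makes when it concludes $r_l/q_l\in\mathrm{int}(A_l)$ from the supporting hyperplane theorem. With those amendments your argument goes through.
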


The result in Lemma \ref{lm:exact_match_ccps} holds not only for mixtures of Gumbels but more generally for location-scale mixtures of 
distributions with finite first moments, which is evident from the proof presented in Appendix \ref{sec:proofs}. 
Lemma \ref{lm:exact_match_ccps} can be used to relax the smoothness assumptions on $F_0$ in 
the posterior consistency results of Theorem \ref{th:post_cons}. Specifically, 
conditions (v) in Theorem \ref{th:post_cons} can be replaced by conditions (iii) in Lemma \ref{lm:exact_match_ccps}; in the proof,
the approximation results in Lemma \ref{lm:approx_by_mgumbles} can be replaced 
by the exact CCPs matching results in Lemma \ref{lm:exact_match_ccps}.
Nevertheless, the approximation results in Lemma \ref{lm:approx_by_mgumbles} have independent value. 
First, they hold for infinite and continuous $X$. 
Furthermore, they imply that the prior on the distribution of shocks is flexible
in a sense that it puts positive probability on any metric $\rho$ neighborhood in a large nonparametric class of distributions,
which suggests that the conditional prior for the distribution of shocks and parameters given CCPs, $\Pi(\theta, F | P)$, is also flexible as discussed at the end of Section \ref{sec:post_cons}. 
Finally, while asymptotically the number of mixture components is bounded for the exact matching in Lemma \ref{lm:exact_match_ccps} and has to increase to infinity for the approximation results in Lemma \ref{lm:approx_by_mgumbles}, in practice,  a small number of mixture components delivers sufficiently good approximations and the exact matching requires a very large number of mixture components ($m=182$ for Rust's bus engine replacement model).

\subsection{Inference for Identified Sets}
\label{sec:inference_IS}

\citet{NoretsTang2013} and generalizations of their results to the multinomial case in \citet{Norets_ddc_mult:11} show that
the utility parameters $\theta$ and the distribution of shocks $F$, and, thus, functions of $(\theta,F)$ such as results of counterfactual experiments, are set identified in the present settings.
\cite{MoonSchorfheide:12} show that in contrast to the point identified regular settings, the Bayesian credible sets for set identified parameters do not have frequentist coverage properties and are too small from the classical perspective.  \citet{NoretsTang2013} point out in their Section 3.2 that Bayesian and classical inference results can be reconciled if inference is performed on the identified sets.
\cite{KlineTamer2016} further study this approach and \cite{Kitagawa:11} obtain related results under multiple priors for set identified parameters.
In this subsection, we describe how credible and confidence sets for identified sets can be defined and computed from the output of our MCMC algorithm.

Suppose the data generating values of parameters are $(\theta_0,F_0)$ and we are interested in $\eta_0=g(\theta_0,F_0)$.
The data generating values of CCPs, $P_0=P(\theta_0,F_0)$, can be consistently estimated from the observed data, and, thus, are considered known in the identification analysis.
The identified set for $\eta_0$ is defined by
\[
I_\eta(P_0) = \{\eta=g(\theta,F), \; \forall (\theta,F) \;s.t.\; P(\theta,F)=P_0\}.
\]
Following \citet{NoretsTang2013} and \cite{KlineTamer2016}, we can define a posterior distribution on the space of identified sets 
$I_\eta(P)$ using the marginal posterior distribution on the CCPs $P$.  Then, a $1-\alpha$-credible set for $I_\eta(P)$ can be defined from  a $1-\alpha$-credible set for CCPs, $B_{1-\alpha}^P$, by
\begin{equation}
\label{eq:B_I_eta}
B_{1-\alpha}^{I_\eta} = \bigcup_{P \in B_{1-\alpha}^P} I_\eta(P).
\end{equation}
When $B_{1-\alpha}^P$ is a $1-\alpha$ highest posterior density set and the Bernstein - von Mises theorem holds for the point identified CCPs $P$,  
$B_{1-\alpha}^P$ asymptotically has a $1-\alpha$ frequentist coverage probability for $P(\theta_0,F_0)$ and, thus, 
$B_{1-\alpha}^{I_\eta}$ has at least a $1-\alpha$ frequentist coverage probability for the identified set $I_\eta(P_0)$ and $\eta_0$.

The sets in \eqref{eq:B_I_eta} might be conservative; nevertheless, it would be prudent to report them in applications 
as in the limit they do not depend on the shape of the prior and possess both frequentist and Bayesian properties.

For a scalar $\eta_0$, an approximation to $B_{1-\alpha}^{I_\eta}$ can be computed from a sample of MCMC posterior draws
$\{\theta^{(l)},F^{(l)},P^{(l)}=P(\theta^{(l)},F^{(l)}), \; l=1,\ldots,L\}$ as follows.  
First, we obtain an approximation to $B_{1-\alpha}^P$ 
\[
\hat{B}_{1-\alpha}^P = \left\{P:\; (P-\bar{P})^\prime \hat{\Sigma}^{-1} (P-\bar{P}) \leq \chi^2_{1-\alpha}(JK)\right\},
\]
where $\bar{P}=\sum_{l=1}^L P^{(l)}/L$, $\hat{\Sigma} = \sum_{l=1}^L (P-\bar{P})(P-\bar{P})^\prime/L$, and $\chi^2_{1-\alpha}(JK)$ is the $1-\alpha$ quantile of the $\chi^2$ distribution with $JK$ degrees of freedom.
Then, $B_{1-\alpha}^{I_\eta}$ is approximated by
\[
\hat{B}_{1-\alpha}^{I_\eta}=\left[\min_{l:\: P^{(l)} \in \hat{B}_{1-\alpha}^P} \eta\left(\theta^{(l)},F^{(l)}\right), \max_{l:\: P^{(l)} \in \hat{B}_{1-\alpha}^P} \eta\left(\theta^{(l)},F^{(l)}\right)\right].
\]
We report these sets along with the standard HPD sets in the application in Section \ref{section:gilleskie}.


\section{Application I: Rust's Binary Choice Model}
\label{section:rust}

\citet{NoretsTang2013} propose a method for computing identified sets for parameters in dynamic binary choice models and apply their method to the \citet{Rust:87}'s model. In this section, we show that our semiparametric model can also recover the identified set for that model.

\subsection{\citet{Rust:87}'s Optimal Bus Engine Replacement Problem}

In each time period $t$, the agent decides whether to replace the bus engine ($d_t=1$) or not ($d_t=0$)  given the current mileage $x_t$ of the bus. Replacing an engine costs $\theta_0$. If $d_t=0$, then the agent conducts a regular maintenance which costs $-\theta_1x$. The utility function of the agent is
$u(x,0) = \theta_0 + \theta_1 x$ and $u(x,1) =\epsilon$.
\citet{Rust:87} assumes that $\epsilon$ follows the logistic distribution. 
The mileage  $x_t\in \{1, \ldots ,K=90 \}$ evolves over time following the transition probabilities: $Pr(x_{t+1}|x_t,d_t=0)=\pi_0$ for $x_{t+1}-x_t = 0$; 
$Pr(x_{t+1}|x_t,d_t=0)=\pi_1$ for $x_{t+1}-x_t = 1$; $Pr(x_{t+1}|x_t,d_t=0)=1-\pi_0-\pi_1$ for $x_{t+1}-x_t = 2$; and 
$Pr(x_{t+1}|x_t,d_t=0)=0$ otherwise.
When the engine is replaced ($d_t=1$), the mileage restarts at $x_t=1$.


As in \citet{NoretsTang2013}, we use the following data generating process: 
logistic distribution for $\epsilon$, 
$\theta_0 = 5.0727, \theta_1= -0.002293,\pi_0 = 0.3919, \pi_1 = 0.5953$ and the discount factor $\beta = 0.999$. 

At the data generating parameters, we solve the dynamic program to obtain the vector of CCPs,  $(p(d|1),\ldots,p(d|K))$ for $d=0,1$. 
Rather than using simulated observations in this exercise, we use the true CCPs as the sample frequencies and report the results for different sample sizes.
Specifically, for a given $N \in Z^+$, we set $n_{dx}$, the number of times $d$ was chosen at each 
state $x$ as follows, $n_{0x} = p(0|x) \times N$ and $n_{1x}=p(1|x) \times N$ for $x=1,\ldots,K$.  
In this way, we can check if our MCMC algorithm for the semiparametric model specification can recover 
the identified set computed by the algorithm from \cite{NoretsTang2013} for a given fixed vector of CCPs.
Estimation results for simulated data are presented for the multinational choice application in Section \ref{section:gilleskie}.

Since we do not impose a location and scale normalization on the mixture specification for the distribution of 
$\epsilon$ and the model has only two utility parameters that are defined by the location and the scale, 
the values of $(\theta_0,\theta_1)$ corresponding to the location and scale of the logistic distribution 
are computed from $(\sigma,\omega_k,\mu_k,\sigma_{k}, \, k=1,\ldots,m)$ by formula \eqref{eq:renorm_params}.
We use the following flexible prior distributions that are tuned to spread the prior probability over a large region for $(\theta_0,\theta_1)$ that includes the identified set.
\begin{align*}
& \pi(m=k) \propto e^{-\underbar{A}_m k(\log k)^\tau}, \, \underbar{A}_m=0.05, \, \tau=5, \\
& (\omega_1,\ldots,\omega_m) \sim \mbox{Dirichlet}(\underbar{a}/m,\ldots,\underbar{a}/m), \, \underbar{a}=10, \\
&\mu_{k} \sim  0.5N(2.5,1^2)+0.5N(-3,7^2),\\
& \log \sigma_{k} \sim  0.4N(0,1^2)+0.6N(-6,1^2), \; \log \sigma \sim N(0,0.01^2).
\end{align*}

Below, we report the draws of $(\theta_0,\theta_1)$ obtained from the draws of $(\sigma,\omega_k,\mu_k,\sigma_{k}, \, k=1,\ldots,m)$
for the prior and the posterior for $N \in \{ 3, 10 \}$ and compare them to the true identified set.
Panel (a) of Figure \ref{fig_Rust_id}  shows the prior draws of the utility parameters. 
First note that the prior draws are not  uniformly distributed on the utility parameter space. 
In practice, it is difficult to come up with a prior for the distribution parameters that implies a uniform prior in the $\theta$ space. Second, many prior draws are outside of the identified set. 
The other two panels in Figure \ref{fig_Rust_id} show posterior draws of utility parameters with different number of observations $N \in \{3, 10 \}$.
The posterior concentrates more on the identified set as $N$ increases.
\begin{figure}[ht] 
  \begin{subfigure}[b]{0.3\linewidth}
    \centering
    \includegraphics[width=\linewidth]{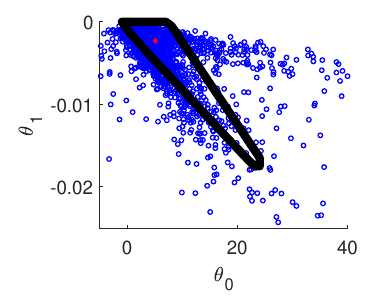} 
    \caption{prior} 
  \end{subfigure}
  \begin{subfigure}[b]{0.3\linewidth}
    \centering
    \includegraphics[width=\linewidth]{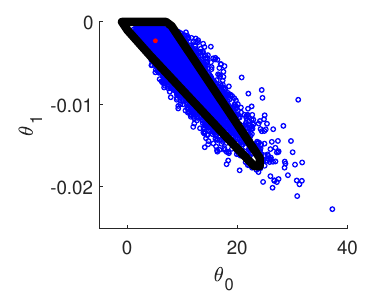} 
    \caption{$N=3$} 
  \end{subfigure}
  \begin{subfigure}[b]{0.3\linewidth}
    \centering
    \includegraphics[width=\linewidth]{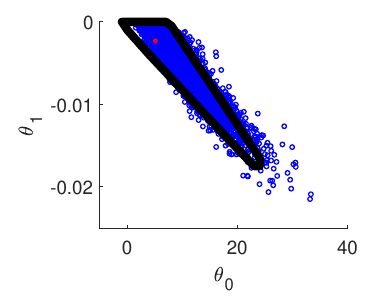} 
    \caption{$N=10$} 
  \end{subfigure}
\caption{\small{
Utility parameter draws. 
Prior (left). Posterior draws for $N=3$ (center) and $N=10$ (right). 
500,000 MCMC iterations.  The first 100,000 draws discarded as burn-in.
Every 10th draw is shown. 
The true identified set is shown by solid black lines.
The red dot corresponds to the point-identified utility parameter values under the logistic distribution for $\epsilon$.
}}
\label{fig_Rust_id}
\end{figure}
To assess the convergence of the MCMC algorithm, consider Figure \ref{fig_Rust_id3D} showing the draws of utility parameters for 500,000 MCMC iterations. As can be seen from the figure, the chain sweeps through the identified set repeatedly during the MCMC run.
Thus, we conclude that our approach can be used to recover the identified sets of utility parameters.
Figure \ref{fig_Rust_trace} in Appendix \ref{sec:rust_implement_details} shows additional evidence of MCMC convergence including trace plots of $m$, $\sum_{k=1}\omega_k\mu_k$, and other parameters.

\begin{figure}[ht] 
  \begin{subfigure}[b]{0.4\linewidth}
    \centering
    \includegraphics[width=\linewidth]{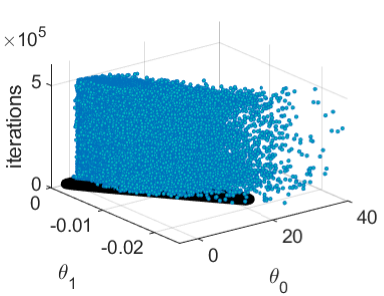} 
    \caption{$N=3$} 
  \end{subfigure}
  \begin{subfigure}[b]{0.4\linewidth}
    \centering
    \includegraphics[width=\linewidth]{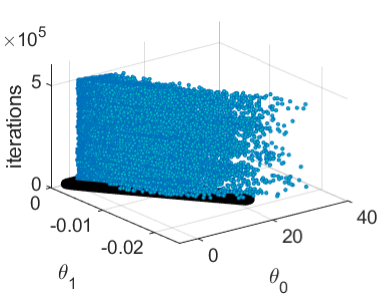} 
    \caption{$N=10$} 
  \end{subfigure}
\caption{\small{
Utility parameter draws. The z-axis represents the MCMC iterations. 
The true identified set is shown in black.
}}
\label{fig_Rust_id3D}
\end{figure}

\FloatBarrier



\section{Application II: Gilleskie's Multinomial Model of Medical Care Use and Work Absence}
\label{section:gilleskie}

In this section, we illustrate our methodology using a multinomial choice model of medical care use and work absence from \citet{Gilleskie_Eca:98}. 
An extension of \citet{NoretsTang2013} to the multinomial case seems computationally infeasible and, hence, in this application we provide comparisons of our method only with a dynamic logit specification.

\subsection{The model}\label{model}
In the model, individuals occupy one of $2$ distinct health states: well, $k=0$, or sick, $k=1$. 
An individual receives the utility associated with being well until contracting an illness of a specific type (we make the simplifying assumption that there is only one illness type, although \citet{Gilleskie_Eca:98} works with two illness types). An illness episode can last up to $T$ periods enumerated by $t=1,\ldots,T$; $t=0$ corresponds to the state of being well, $k=0$.

\subsubsection{Alternatives}
An individual who became sick makes decisions about doctor visits and and work absences. In each period $t$ of an illness, alternatives available to an employed individual who is sick are:
$d_t=0$ - work  and don't visit a doctor, 
$d_t=1$ - work and visit a doctor,  
$d_t=2$ - don't work and don't visit a doctor, and 
$d_t=3$ - don't work and visit a doctor. 
The utility of the agent depends on 
the elapsed length of the current illness $t$, the accumulated number of physician visits $v_t$, and the accumulated number of work absences $a_t$. 
The state variables observed by the econometrician and the agent at $t$ are $x_t=\left(t, v_t,a_t \right)$.
  Note that $k=1$ if and only if $t>0$, so $k$ does not appear in the definition of $x_t$.

\subsubsection{State variable transitions}

The state variables evolve in the following way. An individual always starts with the state of being well, $x_0=(0,0,0)$. The individual contracts an illness and moves to the state $x=(1,0,0)$ with probability 
$\pi^S(H)$,
where $H$ is a vector of exogenous indicators for health status  and  being between 45-64 years of age.  

The accumulated number of physician visits $v_t$ and the accumulated number of illness-related absence from work $a_t$ both take values in $ \{0,1,\ldots, T-1\} $. They start from $v_1=a_1=0$ and evolve in the following way:
$v_{t+1} = v_t +1(d_t=1 \text{ or } 3)$ and 
$a_{t+1} = a_t +1(d_t=2 \text{ or } 3)$.

In each illness period $t \in \{1,\ldots,T\}$, the individual recovers and returns to the state of being well with probability $\pi^W(x_t,d_t)$.  Gilleskie parameterizes and estimates 
$\pi^W(x_t,d_t)$ and $\pi^S(H)$ prior to estimating the preference parameters. We use those estimates in our application.

\subsubsection{Utility}
\label{sec:utility}
The per-period consumption is defined as $C(x_t,d_t) = Y -\big[    PC 1(d_t=1 \text{ or } 3)  +  Y\big(1-L\Phi(x_t,d_t) \big)1(d_t=2 \text{ or } 3)   \big]1(t>0)$,
where $Y$ is the per-period labor income, $PC$ is the cost of a doctor visit, and 
$\Phi(x_t,d_t)=\exp(\phi_1+\phi_2a'(x_t,d_t))/[1+\exp(\phi_1+\phi_2a'(x_t,d_t))]$ 
is the portion of income that the sick leave coverage replaces, where 
$a'(x_t,d_t)$ is the value of $a_{t+1}$ given $(x_t,d_t)$. 
$L\in (0,1)$ is the sick leave coverage rate.

The per-period utilities  can be expressed in the following form. 
\begin{alignat}{4}
u(d_t=1,x_t,\epsilon_t) 
&=  \theta_1  +&&\theta_4 1(t=0)  &&+\theta_6 C(x_t,1)1(t>0)   &&+ \epsilon_{t1}, \nonumber \\
u(d_t=2,x_t,\epsilon_t) 
&=  \theta_2   +&&\theta_4 1(t=0)  &&+\theta_6 C(x_t,2)1(t>0)   &&+\epsilon_{t2},\nonumber \\
u(d_t=3,x_t,\epsilon_t) 
&=  \theta_3   +&&\theta_4 1(t=0)    &&+\theta_6 C(x_t,3)1(t>0)   &&+\epsilon_{t3}, \nonumber \\
u(d_t=0,x_t,\epsilon_t) 
&=                &&\theta_5 1(t=0)                       &&+ \theta_6 C(x_t,0)1(t>0)   &&+\epsilon_{t0},          \nonumber          
\end{alignat}
where
$\theta_4=-\infty$ so that when $t=0$ the decision $d_t=0$ is always chosen.
Since we do not restrict the location and scale of $\epsilon_t$, the values 
$ \theta_p$, $p=1,2,3$ can be set to arbitrary values and $\theta_5$ can be set to an arbitrary positive value in our semiparametric estimation procedure. 
More details on per-period utilities are given in Appendix \ref{sec:utility_derivation}.

\subsection{Estimation}
\label{sec:gill_estimation}
For data generation we use parameter values based on estimates in  \citet{Gilleskie_Eca:98}
for Type 2 illness with some adjustments so that the expected number of doctor visits and work absences roughly match with Gilleskie's sample.
We let the data generating distribution of the utility shocks to be a two-component mixture of extreme value distributions: $
\sum_{k=1}^2
\omega_k
\phi \left( \epsilon; \mu_k,\sigma_k \right)
$. See Appendix \ref{sec:giil_data_gen} for the data-generating parameter values. 
The panel data $\{x_{it}, d_{it},\, i=1,\ldots,n,\, t=1,\ldots,T\}$ is sequentially simulated for $n=100$ individuals and $T=8$ periods. 

The priors are specified as follows, 
$\underbar{a}=10$, $A_m=0.05$, and $\tau=5$,
$\mu_{jk} \sim N(0, 2^2)$, 
$\log \sigma_{k}  \sim N(0, 1)$, 
$\log \sigma \sim N(0,0.01^2)$, and 
$\theta_6 \sim N(0,4^2)$.
This gives normal prior on $\mu_{jk}$'s and $\theta_6$ with large variances. 
The log-normal prior on the component specific scale parameters also implies sufficiently large prior probabilities for large values of $\sigma_k$'s.
Prior sensitivity checks presented in Appendix \ref{sensitivity} show that the obtained estimation results are not substantively affected by moderate changes in the prior.
Appendix \ref{gilleskie_logit} shows results when extreme value distribution of shocks is used for the data generation.

\subsubsection{Estimation Results and Counterfactuals}\label{results}
We use 20,000 MCMC iterations to explore the posterior distribution. 
Figure \ref{fig_Gilleskie_m} shows a trace plot and a p.m.f. of the number of mixture components $m$. 
The posterior has its peak at $m=2$ and the posterior probability of $m=1$ is small relative to its prior. 

\FloatBarrier

\begin{figure}[ht] 
  \begin{subfigure}[b]{0.4\linewidth}
    \centering
    \includegraphics[width=0.8\linewidth]{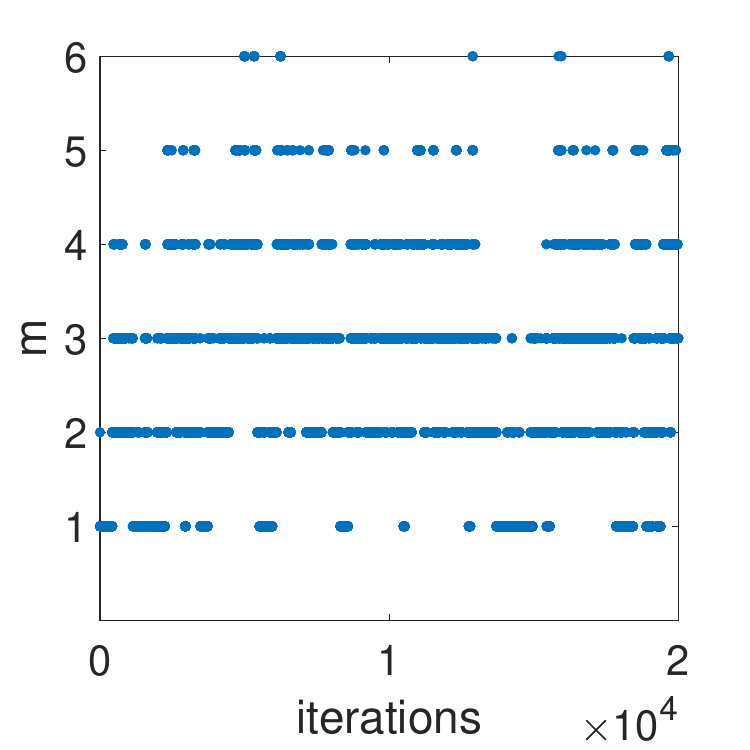} 
\caption{\small{Trace plot of $m$ }}
  \end{subfigure}
  \begin{subfigure}[b]{0.4\linewidth}
    \centering
    \includegraphics[width=0.8\linewidth]{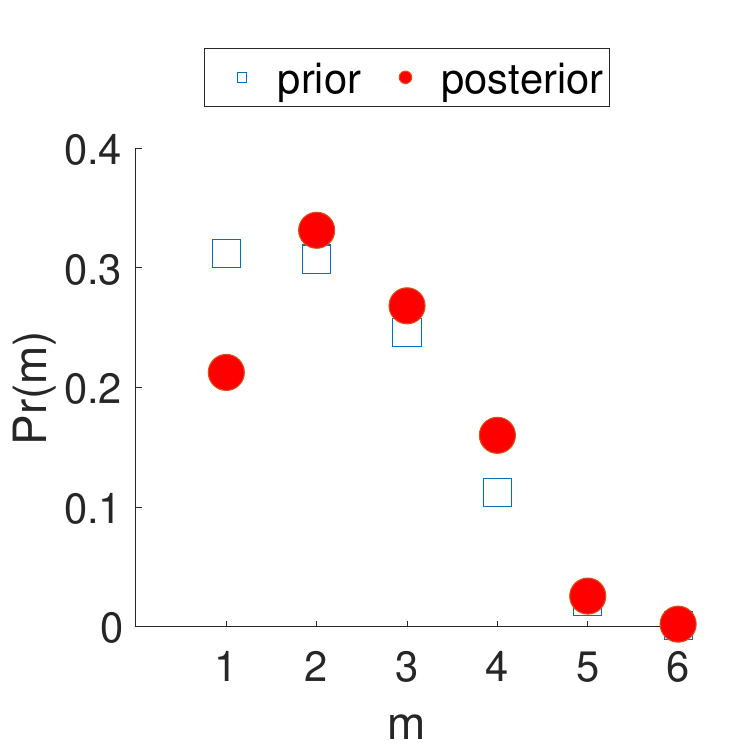} 
\caption{\small{p.m.f. of $m$ }}
  \end{subfigure}
\caption{\small{Trace plot and p.m.f. of $m$ }}
\label{fig_Gilleskie_m}
\end{figure}
\FloatBarrier

Figure \ref{fig_Gilleskie_posterior_theta} shows the posterior densities of the utility function parameters in the 
location and scale normalization corresponding to the original model in \cite{Gilleskie_Eca:98} described 
in Section \ref{subsection:normalizations}.  The corresponding trace plots presented in 
Figure \ref{fig_Gilleskie_trace} in Appendix \ref{appsec:traceplots} provide evidence that the MCMC algorithm converged.

\begin{figure}[ht] 
  \begin{subfigure}[b]{0.7\linewidth}
    \centering
    \includegraphics[width=\linewidth]{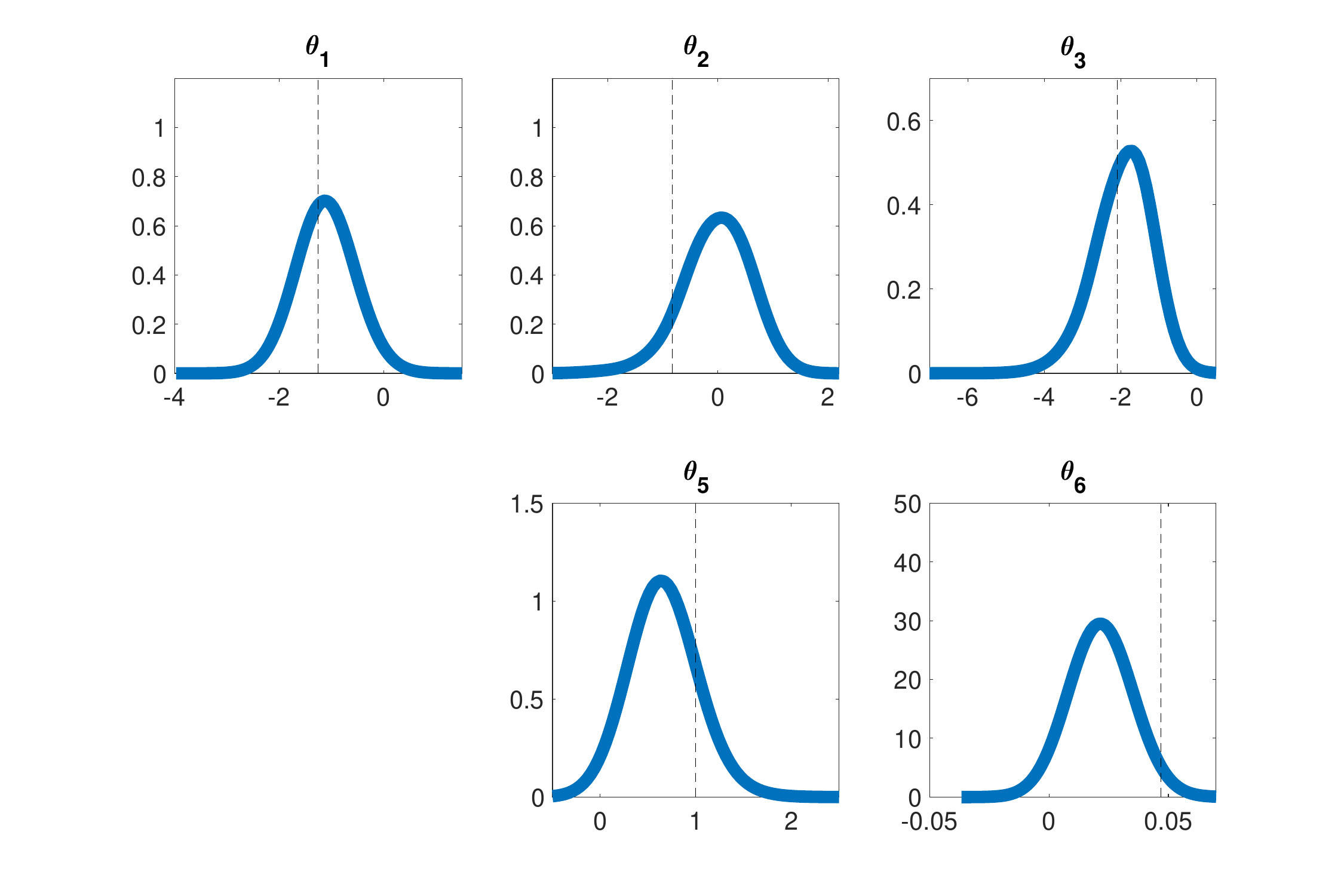} 
  \end{subfigure}
\caption{\small{Posterior densities of utility parameters (solid) and the data generating values (dashed). }}
\label{fig_Gilleskie_posterior_theta}
\end{figure}


The standard parametric approach that Gilleskie takes is to assume that the shocks are  extreme value i.i.d. and to estimate the model by the maximum likelihood method. 
In Figure \ref{fig_Gilleskie_EvEa} and Table \ref{table_estimation}, we compare estimation results from our semiparametric method to those from the MLE.  
For the comparison, we use the expected number of doctor visits $E(v)$  implied by the model.  It is a function of the model parameters and is of interest in the application.
One of the main advantages of structural estimation is that it provides an attractive framework for counterfactual experiments. 
We consider a counterfactual experiment presented in Section 6.2 of Gilleskie's paper (Experiment 1). 
In this experiment, we are interested in the behavior of individuals when the coinsurance rate paid out of pocket is set to zero. 
Thus, we examine the counterfactual model solution when $PC=0$ and the transition probabilities and $(\theta,F)$ are unchanged. 
The last row of Table \ref{table_estimation} and panel (b) in Figure \ref{fig_Gilleskie_EvEa} display the estimation results for $E(v)$ in the counterfactual environment.

In addition to the posteriors, the true values, the HPD credible intervals, and the MLE confidence intervals computed by the Delta method, Figure \ref{fig_Gilleskie_EvEa} and Table \ref{table_estimation} 
present credible intervals for the identified sets of $E(v)$, $\hat{B}_{0.95}^{I_{E(v)}}$, that are introduced in Section \ref{sec:inference_IS}. 
While the point estimates of the increase in $E(v)$ are similar for both approaches,
the 95\% Bayesian credible interval for the identified set of the counterfactual $E(v)$ in the semiparametric model is up to 6 times wider than the 95\% confidence interval for the MLE in the parametric setting. 
Importantly, the confidence interval by far misses the true counterfactual value of $E(v)$, while the credible interval for the identified set includes the true value and the HPD credible interval gets very close to it.


\begin{figure}[ht] 
  \begin{subfigure}[b]{0.5\linewidth}
    \centering
    \includegraphics[width=\linewidth]{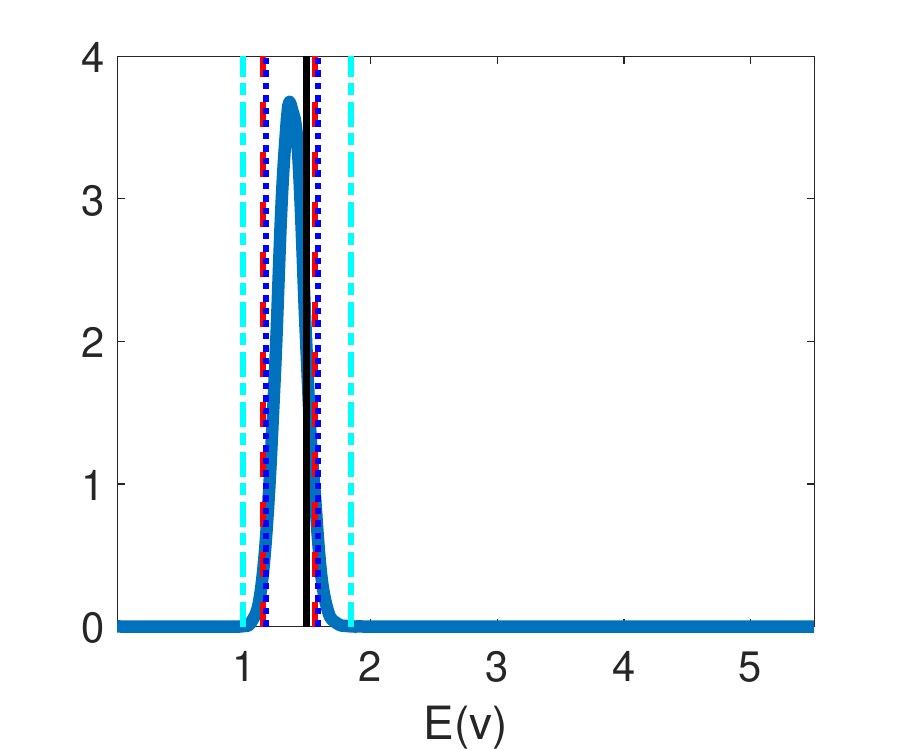} 
\caption{\small{Mean doctor visits }}
  \end{subfigure}
  \begin{subfigure}[b]{0.5\linewidth}
    \centering
    \includegraphics[width=\linewidth]{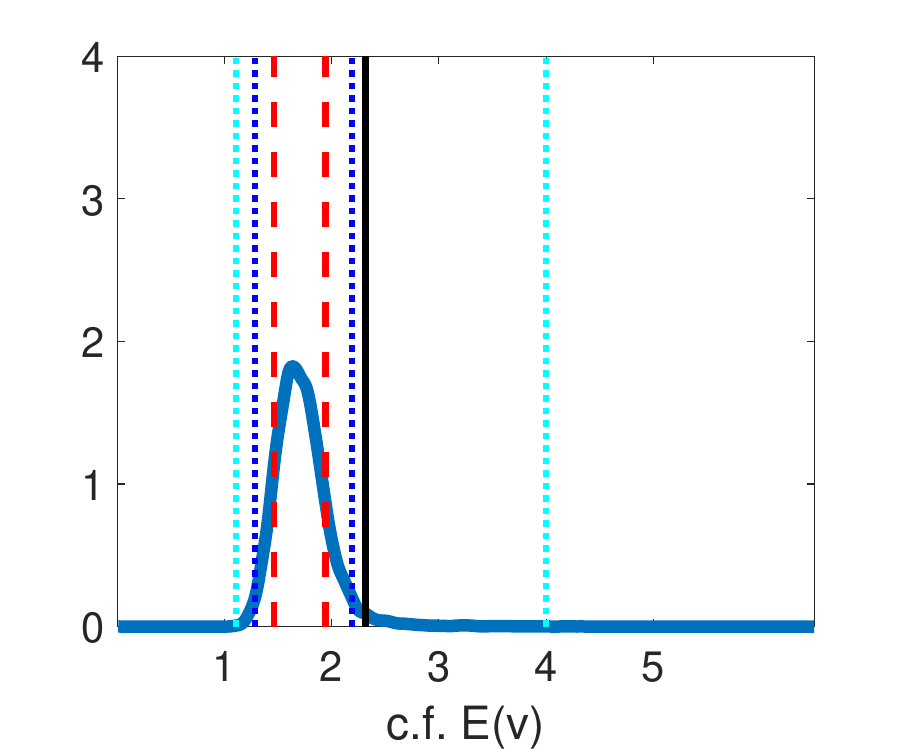} 
\caption{\small{ Mean c.f. doctor visits }}
  \end{subfigure}
\caption{\small{Actual and counterfactual (c.f.) expected number of doctor visits. 
Posterior density (blue solid),
95\% confidence interval (red dashed), 
95\% HPD interval (blue dotted),
$\hat{B}^{I_{E(v)}}_{0.95}$ (light blue dash-dotted),
and data generating value (black solid). }}
\label{fig_Gilleskie_EvEa}
\end{figure}
\FloatBarrier



\begin{table}[h]
  \centering
  \def\sym#1{\ifmmode^{#1}\else\(^{#1}\)\fi}%
  \begin{tabular}{l*{6}{c}}
    \toprule
    & \multicolumn{3}{c}{MLE under logit}  & \multicolumn{3}{c}{Semiparametric Bayes} \\
    \cmidrule(lr){3-4}\cmidrule(lr){5-7}
    & \multicolumn{1}{c}{True} 
    & \multicolumn{1}{c}{Est.} & \multicolumn{1}{c}{95\%CI (length)}  
    & \multicolumn{1}{c}{Est.} & \multicolumn{1}{c}{95\%HPD (length)}  
                                   & \multicolumn{1}{c}{$\hat{B}^{I_{E(v)}}_{0.95}$ (length)}      \\  
    \midrule
        \addlinespace
     $E(v)$  & 1.49  &  1.35    &  [1.15,    1.55]  (0.40)  & 1.37 &  [1.17,    1.58]    (0.41)   &  [0.99,    1.84]    (0.85)   \\
        \addlinespace
    $c.f.E(v)$ & 2.31  &  1.70      &  [1.46,    1.94]   (0.48) &  1.72   &  [1.28,    2.18]      (0.90)    &  [1.10,    3.99]    (2.88) \\
        \addlinespace        
    \bottomrule
  \end{tabular}
  
\caption{Estimated (Est.)  $E(v)$ and  $c.f.E(v)$: 
the MLE with its 95\% confidence interval and
the posterior mean with the 95\% HPD interval and $\hat{B}^{I_{E(v)}}_{0.95}$. }
\label{table_estimation}  
\end{table}

\FloatBarrier

These results illustrate the following general observations.  
First, the dynamic logit MLE can deliver misleading point and set estimates, especially for counterfactuals, when the shocks are not extreme value distributed.
Second, as is well known from \cite{MoonSchorfheide:12}, the HPD credible intervals for set identified parameters might be too short at least from the classical perspective.
Third, the credible intervals for the identified sets introduced in Section \ref{sec:inference_IS} are more conservative than the standard HPD intervals and might be preferred 
under set identification.
Finally, and most importantly, the whole posterior distribution (and not just point and set estimators) should be reported and taken into account in making decisions or policy recommendations.



\section{Extensions}
\label{sec:extensions}

In this section, we briefly discuss extensions of our results and methodology
to models with unknown $G$, continuous $X$, multiple agents, and finite horizon.

The assumption of known transition and initial probabilities for the observed states, $G$, can be relaxed with the following notation changes.  The expression for the likelihood function in \eqref{eq:likelihood} needs to be replaced by
\[
\sum_{i=1}^n \left( \sum_{t=1}^T \log p(d_{it} | x_{it}) + \sum_{t=2}^T \log Pr(x_{it}|x_{it-1},d_{it-1}) + \log Pr(x_{i1})\right),
\]
where $Pr(x_{it}|x_{it-1},d_{it-1})$ and $Pr(x_{i1})$ are the corresponding elements of $G$.
Assume that a prior distribution on the elements of $G$ has a positive and continuous density at the DGP value $G_0$.
To generalize the posterior consistency result in Theorem \ref{th:post_cons} and its proof it suffices 
to note that the collection of CCPs $P(\theta,F,G)$ (with the explicit dependence on $G$ now) is continuous in $G$ in the present settings, see, for example \cite{Norets_ddcm_diff_cont:09}.
The conclusion of the theorem would be 
$\Pi \big( \theta,F,G: ||(G_0,P(\theta_0,F_0,G_0))-(G,P(\theta,F,G)) ||> \delta \big  |D^n \big) \rightarrow 0$ almost surely.
The construction of the credible sets for the identified sets in Section \ref{sec:inference_IS} would use $G$ as point identified reduced form parameters along with the CCPs.

As demonstrated  by \cite{SrisumaLinton2012}, discretizations of continuous $X$ could nontrivially affect estimation results; hence, it would be desirable to extend our method to continuous $X$. 
The results in Lemma \ref{lm:ccp_emax} (analytical expressions for CCPs and \textit{Emax}), Lemma \ref{lm:LipschitzCont_Q_p} (continuity of CCPs and \textit{Emax} in $F$), Lemma \ref{lm:approx_by_mgumbles} (approximations by mixtures of Gumbels), and Lemma \ref{lm:Fcont_params} (continuity of mixtures in parameters) are proved for arbitrary/continuous $X$. A version of the posterior consistency result in Theorem \ref{th:post_cons} would also hold for a continuous and bounded $X$ under additional assumptions that the CCPs and the transition densities for the observed states are bounded away from zero. In this case the continuity of CCPs in $F$ in the sup norm established in Lemma \ref{lm:LipschitzCont_Q_p} would deliver positive prior probability for the  Kullback-Leibler neighbourhoods of the DGP, which is a sufficient condition for the Schwartz posterior consistency theorem.  The norm on the CCPs in the conclusion of the theorem would be one implied by the weak topology, and it would be harder to interpret.  
Unknown $G$ parameterized by a finite dimensional parameter can be handled similarly to the extension discussed in the previous paragraph; treating $G$ nonparametrically for continuous $X$ would be more involved.
The result on the exact matching of CCPs in Lemma \ref{lm:exact_match_ccps} would not hold for continuous $X$, but it is not essential for the proposed method.  
Solutions of the Bellman equations for continuous $X$ can be obtained using 
sieve approximations (\cite{KristensenMogensenMoonSchjerning2021}) or random grids (\cite{Rust:97}, \cite{ImaiJainChing:09}, \cite{Norets_eca:09}).
However, the main challenge for implementing our method for models with continuous $X$ appears to be a fast and accurate computation of the derivatives of the CCPs and the likelihood, which we leave to future work.

Single agent models with extreme value distributed utility shocks were  extended to dynamic discrete games with multiple agents in \cite{AguirregabiriaMiraGames:07} and \cite{PesendorferSchmidtDengler:08}, where in the first stage choice probabilities are estimated nonparametrically and then structural parameters are estimated from equilibrium conditions and the estimated choice probabilities.
A direct extension of our likelihood based method to dynamic games would have to handle the difficulties arising from possible multiple equilibria.  The likelihood function would not be defined in this case (same values of shocks can be consistent with different choice probabilities).  Flexible modelling of an equilibrium selection mechanism is a non-trivial problem, hence we do not pursue it in the present paper.  Nevertheless, one way to exploit the ideas from our paper in the estimation of dynamic discrete games is to use the method of \cite{PesendorferSchmidtDengler:08} with several fixed location-scale mixtures of extreme value distributions instead of just one extreme value distribution to check the sensitivity of the estimation results to the distributional assumptions on the shocks.

To apply our method to models with a finite horizon, one would just need to use a backward induction method instead of a Newton method for solving the Bellman equations.

\section{Conclusion}

In this paper, we propose and implement a semiparametric Bayesian estimation method for dynamic discrete choice models that uses flexible mixture specifications for modeling the distribution of unobserved state variables. 
We establish approximation and posterior consistency results that provide frequentist asymptotic guarantees for the method.
Our approach is shown to perform well in practice for binary and multinomial choice models.
The computational costs of solving the dynamic program with our mixture specification are comparable to those of the dynamic logit.
Even though the proposed MCMC algorithm requires many more iterations than a standard maximum likelihood method, the proposed framework is a robust and computationally tractable semiparametric alternative to the standard dynamic logit model; it provides more reliable inference, especially for counterfactuals.

\bibliographystyle{latex_common/ecta} 

\bibliography{latex_common/allreferences} 

\appendix

\section{Closed Form Expressions for CCPs and \textit{Emax}}

\label{sec:ccp_emax}
\begin{proof}[proof of Lemma \ref{lm:ccp_emax}]
Note that for $E1(z) = \int_{z}^{\infty}  e^{-t} / t dt$, $-E1(z) = Ei(-z) $ for positive $z>0$, where $Ei(z) =  -\int_{-z}^{\infty}  e^{-t} / t dt$ denotes the exponential integral function.

The mixture distribution in \eqref{eq:mixture} can be represented as 
$\epsilon|Z=k \sim \phi(\epsilon;\mu_k,\sigma_k)$, where the random variable 
$Z$ indicates a mixture component, $Pr(Z=k)=\omega_k$. 
Then, 
\[p(d)=\sum_{k=1}^m \omega_k p(d|Z=k) \mbox{ and }
E\left[\max_{j=0,1,\ldots,J} v_j + \epsilon_j \right] = \sum_{k=1}^m \omega_k E\left[\max_{j=0,1,\ldots,J} v_j + \epsilon_j | Z=k\right],
\]
where the dependence on the observed state $x$ is suppressed to simplify the notation.

For each $k=1,\ldots,m$, $Pr(d=0|Z=k)$ equals to 
\begin{align*}
& Pr(\epsilon_j \leq v_0 -v_j  \forall j=1,\ldots,J|Z=k) = \prod_{j=1}^J \exp \left[ -e^{ -\big( \frac{v_0 - v_j -\mu_{jk}}{\sigma_k} \big)} - e^{-\gamma}  \right]\\
& = \exp \left[ - e^{-\gamma} e^{-\frac{v_0}{ \sigma_k}} \sum_{j=1}^J e^{\frac{v_j+\mu_{jk}}{\sigma_k}} \right] = \exp \left[ - e^{-\gamma} e^{-\frac{v_0}{ \sigma_k }}e^{A_k}\right] = \exp [-e^{-a_{k}}].
\end{align*}
For $d \ne 0$, for each $k=1,\ldots,m$, note that $P(d|Z=k) = \int p(d|Z=k,\epsilon_d) f(\epsilon_d|Z=k) d\epsilon_d $ and
\begin{align*}
 &p(d|Z=k,\epsilon_d) = Pr\big( v_d +\epsilon_d \geq v_j +\epsilon_j \forall j \ne d|Z=k,\epsilon_d \big)\\
 &= Pr\big( 0 \leq \epsilon_d+v_d - v_0 \text{ and } \epsilon_j \leq \epsilon_d + v_d -v_j \forall j \ne 0, d  |Z=k,\epsilon_d \big)\\
 &= Pr\big(  \epsilon_j \leq \epsilon_d + v_d -v_j \forall j \ne 0, d  |Z=k,\epsilon_d\big)  1\left( 0 \leq \epsilon_d+v_d - v_0   \right)\\
 &= \prod_{j=1,j\ne d}^J \exp \left[  -e^{-\left(\frac{\epsilon_d + v_d -v_j -\mu_{jk}}{\sigma_k} \right)} e^{-\gamma}\right]1\left( 0 \leq \epsilon_d+v_d - v_0   \right)\\
 &= \prod_{j=1}^J \exp \left[  -e^{-\left(\frac{\epsilon_d + v_d -v_j -\mu_{jk}}{\sigma_k} \right)} e^{-\gamma}\right] \exp\left[ e^{-\left(\frac{\epsilon_d -\mu_{dk}}{\sigma_k}\right) }e^{-\gamma}\right]  1\left( 0 \leq \epsilon_d+v_d - v_0   \right).
 \end{align*}
Hence, $p(d|Z=k)$ equals to
\begin{align*}
&=\int \prod_{j=1}^J \exp \left[  -e^{-\left(\frac{s + v_d -v_j -\mu_{jk}}{\sigma_k} \right)} e^{-\gamma}\right] \exp\left[ e^{-\left(\frac{s -\mu_{dk}}{\sigma_k}\right) }e^{-\gamma}\right]  1\left( 0 \leq s+v_d - v_0   \right) \\
& \times \frac{1}{ \sigma_k } e^{-\left(\frac{s-\mu_{dk}}{ \sigma_k }\right)}e^{-\gamma}\exp \left[ -e^{-\left( \frac{s-\mu_{dk}}{  \sigma_k } \right)} e^{-\gamma} \right] ds\\
&=\int \prod_{j=1}^J \exp \left[  -e^{-\left(\frac{s + v_d -v_j -\mu_{jk}}{\sigma_k} \right)} e^{-\gamma}\right]  1\left( 0 \leq s+v_d - v_0   \right) \times \frac{1}{ \sigma_k } e^{-\left(\frac{s-\mu_{dk}}{ \sigma_k }\right)}e^{-\gamma} ds\\
&=\int_{s=-\infty}^{s=\infty} \prod_{j=1}^J \exp \left[  -e^{-\left(\frac{s -\mu_{dk}}{\sigma_k} \right)} e^{-\left(\frac{v_d+\mu_{dk}}{\sigma_k} \right)} e^{\left(\frac{u_j+\mu_{jk}}{\sigma_k} \right)} e^{-\gamma}\right]  1\left( 0 \leq s+v_d - v_0   \right) \\
& \times \frac{1}{ \sigma_k }  e^{-\left(\frac{s-\mu_{dk}}{ \sigma_k }\right)} e^{-\gamma} ds.
\end{align*}
Let $t= e^{\frac{s-\mu_{dk}}{ \sigma_k   }}$. Then $ds=- \sigma_k \frac{1}{t}dt$.
Note that $0 \leq s+v_d-v_0 \iff \frac{v_0 -v_d -\mu_{dk}}{  \sigma_k } \leq  \frac{s-\mu_{dk}}{ \sigma_k   } \iff e^{ - \left( \frac{v_0 -v_d -\mu_{dk}}{  \sigma_k } \right) }\geq t$ and that $s=\infty \iff t=0, s=-\infty \iff t=\infty$.
Now 
\begin{align*}
&p(d|Z=k) =
\int_{t=\infty}^{t=0} \prod_{j=1}^J \exp \left[  -te^{-\left(\frac{v_d+\mu_{dk}}{\sigma_k} \right)} e^{\left(\frac{v_j+\mu_{jk}}{\sigma_k} \right)} e^{-\gamma}\right]  1\left( t \leq e^{ - \left(\frac{v_0 -v_d -\mu_{dk}}{  \sigma_k } \right) }  \right)\\
&  \times \frac{1}{ \sigma_k } t e^{-\gamma} \left( 
- \sigma_k \frac{1}{t}dt  \right)\\
&=e^{-\gamma} \int_{t=0}^{t=  e^{ - \left(\frac{v_0 -v_d -\mu_{dk}}{  \sigma_k } \right)}} 
\exp\left[ -t e^{-\left(\frac{v_d+\mu_{dk}}{\sigma_k} \right)} e^{-\gamma} \sum_{j=1}^J  e^{\left(\frac{v_j+\mu_{jk}}{\sigma_k}\right)}\right] dt\\
&=e^{-\gamma} \frac{\exp\left[ -t e^{-\left(\frac{v_d+\mu_{dk}}{\sigma_k} \right)} e^{-\gamma} \sum_{j=1}^J  e^{\left(\frac{v_j+\mu_{jk}}{\sigma_k}\right)}\right]}{-e^{-\left(\frac{v_d+\mu_{dk}}{\sigma_k} \right)} e^{-\gamma} \sum_{j=1}^J  e^{\left(\frac{v_j+\mu_{jk}}{\sigma_k}\right)}}
\bigg\rvert_{t=0}^{t=  e^{ - \left(\frac{v_0 -v_d -\mu_{dk}}{  \sigma_k } \right)}}\\
&=\frac{-e^{\left(\frac{v_d+\mu_{dk}}{\sigma_k} \right)}}{\sum_{j=1}^J  e^{\left(\frac{v_j+\mu_{jk}}{\sigma_k}\right)}} 
 \Bigg\{ \exp\left[ -e^{ - \left(\frac{v_0 -v_d -\mu_{dk}}{  \sigma_k } \right)} e^{-\left(\frac{v_d+\mu_{jk}}{\sigma_k} \right)} e^{-\gamma} \sum_{j=1}^J  e^{\left(\frac{v_j+\mu_{jk}}{\sigma_k}\right)}\right]  -1 \Bigg\}\\
&=\frac{e^{\left(\frac{v_d+\mu_{dk}}{\sigma_k} \right)}}{\sum_{j=1}^J  e^{\left(\frac{v_j+\mu_{jk}}{\sigma_k}\right)}} 
 \Bigg\{1- \exp\left[ - e^{-\gamma} e^{ - \left(\frac{v_0 }{  \sigma_k } \right)} e^{-A_k} \right]   \Bigg\} 
=\frac{e^{\left(\frac{v_d+\mu_{dk}}{\sigma_k} \right)}}{\sum_{j=1}^J  e^{\left(\frac{v_j+\mu_{jk}}{\sigma_k}\right)}} \left[1-\exp[-e^{-a_k}] \right].
 \end{align*}
 We see that the probabilities sum to 1,
 \begin{align*}
\sum_{d=0}^{J}P(d|Z=k) =  
& \exp[-e^{-a_k}] 
+\sum_{d=1}^{J}\frac{e^{\left(\frac{v_d+\mu_{dk}}{\sigma_k} \right)}}{\sum_{j=1}^J  e^{\left(\frac{v_j+\mu_{jk}}{\sigma_k}\right)}} 
 \left[1-\exp[-e^{-a_k}] \right] =1.
 \end{align*}
This proves  \eqref{eq:ccp}.
To prove \eqref{eq:emax}, we have for each $k=1,\ldots,m$, 
\begin{align*}
&E\left[\max_{j=0,1,\ldots,J} v_j + \epsilon_j | Z=k\right] = \\
&E\left[\max_{j=0,1,\ldots,J} v_j + \epsilon_j | \max_{j=1,\ldots,J} v_j + \epsilon_j \leq v_0, Z=k \right] 
Pr\left[ \max_{j=1,\ldots,J} v_j + \epsilon_j \leq v_0 | Z=k\right] \\
+&E\left[\max_{j=0,1,\ldots,J} v_j + \epsilon_j | \max_{j=1,\ldots,J} v_j + \epsilon_j > v_0, Z=k \right] 
Pr\left[ \max_{j=1,\ldots,J} v_j + \epsilon_j > v_0 | Z=k\right].
\end{align*}
First, note that $Pr\left[ \max_{j=1,\ldots,J} v_j + \epsilon_j \leq v_0 | Z=k\right] =Pr\left[ v_j + \epsilon_j \leq v_0,  \forall j=1,\ldots,J| Z=k\right]$
\begin{align*}
&=\prod_{j=1}^J \exp\left[ -e^{ - \left( \frac{v_0 -v_j -\mu_{jk}}{  \sigma_k } \right) } e^{-\gamma}\right] 
=\exp \left[ -e^{-\gamma} \sum_{j=1}^J e^{ - \left( \frac{v_0 -v_j -\mu_{jk}}{  \sigma_k } \right) }\right] .
\end{align*}
Note that  $\sum_{j=1}^J e^{ - \left( \frac{v_0 -v_j -\mu_{jk}}{  \sigma_k } \right) } 
= e^{-\frac{v_0}{   \sigma_k  }}
\sum_{j=1}^J e^{-\frac{v_j+\mu_{jk}}{   \sigma_k  } } 
= e^{-\frac{v_0}{   \sigma_k  }} e^{A_k} = \exp\left[ -\frac{v_0}{   \sigma_k  } + A_k \right] 
$$\\= \exp\left[ -\frac{v_0-A_k \sigma_k}{   \sigma_k  }  \right] $.
Hence, 
$Pr\left[ \max_{j=1,\ldots,J} v_j + \epsilon_j \leq v_0 | Z=k\right] =\exp \left[ -e^{\left( -\frac{v_0-A_k \sigma_k}{   \sigma_k  }  \right) }e^{-\gamma}  \right] =\exp[-e^{-a_k}]$.
This means, $\max_{j=1,\ldots,J} v_j + \epsilon_j \leq v_0 | Z=k \sim \phi\left(\cdot;  A_k \sigma_k,  \sigma_k \right)$.
Next, we have 
\begin{align*}
&E\left[\max_{j=0,1,\ldots,J} v_j + \epsilon_j | \max_{j=1,\ldots,J} v_j + \epsilon_j > v_0, k \right] \\
&=E\left[\max_{j=1,\ldots,J} v_j + \epsilon_j | \max_{j=1,\ldots,J} v_j + \epsilon_j > v_0, k \right] 
= \int_{v_0}^{\infty} \frac{y f_{\max_{j=1,\ldots,J} v_j + \epsilon_j}  (y|Z=k) }{Pr\left[ \max_{j=1,\ldots,J} v_j + \epsilon_j > v_0  \right]} dy.
\end{align*}
Note that 
$\int_{v_0}^{\infty} y f_{\max_{j=1,\ldots,J} v_j + \epsilon_j}  (y|Z=k)  dy=
\int_{v_0}^{\infty} y  \frac{1}{   \sigma_k } e^{\left( -\frac{v_0-A_k \sigma_k}{   \sigma_k  }  \right) }e^{-\gamma}  \left[ -e^{\left( -\frac{v_0-A_k \sigma_k}{   \sigma_k  }  \right) }e^{-\gamma} \right]dy $.
Let $z = \gamma +\frac{y}{  \sigma_k   } -A_k$. Then $dz = \frac{1}{  \sigma_k   }dy$. 
Note that $y=v_0 \to z = \gamma + \frac{v_0}{  \sigma_k} -A_k = a_k$ and $y=\infty \to z = \infty$. Note that 
$y =  \sigma_k (z - \gamma +A_k)$.
We have 
\begin{align*}
&\int_{v_0}^{\infty} y f_{\max_{j=1,\ldots,J} v_j + \epsilon_j}  (y|Z=k)  dy =
\int_{a_k}^{\infty} \sigma_k (z - \gamma +A_k)\frac{1}{   \sigma_k } e^{-z}\exp\left[ -e^{-z}\right] ( \sigma_k dz ) \\
&=\sigma_k \int_{a_k}^{\infty}  z  e^{-z}\exp\left[ -e^{-z}\right] dz +\sigma_k (A_k-\gamma) \int_{a_k}^{\infty}   e^{-z}\exp\left[ -e^{-z}\right] dz \\
&=\sigma_k \left[ \gamma - a_k \exp\left( -e^{-a_k}\right) +E1(e^{-a_k}) \right]+\sigma_k (A_k-\gamma) \left[  1 - \exp\left( -e^{-a_k}\right) \right].
\end{align*}
Finally, $E\left[\max_{j=0,1,\ldots,J} v_j + \epsilon_j | Z=k\right]=v_0 \exp\left( -e^{-a_k}\right) + \int_{v_0}^{\infty} y f_{\max_{j=1,\ldots,J} v_j + \epsilon_j}  (y|Z=k)  dy $
\begin{align*}
&= v_0 \exp\left( -e^{-a_k}\right) +\\
&\sigma_k \left[  \gamma - a_k \exp\left( -e^{-a_k}\right) +E1(e^{-a_k}) +A_k -\gamma -A_k \exp\left( -e^{-a_k}\right) + \gamma \exp\left( -e^{-a_k}\right)       \right] \\
&=\sigma_k \left[ \exp\left( -e^{-a_k}\right)\left(  \frac{v_0}{ \sigma_k }  + \gamma -A_k -a_k \right) +E1(e^{-a_k})  + A_k    \right] \\
&=\sigma_k \left[ A_k +E1(e^{-a_k})  \right].
\end{align*}
This proves  \eqref{eq:emax}.
\end{proof}



\section{Proofs and Intermediate Results}
\label{sec:proofs}

\begin{proof}[proof of Lemma \ref{lm:LipschitzCont_Q_p}]
We first show Lipschitz continuity of the \textit{Emax} function. Suppose  there are two distributions $F_1$
and $F_2$. Define the corresponding \textit{Emax} functions $Q_i(x) = Q(x; F_i)$. 
\begin{align*}
Q_2(x) - Q_1(x) = 
& \bigg(
Q_2(x) - \int \max_d \bigg[ u(x,d) + \beta G_x^d(Q_1) + \epsilon_d \bigg] dF_2(\epsilon)
\bigg) \\
& +
\bigg(
 \int \max_d \bigg[ u(x,d) + \beta G_x^d(Q_1) + \epsilon_d \bigg] dF_2(\epsilon) -Q_1(x)
\bigg).
\end{align*}
The term in the first parentheses of the right hand side is bounded by
\begin{align*}
&\int \max_d \bigg[\left(  u(x,d) + \beta G_x^d(Q_2) + \epsilon_d \right) -\left(  u(x,d) + \beta G_x^d(Q_1) + \epsilon_d \right)  \bigg] dF_2(\epsilon)\\
&=
\beta \max_d \bigg[G_x^d(Q_2) - G_x^d(Q_1) \bigg]\\
&\leq 
\beta \max_d \sum_{x' \in X} | Q_2(x') - Q_1(x') | G_x^d(x') \\
&\leq \beta  || Q_2-Q_1||,
\end{align*}
where $||Q_2-Q_1|| = \sup_{x \in X} | Q_1(x) - Q_1(x)|$. 
The term in the second parentheses is 
\begin{align*}
& \int \max_d \left[u(x,d) + \beta G_x^d(Q_1) + \epsilon_d  \right]\left[f_2(\epsilon)-f_1(\epsilon)\right]d(\epsilon)\\
& 
\leq 
\int \left[ \bar{u}+ \beta \sup_x Q_1(x) + \sum_d |\epsilon_d| \right] \left|f_2(\epsilon)-f_1(\epsilon)\right|d(\epsilon)\\
&\leq 
c \int \left[ 1 + \sum_d |\epsilon_d| \right]|f_2(\epsilon)-f_1(\epsilon)|d(\epsilon)= c \rho(F_1,F_2)
\end{align*}
for $c=\max\{1, \bar{u}+ \beta (\bar{u} + \sum_d \int |\epsilon_d|dF_1(\epsilon))/(1-\beta)\}$. Thus, 
\begin{align*}
Q_2(x) - Q_1(x) \leq \beta ||Q_2-Q_1|| + c\rho(F_1,F_2), 
\end{align*}
for each $x \in X$. Finally, we have 
\begin{align*}
||Q_2-Q_1|| \leq \frac{c}{1-\beta} \rho(F_1,F_2) = C \rho(F_1,F_2).
\end{align*}
This proves the local Lipschitz continuity of the
\textit{Emax} function. 

Given some $d^* \in \{0,1,\ldots,J\}$ 
and $x$, define the set 
\[
S_1= \bigg\{\epsilon: u(x,d^*) + \beta G_x^{d^*} (Q_1) + \epsilon_{d^*} \geq  u(x,d) + \beta G_x^{d} (Q_1) + \epsilon_{d}, \forall d  \bigg\}, 
\]
on which $d^*$
is optimal under $F_1$. Similarly, define  $S_2$. With this notation, 
\[
p(d^* | x; F_i) = \int 1(S_i) dF_i(\epsilon), 
\]
for $i = 1, 2$, where $1(\cdot)$ is an indicator function. 
Now,
\[
p(d^* | x; F_1)- p(d^* | x; F_2) = \int [1(S_1)-1(S_2) ]  dF_1+ \int 1(S_2) d(F_1-F_2).
\]
The second integral is bounded by $\rho(F_1,F_2)$. 
Note that $1(S_1)-1(S_2)$ is bounded above by 
\begin{align*}
\sum_d 
&1( \{ \text{ $d^*$ is optimal under $F_1$ but $d$ is optimal under $F_2$} \} )\\
+
&1( \{ \text{ $d$ is optimal under $F_1$ but $d^*$ is optimal under $F_2$} \} ).
\end{align*}
If $d^*$ is optimal under $F_1$ and $d$ is optimal under $F_2$, then 
\[
u(x,d) - u(x,d^*) + \beta G_x^d(Q_1) - \beta G_x^{d^*} (Q_1) 
\leq 
\epsilon_{d^*} - \epsilon_d 
\leq 
u(x,d) - u(x,d^*) + \beta G_x^d(Q_2) - \beta G_x^{d^*} (Q_2). 
\]
Denote this interval by $A_1$.
Similarly, 
if $d$ is optimal under $F_1$ and $d^*$ is optimal under $F_2$, then 
\[
u(x,d^*) - u(x,d) + \beta G^{d^*}(Q_1) - \beta G_x^{d} (Q_1) 
\leq 
\epsilon_{d^*} - \epsilon_d 
\leq 
u(x,d^*) - u(x,d) + \beta G^{d^*} (Q_2) - \beta G_x^{d} (Q_2).
\]
Denote this interval by $A_2$.
Note that the length of the intervals $A_1$ and $A_2$ is bounded by $2 \beta || Q_2-Q_1 ||$. 
Also, by condition (ii) of the lemma, the density of the difference $\epsilon_{d^*} - \epsilon_d$ implied by $F_1$ is bounded by some positive constant $\bar{f}>0$.
Thus,
\begin{align*}
& \int [1(S_1)-1(S_2) ]  dF_1
\leq
 \sum_d\int 1(\epsilon_{d^*} - \epsilon_d \in A_1) dF_1(\epsilon) 
 +
  \sum_d\int 1(\epsilon_{d^*} - \epsilon_d \in A_2) dF_1(\epsilon)
	\\
	& \leq 4 (J+1)  \beta || Q_2-Q_1 || \bar{f} \leq 4 (J+1) \beta C \rho(F_1,F_2),
\end{align*}
where the last inequality follows by the proven Lipschitz continuity of the \textit{Emax} function.
In summary, we have, for each $d^*=0,1,\ldots,J$, 
\begin{align*}
| p(d^* | x; F_1)- p(d^* | x; F_2) | \leq  4 (J+1) \beta C \rho(F_1,F_2) + \rho(F_1,F_2) = C' \rho(F_1,F_2), 
\end{align*}
for all $x \in X$. 
\end{proof}

\begin{proof}[Proof of Lemma \ref{lm:approx_by_mgumbles}]
\begin{align}
&\rho\left( f(\cdot), \ \sum_{j=1}^m \omega_j \phi\left( \cdot;   q_j, \sigma   \right) \right)
= \int \left(1 + \sum_{i=1}^I |z_i|  \right)  \bigg| f(z) - \sum_{j=1}^m \omega_j \phi \left( z; q_j, \sigma \right)\bigg| dz\nonumber\\
&\leq 
\sum_{i=1}^I \int (1+|z_i|) \bigg| f(z) - \sum_{j=1}^m \omega_j  \phi \left(z;q_j,\sigma \right) \pm \int  \phi\left(z;\mu,\sigma\right) f(\mu) d\mu \bigg|  dz\nonumber\\
&\leq 
\label{eq:f_contm} 
\sum_{i=1}^I \int (1+|z_i|) \bigg| f(z)  - \int  \phi\left(z;\mu, \sigma \right) f(\mu) d\mu \bigg| dz  \\
&+
\int (1+|z_i|) \bigg|  \int  \phi\left(z; \mu, \sigma \right) f(\mu) d\mu -  \sum_{j=1}^m \omega_j \phi \left( z; q_j, \sigma \right) \bigg| dz. 
\label{eq:contm_discm}
\end{align}
With the change of variable of $\mu_i$ to $\theta_i =\frac{z_i-\mu_i}{\sigma}, i=1,\ldots,I$, \eqref{eq:f_contm} is bounded by 
\begin{align*}
&\sum_{i=1}^I \int \int (1+|z_i|) \bigg| f(z)  -f(z-\sigma \theta) \bigg| \phi(\theta) d\theta dz \\
&\leq \sum_{i=1}^I \int \int (1+|z_i|) L_f(z) e^{\tau \sigma  ||\theta||_2} \sigma  ||\theta||_2   \phi(\theta) d\theta dz \\
&=\sigma \bigg[\int ||\theta||_2 e^{\tau \sigma  ||\theta||_2}   \phi(\theta) d\theta \bigg] \sum_{i=1}^d \bigg[ \int (1+|z_i|)  L_f(z) dz \bigg],
\end{align*}
where the inequality follows from the smoothness assumption \eqref{eq:f_smooth}.
Lemma \ref{lm:lemma_bounded_integral} shows that the term in the first square brackets is bounded for sufficiently small $\sigma$.
The last term is finite by assumption \eqref{eq:L_inf_smooth}. Hence, we can choose  $\sigma$ small  enough to make the term above arbitrarily small.

To bound \eqref{eq:contm_discm}, let $A_j, j=0,1,\ldots,m$, be a partition of $\mathbb{R}^I$ 
consisting of 
adjacent hypercubes $A_1,\ldots,A_m$ with sides $h_m^{1/I}$ so that they are collectively centered at zero and   
the rest of the space is $A_0$. As $m$ increases, the fine part of the partition becomes finer, $h_m \to 0$ and $m \to \infty$. 
Also, it covers larger and larger parts of $\mathbb{R}^I$; 
that is, $m \cdot h_m \to \infty$ as  $m \to \infty$.
Define $\omega_j = \int_{A_j}  f(\mu) d\mu $.
 This implies, 
\begin{align*}
&\int \phi\left(z;\mu, \sigma \right) f(\mu) d\mu -  \sum_{j=1}^m \omega_j \phi \left( z;q_j, \sigma \right)\\
&= 
\sum_{j=0}^m \int_{A_j}   \phi\left(z; \mu, \sigma \right) f(\mu) d\mu -\sum_{j=1}^m \int_{A_j}  \phi\left(z;q_j, \sigma \right) f(\mu) d\mu\\
&=
\sum_{j=1}^m \int_{A_j}  \bigg[ \phi\left( z; \mu, \sigma \right)-  \phi\left(z; q_j, \sigma \right)\bigg] 
+ \int_{A_0} \phi\left(z; \mu, \sigma \right) f(\mu) d\mu.
\end{align*}
The expression in \eqref{eq:contm_discm} can be bounded as follows,
\begin{align*}
&\sum_{i=1}^I \int (1+|z_i|) \bigg|  \int   \phi\left(z;\mu, \sigma\right) f(\mu) d\mu  -\sum_{j=1}^m \omega_j  \phi\left(z;q_j, \sigma\right)  \bigg| dz \\
& \leq \sum_{i=1}^I \sum_{j=1}^m \int \int_{A_j}  (1+|z_i|) \bigg|  \phi\left(z;\mu, \sigma\right) - \phi\left(z;q_j, \sigma\right)  \bigg| f(\mu) d\mu dz\\
&+ \sum_{i=1}^I \int \int_{A_0}  (1+|z_i|) \phi\left(z;\mu, \sigma\right)  f(\mu) d\mu dz.
\end{align*}
Consider the change of variable of $z$ to $y=\frac{z-\mu}{\sigma}$ and define $\delta_j = \frac{q_j - \mu }{\sigma}$, $j=1,\ldots,m$ so that $\frac{z-q_j}{\sigma} =y - \delta_j$
and the above equals to 
\[
\sum_{i=1}^I \sum_{j=1}^m \int \int_{A_j} (1+|\mu_i + \sigma y_i | ) \bigg| \phi(y) - \phi(y-\delta_j) \bigg| f(\mu) d\mu dy
+
\int \int_{A_0}  (1+|\mu_i + \sigma y_i | ) \phi(y)  f(\mu) d\mu dy.
\]
As $\mu, q_j \in A_j$, we have $||\delta_j||_\infty \leq \big\Vert \frac{q_j - \mu }{\sigma} \big\Vert_1 \leq \frac{I\cdot h_m^{1/d}}{I \cdot \sigma} \equiv \bar{\delta}$, which will be made small. 
By Lemma \ref{lm:Lipschitz_Gumbel_location}, 
$\phi(\cdot)$ is Lipschitz continuous and $| \phi(y) - \phi(y-\delta_j) | \leq \bar{L}_\phi(y) || \delta_j||_\infty$, which implies that $| \phi(y) - \phi(y-\delta_j) | \leq \bar{L}_\phi(y)  \bar{\delta}$.
 
The first term above is bounded by $\bar{\delta}$ times 
\begin{align*}
\sum_{i=1}^I \left(  \sum_{j=1}^m c_{1j}  \int_{A_j} (1+|\mu_i| ) f(\mu) d\mu  +c_{2j} \sigma  \right),
\end{align*}
where $c_{1j} = \int \bar{L}_\phi(y) dy$ and $c_{2j}$ is such that $\int |y_i| \bar{L}_\phi(y) dy < c_{2j}$ for $i=1,\ldots,I$. 
The bound approaches to zero as $m \to \infty$. Since $\phi(\cdot)$ has bounded first moment, the second integral is bounded by 
\begin{align*}
\sum_{i=1}^I 
 \int_{A_0}  (1+|\mu_i| + \sigma c )  f(\mu) d\mu,  
\end{align*}
for some constant $c>0$. As $mh_m\to \infty$, this term goes to zero. 

\end{proof}

\begin{proof}[Proof of Lemma \ref{lm:Fcont_params}]
We have
\begin{align}
& \rho \left(  F^1, F^2 \right)
= \int ( 1 + \sum_j | \epsilon_j| ) \bigg| f^1(\epsilon)  -f^2(\epsilon)  \bigg| d \epsilon \notag \\
&=
\int  ( 1 + \sum_j | \epsilon_j| )
 \bigg|
\sum_{k=1}^m 
\omega_k^1  \phi\left( \epsilon; \mu_k^1, \sigma_k^1 \right)
 -
 \omega_k^2  \phi\left( \epsilon; \mu_k^2, \sigma_k^2 \right) 
 \pm 
  \omega_k^1  \phi\left( \epsilon; \mu_k^2, \sigma_k^2 \right) 
 \bigg| d \epsilon \notag \\
 &
\label{eq:S_w_phi_phi}
\leq
 \sum_{k=1}^m \omega_k^1
\int  
( 1 + \sum_j | \epsilon_j| )
 \bigg| \phi\left( \epsilon; \mu_k^1, \sigma_k^1 \right)
 -
\phi\left( \epsilon; \mu_k^2, \sigma_k^2 \right) \bigg|
d \epsilon\\
&
\label{eq:S_w_w_phi}
+
\sum_{k=1}^m |   \omega_k^1- \omega_k^2  |
\int
( 1 + \sum_j | \epsilon_j| )
\phi\left( \epsilon; \mu_k^2, \sigma_k^2 \right)  d \epsilon.
\end{align}
Expression \ref{eq:S_w_w_phi} is bounded by a positive constant times $\sum_{k=1}^m |   \omega_k^1- \omega_k^2  |$.

To bound \ref{eq:S_w_phi_phi} note that 
\begin{align*}
&\int  ( 1 + \sum_j | \epsilon_j| ) 
 \bigg| \phi\left( \epsilon; \mu_k^1, \sigma_k^1 \right)
 -
\phi\left( \epsilon; \mu_k^2, \sigma_k^2 \right) \bigg|
d \epsilon \nonumber \\
&=
\int  ( 1 + \sum_j | \epsilon_j| ) 
\bigg| 
\phi\left( \epsilon; \mu_k^1, \sigma_k^1 \right)
-\phi\left( \epsilon;\mu_k^2, \sigma_k^2\right) 
\pm\phi\left(\epsilon; \mu_k^2, \sigma_k^1\right)  
\bigg|d \epsilon \nonumber \\
&\leq 
\int  ( 1 + \sum_j | \epsilon_j| ) 
\bigg| 
\phi\left( \epsilon; \mu_k^1, \sigma_k^1\right)
- \phi\left( \epsilon;\mu_k^2, \sigma_k^1\right)  
\bigg|d \epsilon \\
&+
\int  ( 1 + \sum_j | \epsilon_j| ) 
\bigg| 
\phi\left( \epsilon; \mu_k^2, \sigma_k^1\right)
- \phi\left( \epsilon; \mu_k^2, \sigma_k^2\right)  
\bigg|d \epsilon. 
\end{align*}
With the change of variable $z=\frac{\epsilon-\mu_k^1}{\sigma_k^1}$ and by Lemma \ref{lm:Lipschitz_Gumbel_location}, the first part can be bounded as follows 
\begin{align*}
&\int \left(1 + \sum_j |\sigma_k^1 z_j +\mu_{jk}^1 | \right) \Bigg| \phi(z;0_{J+1},1) - \phi\left( z- \frac{\mu_k^2-\mu_k^1}{\sigma_k^1}; 0_{J+1},1 \right) \Bigg| dz\\
&\leq \bigg\Vert \frac{\mu_k^2-\mu_k^1}{\sigma_k^1} \bigg\Vert_\infty
\int \left(1 + \sum_j \bigg|\sigma_k^1 z_j +\mu_{jk}^1 \bigg| \right) \bar{L}_\phi(z)  dz,
\end{align*}
where the integral is bounded. By Lemma \ref{lm:Lipschitz_Gumbel_scale}, the second part can be bounded as follows 
\[
\int  ( 1 + \sum_j | \epsilon_j| ) 
\bigg| 
\phi\left( \epsilon; 0_{J+1}, \sigma_k^1 \right)
- 
\phi\left( \epsilon;0_{J+1},\sigma_k^2 \right)  
\bigg|d \epsilon 
\leq 
\bigg| \frac{1}{\sigma_k^1}-\frac{1}{\sigma_k^2} \bigg| 
\int  ( 1 + \sum_j | \epsilon_j| ) 
\bar{M}_\phi(\epsilon)
d \epsilon, 
\]
where the integral is bounded. 
To summarize, we have 
\begin{align*}
\rho \left(  F^1, F^2 \right) \leq \sum_{k=1}^m \omega_k^1
 \left[
 c_1\bigg\Vert \frac{\mu_k^2-\mu_k^1}{\sigma_k^1} \bigg\Vert_\infty
 +
  c_2\bigg| \frac{1}{\sigma_k^1}-\frac{1}{\sigma_k^2} \bigg|
 \right]
 +
 c_3 \sum_{k=1}^m |   \omega_k^1- \omega_k^2  |, 
\end{align*}
for some positive finite constants $c_i$'s, which implies the claimed result.
\end{proof}

\begin{proof}[Proof of Lemma \ref{lm:exact_match_ccps}]
Let us denote the set of shock values at which alternative $j$ is optimal at the observed state $k$ by
\[
E_{jk} = \{(\epsilon_1,\ldots,\epsilon_J): \;  
u(k,j;\theta_0) + \beta G_k^j Q + \epsilon_j \geq  u(k,d;\theta_0) + \beta G_k^d Q + \epsilon_d, \forall d=0,1,\ldots,J \},
\]
where the normalization $\epsilon_0=0$ is used.
Sets $\{E_{jk}, j=0,1,\ldots,J\}$ define a partition of $\mathbb{R}^J$ (up to the overlapping boundaries) for each $k \in \{1,\ldots,K\}$.
Consider a refinement of these $K$ partitions,
\[
\big \{A:\; A = \bigcap_{k=1}^{K} E_{j_k k},\; j_k \in \{0,\ldots,J\},\; \lambda(A)>0 \big\}
=\{A_1,\ldots,A_L \},
\]
where $\lambda$ is the Lebesgue measure.
Let us define
\[
q_l = \int_{A_l} dF_0(\epsilon_{1J}) \in (0,1),
\]
\[
r_l = \int_{A_l} \epsilon_{1J} dF_0(\epsilon_{1J}) \in \mathbb{R}^J,
\]
where $\epsilon_{1J}=(\epsilon_1,\ldots,\epsilon_J)^\prime$.
It follows from a characterization of the identified sets under unknown distribution of shocks in \cite{Norets_ddc_mult:11} that 
any distribution that implies the same $\{q_l, r_l, l=1,\ldots,L\}$ delivers the same CCPs. 
The key to this result is that the shocks enter the utility additively and the integrals over the shocks in the Bellman equations can be replaced with 
expressions depending on the distribution of shocks only through $\{q_l, r_l, l=1,\ldots,L\}$.
In what follows, we construct a finite mixture distribution that delivers the same $\{q_l, r_l, l=1,\ldots,L\}$ as $F_0$.
Specifically, consider a mixture with $L \cdot (J+1)$ components
\begin{align*}
\sum_{l=1}^L \sum_{j=0}^J \omega_{lj} \phi(\cdot; \mu_{lj}, \sigma),
\end{align*}
where we fix the locations as follows. 
Since $F_0$ is assumed to have a positive density, by Lemma 2 in \cite{Norets_ddc_mult:11}, $r_l/q_l \in int(A_l)$ for all $l$.
This is an implication of the supporting hyperplane theorem and the closedness and convexity of $E_{jk}$'s and thus $A_l$, $l=1,\ldots,L$
($r_l/q_l$ is the expectation of $\epsilon_{1J}$ conditional on $\epsilon_{1J} \in A_l$).
Therefore, $\exists \Delta > 0$ such that 
\begin{align*}
\mu_{lj} &= r_l/q_l +e_j \cdot \Delta \in int(A_l), \, j=1,\ldots,J, \\
\mu_{l0} &= r_l/q_l - (1,\ldots,1)^\prime \cdot \Delta \in int(A_l), \, l=1,\ldots,L,
\end{align*}
 where $e_j$ is a column vector of length $J$ with $1$ in the $j$th coordinate and $0$'s in the others. 

The mixing weights $\omega_{lj}$ are chosen as a solution to the following linear system of equations that matches $\{q_l, r_l, l=1,\ldots,L\}$,
\begin{align}
q_l & = \sum_{\tilde{l}=1}^{L} \sum_{j=0}^J \omega_{\tilde{l}j} \int_{A_l} \phi(\epsilon_{1J}; \mu_{\tilde{l}j}, \sigma) d\epsilon_{1J},
\nonumber \\ 
\label{eq:match_qlrl}
r_l& =  \sum_{\tilde{l}=1}^{L} \sum_{j=0}^J \omega_{\tilde{l}j} \int_{A_l} \epsilon_{1J} \phi(\epsilon_{1J}; \mu_{\tilde{l}j}, \sigma) d\epsilon_{1J}, 
\;\;\;\;\;
l \in \{1,\ldots,L\}.
\end{align}
Let us show that for sufficiently small $\sigma$, this linear system has a unique solution that belongs to the $(L(J+1)-1)$-simplex.
First, note that since $\mu_{lj} \in int(A_l)$,
\[
\lim_{\sigma\rightarrow 0} \int_{A_l} \phi(\epsilon_{1J}; \mu_{\tilde{l}j}, \sigma) d\epsilon_{1J} = 1\{l=\tilde{l}\}
\mbox{ and }
\lim_{\sigma\rightarrow 0} \int_{A_l} \epsilon_{1J} \phi(\epsilon_{1J}; \mu_{\tilde{l}j}, \sigma) d\epsilon_{1J} = \mu_{lj} 1\{l=\tilde{l}\}.
\]
Thus, the limiting system corresponding to $\sigma=0$ in \eqref{eq:match_qlrl} 
is
\begin{equation}
\label{eq:match_qlrl_limit}
q_l  = \sum_{j=0}^J \omega_{lj}, \;\;\;
r_l =  \sum_{j=0}^J \omega_{lj} \mu_{lj}, 
\;\;\;\
l \in \{1,\ldots,L\}.
\end{equation}
Plugging the definition of $\mu_{lj}$ into \eqref{eq:match_qlrl_limit}, we obtain $\omega_{lj}=\omega_{l0}$, $j=1,\ldots,J$ and, thus, the limiting system \eqref{eq:match_qlrl_limit} has
the unique solution 
\[
\omega_{lj}^\ast = q_l/(J+1), \; l=1,\ldots,L,\; j=0,\ldots,J.
\]
It follows that the matrix of the linear coefficients in the limiting system is invertible.  
Since matrix inversion is continuous and $\omega_{lj}^\ast>0$, the system \eqref{eq:match_qlrl} has a unique strictly positive solution 
$\{\omega_{lj}^{\ast\ast}, \;l=1,\ldots,L,\; j=0,\ldots,J\}$ for all sufficiently small $\sigma$.
Since $\sum_{l=1}^L q_l=1$, $\sum_{l,j} \omega_{lj}^{\ast\ast}=1$ as well.

\end{proof}

\begin{lemma}
\label{lm:lemma_bounded_integral}

The term
\begin{align*}
\int ||z||_2 e^{\tau \sigma \cdot ||z||_2}   \phi(z; 0_d,1) dz 
\end{align*}
is bounded for any $\tau>0$ and sufficiently small $\sigma>0$
where 
$0_d$ is a column vector of zeros with length $d$.
\end{lemma}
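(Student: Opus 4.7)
The plan is to reduce the multivariate integral to a finite sum of products of univariate Gumbel integrals, each of which is shown to converge when $\sigma$ is small enough that $\tau\sigma<1$. The essential fact being exploited is that the Gumbel density has a double-exponential decay in one tail and only a single-exponential decay in the other, so $e^{\tau\sigma|z|}$ can be absorbed into the latter decay provided $\tau\sigma<1$.

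First, I would use the elementary inequality $\|z\|_2\leq \sum_{j=1}^d |z_j|$, which gives both $\|z\|_2\leq\sum_j|z_j|$ and, since the exponential is monotone, $e^{\tau\sigma\|z\|_2}\leq \prod_{j=1}^d e^{\tau\sigma|z_j|}$. Combined with the product structure $\phi(z;0_d,1)=\prod_{j=1}^d\phi(z_j)$ from \eqref{eq:phi_gumbel}, this lets me bound
\begin{equation*}
\int \|z\|_2\, e^{\tau\sigma\|z\|_2}\phi(z;0_d,1)\,dz \;\leq\; \sum_{j=1}^d I_2(\tau\sigma)\, I_1(\tau\sigma)^{d-1},
\end{equation*}
where $I_1(\alpha)=\int_{\mathbb{R}} e^{\alpha|z|}\phi(z)\,dz$ and $I_2(\alpha)=\int_{\mathbb{R}} |z|e^{\alpha|z|}\phi(z)\,dz$. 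It therefore suffices to show $I_1(\alpha),I_2(\alpha)<\infty$ for all $\alpha\in[0,1)$; then choosing $\sigma$ small enough that $\tau\sigma<1$ yields the desired bound.

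To verify the univariate finiteness, I would split each integral at $0$. On $[0,\infty)$, the Gumbel density satisfies $\phi(z)=e^{-z-\gamma-e^{-z-\gamma}}\leq e^{-z-\gamma}$, so $|z|^k e^{\alpha|z|}\phi(z)\leq z^k e^{-(1-\alpha)z-\gamma}$ for $k\in\{0,1\}$, which is integrable whenever $\alpha<1$. On $(-\infty,0]$, the key point is that $\phi(z)$ decays as a double exponential: writing $t=-z-\gamma$, on this range $\phi(z)=e^{t-e^{t}}$, and $e^{\alpha|z|}|z|^k=e^{-\alpha z}|z|^k$ grows only exponentially in $|z|$, which is dominated by the double-exponential factor $e^{-e^{t}}$ for any $\alpha\geq 0$. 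A concrete bound using $e^{-e^{t}}\leq C_N (1+t)^{-N}$ for large $t$ makes this explicit, but the cleanest route is simply to note that the integrand is bounded by an integrable function of $t$ on $[0,\infty)$ after the substitution.

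The only care required is keeping track of the threshold $\tau\sigma<1$ (which determines the allowable smallness of $\sigma$ given $\tau$); there is no real obstacle since the Gumbel's heavy-sided tail is exactly single-exponential and the other tail is super-exponential. Thus the conclusion follows for any $\sigma<1/\tau$.
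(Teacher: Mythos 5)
Your proof is correct, and it differs from the paper's in two respects worth noting. The paper first applies Cauchy--Schwarz to separate the factor $\|z\|_2$ from the exponential, bounding the integral by $\left(\int \|z\|_2^2\,\phi(z;0_d,1)\,dz\right)^{1/2}\left(\int e^{2\tau\sigma\|z\|_2}\,\phi(z;0_d,1)\,dz\right)^{1/2}$, and only then uses $\|z\|_2\le\|z\|_1$ to factor the second piece into univariate integrals; you skip Cauchy--Schwarz entirely and carry the polynomial factor along via $\|z\|_2\le\sum_j|z_j|$, arriving at $\sum_j I_2(\tau\sigma)\,I_1(\tau\sigma)^{d-1}$ directly. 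This costs nothing and avoids the doubling of the exponent (the paper needs $2\tau\sigma<1$ where you need $\tau\sigma<1$, though both satisfy the lemma's ``sufficiently small $\sigma$''). More substantively, your treatment of the left tail is more careful than the paper's: the paper bounds $\phi(z_i)\le e^{-z_i}$ over the whole line and then writes the resulting integral as $2\int_0^\infty e^{(2\tau\sigma-1)z}\,dz$, but the bound $e^{-z}$ is not integrable against $e^{2\tau\sigma|z|}$ on $(-\infty,0]$, so that step only makes sense on the positive half-line; on the negative half-line one must invoke the double-exponential factor $e^{-e^{-z-\gamma}}$, exactly as you do with the substitution $t=-z-\gamma$. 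So your argument reaches the same conclusion by essentially the same reduction to univariate Gumbel tail integrals, but fills in the one point where the paper's own write-up is loose.
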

\begin{proof}
By Cauchy-Schwartz inequality, the term is bounded by
\begin{align*}
\left(\int ||z||_2^2    \phi(z;0_d,1) dz \right)^{1/2}  \left(\int  e^{2\tau \sigma \cdot ||z||_2}   \phi(z;0_d,1) dz \right)^{1/2}.
\end{align*}
 The first term in the product is bounded. 
As 2-norms are bounded by 1-norms, the term in the second parentheses is bounded by 
\begin{align*}
&\int  e^{2\tau \sigma \cdot ||z||_1}   \phi(z;0_d,1) dz 
=
 \prod_{i=1}^d \int_{-\infty}^\infty e^{2 \tau \sigma  |z_i| } \phi (z_i) dz_i \\
&
\leq
 \prod_{i=1}^d \int_{-\infty}^\infty e^{2 \tau \sigma  |z_i| } e^{-z_i} dz_i 
= 
2^d \prod_{i=1}^d \int_0^\infty e^{2 \tau \sigma z_i -z_i} dz_i, 
\end{align*}
which is bounded for small enough $\sigma$.

\end{proof}

\begin{lemma}
\label{lm:Lipschitz_Gumbel_location}
The density
$\phi(\ \cdot \ ;\mu,\sigma)$ is locally Lipschitz continuous in the location parameter
with envelope $\bar{L}_\phi$
in a sense that,  for an arbitrary $\bar{\delta}$, 
if $\delta \in \mathbb{R}^d$ is bounded by $\bar{\delta}$, then
 \begin{align*}
 | \phi(z; 0_d,1) - \phi(z-\delta; 0_d,1) | \leq \bar{L}_\phi(z) || \delta||_\infty, 
 \end{align*}
  where 
$0_d$ is a column vector of zeros with length $d$,
$ ||\delta||_\infty = \max_{i=1,\ldots,d} |\delta_i | $, and
 \begin{align*}
\bar{L}_\phi(z) = L_\phi(z_1) \prod_{i=2}^d \phi(z_i) +  L_\phi(z_2)\phi(z_1 -\delta_1) \prod_{i=3}^d \phi(z_i) + \cdots +   L_\phi(z_d)\prod_{i=1}^{d-1} \phi(z_i-\delta_i),  
 \end{align*}
\begin{align*}
L_\phi(z)=
\begin{cases}
   K_1e^{-(z+\gamma)} e^{-e^{-K_2 (z+\gamma)}},& \text{if } z > -\gamma\\
     K_3e^{-2(z+\gamma)} e^{-e^{-K_4  (z+\gamma)}},              & \text{otherwise},
\end{cases}
\end{align*}
for some positive $K_i$'s that could depend on $\bar{\delta}$.
\end{lemma}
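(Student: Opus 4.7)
The plan is to reduce the multivariate statement to a one-dimensional Lipschitz-type bound on the univariate Gumbel density and then verify the latter using the mean value theorem and an explicit calculation of $\phi'$.

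First, since $\phi(z;0_d,1) = \prod_{i=1}^d \phi(z_i)$ by the product form in \eqref{eq:phi_gumbel}, the standard telescoping identity for finite products gives
\[
\prod_{i=1}^d \phi(z_i) \;-\; \prod_{i=1}^d \phi(z_i - \delta_i) \;=\; \sum_{k=1}^d \left(\prod_{i<k} \phi(z_i - \delta_i)\right)\bigl(\phi(z_k) - \phi(z_k - \delta_k)\bigr)\left(\prod_{i>k} \phi(z_i)\right).
\]
Comparing with the stated expression for $\bar{L}_\phi$, the whole claim reduces to showing the one-dimensional estimate $|\phi(z) - \phi(z-\delta)| \leq L_\phi(z)\,|\delta|$ for all $|\delta| \leq \bar{\delta}$, after which applying $|\delta_k| \leq \|\delta\|_\infty$ in each summand yields the advertised envelope.

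For the one-dimensional step, I would differentiate $\phi(z) = \exp\{-(z+\gamma) - e^{-(z+\gamma)}\}$ to obtain $\phi'(z) = \phi(z)\bigl(e^{-(z+\gamma)} - 1\bigr)$, hence $|\phi'(z)| = \phi(z)\,|e^{-(z+\gamma)} - 1|$. The mean value theorem then yields $|\phi(z)-\phi(z-\delta)| \leq |\delta|\cdot\sup_{\tilde{z} \in [z-\bar{\delta},\,z+\bar{\delta}]} |\phi'(\tilde{z})|$, so it suffices to dominate this supremum by $L_\phi(z)$ with the declared piecewise form.

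The main technical work, and the principal obstacle, is matching this supremum to the piecewise expression for $L_\phi$. When $z > -\gamma$, the factor $|e^{-(\tilde z+\gamma)}-1|$ is bounded by a constant (for $\tilde z$ bounded away from $-\infty$ on the right of the neighborhood) and $\phi(\tilde z)$ decays like $e^{-(\tilde z + \gamma)}e^{-e^{-(\tilde z + \gamma)}}$; shifting $\tilde{z}$ within the $\bar{\delta}$-neighborhood of $z$ produces a multiplicative factor of order $e^{\bar\delta}$ on the algebraic piece (absorbed into $K_1$), while the double-exponential piece $e^{-e^{-(\tilde z + \gamma)}}$ is dominated uniformly by $e^{-e^{-K_2(z+\gamma)}}$ for an appropriate $K_2 \in (0,1)$ depending on $\bar{\delta}$ (to account for the worst $\tilde z$ in the interval). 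When $z \leq -\gamma$, the dominant behavior of $|\phi'(\tilde z)|$ is $\phi(\tilde z)\,e^{-(\tilde z+\gamma)}$, which has the shape $e^{-2(\tilde z + \gamma) - e^{-(\tilde z + \gamma)}}$; the same neighborhood adjustment yields the form $K_3 e^{-2(z+\gamma)} e^{-e^{-K_4(z+\gamma)}}$. Particular care is needed when the interval $[z-\bar{\delta}, z+\bar{\delta}]$ straddles the boundary $z+\gamma=0$: in that case $K_1, K_3$ must be taken large enough and $K_2, K_4$ small enough (as explicit functions of $\bar\delta$) that each piecewise branch of $L_\phi(z)$ continues to dominate the full neighborhood supremum on either side of the boundary. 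Plugging the resulting one-dimensional bound into the telescoping identity and using $|\delta_k|\leq \|\delta\|_\infty$ in each of the $d$ summands delivers the stated envelope $\bar{L}_\phi(z)$.
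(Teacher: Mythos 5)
Your proposal is correct and follows essentially the same route as the paper: a telescoping decomposition of the product of univariate densities to reduce to the one-dimensional case, followed by the mean value theorem with $\phi'(z)=\phi(z)(e^{-(z+\gamma)}-1)$ and a case analysis on $z\gtrless-\gamma$ to produce the envelope $L_\phi$. Your explicit attention to the interval straddling $z+\gamma=0$ is a point the paper's proof passes over silently, but it does not change the argument.
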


\begin{proof}
First, we establish local  Lipschitz continuity for the univariate case. We have for $z, \delta \in \mathbb{R}$
\begin{align*}
|\phi(z) - \phi(z-\delta)| = |\phi'(z-\tilde{\delta}) | \cdot |\delta|,
\end{align*}
for some $\tilde{\delta}$ between $0$ and $|\delta|$ where $\phi'(z) = \phi(z)(e^{-z-\gamma}-1)$.
Note that if $z>-\gamma$, then $|\phi'(z) | \leq c_1 \phi(z)$ and if $z<-\gamma$, then $|\phi'(z) | \leq c_2 e^{-2z -2\gamma- c_3e^{-z-\gamma}}$
for some constants $c_1,c_2,c_3>0$. From this, it follows that for $\tilde{\delta}$ which is bounded, $|\phi'(z-\tilde{\delta}) | \leq L_\phi(z)$
 where 
\begin{align*}
L_\phi(z)=
\begin{cases}
   K_1e^{-(z+\gamma)} e^{-e^{-K_2 (z+\gamma)}},& \text{if } z > -\gamma\\
     K_3e^{-2(z+\gamma)} e^{-e^{-K_4  (z+\gamma)}},              & \text{otherwise,}
\end{cases}
\end{align*}
for some positive $K_i$'s that could depend on $\bar{\delta}$.
Now, to show the result for the multivariate case, let $z,\delta \in \mathbb{R}^d$. 
With $ ||\delta||_\infty = \max_{i=1,\ldots,d} |\delta_i | $,
we have
\begin{align*}
&\phi(z; 0_d,1) -\phi(z-\delta; 0_d,1) = \prod_{i=1}^d \phi(z_i) -  \prod_{i=1}^d \phi(z_i-\delta_i) 
\pm \phi(z_1-\delta_1)  \prod_{i=2}^d \phi(z_i) 
\\ &
\;\;\; \pm \phi(z_1-\delta_1)  \phi(z_2-\delta_2)  \prod_{i=3}^d \phi(z_i) \cdots \pm \phi(z_1-\delta_1)  \phi(z_2-\delta_2) \cdots  \phi(z_d-\delta_d) 
\\&
=[\phi(z_1) - \phi(z_1-\delta_1) ] \prod_{i=2}^d \phi(z_i) 
+  \phi(z_1-\delta_1)[\phi(z_2) - \phi(z_2-\delta_2) ] \prod_{i=3}^d \phi(z_i) 
\\&
\;\;\; + \cdots +  \prod_{i=1}^{d-1}  \phi(z_i-\delta_i)[\phi(z_d) - \phi(z_d-\delta_d) ]. 
\end{align*}
By the proven  locally Lipschitz continuity for the univariate case, this is bounded by 
\begin{align*}
&L_\phi(z_1) |\delta_1|  \prod_{i=2}^d \phi(z_i) + L_\phi(z_2) |\delta_2| \phi(z_1-\delta_1) \prod_{i=3}^d \phi(z_i) + \cdots + L_\phi(z_d) |\delta_d|  \prod_{i=1}^{d-1} \phi(z_i-\delta_i)\\
&\leq ||\delta||_\infty \bigg[L_\phi(z_1)\prod_{i=2}^d \phi(z_i) +L_\phi(z_2) \phi(z_1-\delta_1) \prod_{i=3}^d \phi(z_i) + \cdots + L_\phi(z_d)  \prod_{i=1}^{d-1} \phi(z_i-\delta_i)  \bigg].
\end{align*}
\end{proof}

\begin{lemma}
\label{lm:Lipschitz_Gumbel_scale}
The density $\phi(z; \mu,\sigma)  $
is locally Lipschitz continuous in the inverse of the scale parameter with envelope $\bar{M}_\phi$  in a sense that given two scale parameters 
$\sigma^1,\sigma^2>0$,
 \[
 \bigg|  \phi\left( z;0_d,\sigma^1 \right) - \phi\left( z;0_d,\sigma^2 \right) \bigg| \leq 
   \bigg|  \frac{1}{\sigma^1}-  \frac{1}{\sigma^2} \bigg|
\bar{M}_\phi(z), \mbox{ where}
 \]
 \[
\bar{M}_\phi(z) =
  M_\phi(z_1)  \prod_{i=2}^d   \phi\left(z_i;0,\sigma^1 \right)
  +
M_\phi(z_2)  
 \phi\left(z_1;0,\sigma^2 \right)
 \prod_{i=3}^d   \phi\left(z_i;0,\sigma^1 \right)  
 +\cdots+
 M_\phi(z_d)  
 \prod_{i=1}^{d-1}  \phi\left(z_i;0,\sigma^2 \right)
 \]
  and 
$
M_\phi(z_i)= \frac{1}{\sigma^1} 
 L_\phi\left( \frac{z_i}{\sigma^1} \right)  |z_i| 
 +
     \phi\left(\frac{z_i}{\sigma^2} \right), \, i=1,\ldots,d
$
for $L_\phi(\cdot)$ defined in Lemma \ref{lm:Lipschitz_Gumbel_location}.
\end{lemma}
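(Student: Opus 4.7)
The plan is to mimic the structure of Lemma \ref{lm:Lipschitz_Gumbel_location}: first establish the one-dimensional inequality, then extend to the product density by a telescoping decomposition that swaps coordinates one at a time.

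For the univariate step, write the univariate Gumbel with scale $\sigma>0$ as $\phi(z;0,\sigma)=(1/\sigma)\phi(z/\sigma)$ and decompose
\[
\frac{1}{\sigma^1}\phi\!\left(\frac{z}{\sigma^1}\right) - \frac{1}{\sigma^2}\phi\!\left(\frac{z}{\sigma^2}\right)
= \frac{1}{\sigma^1}\left[\phi\!\left(\frac{z}{\sigma^1}\right) - \phi\!\left(\frac{z}{\sigma^2}\right)\right] + \left(\frac{1}{\sigma^1}-\frac{1}{\sigma^2}\right)\phi\!\left(\frac{z}{\sigma^2}\right).
\]
The second summand is already of the desired form. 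For the first summand, apply the one-dimensional case ($d=1$) of Lemma \ref{lm:Lipschitz_Gumbel_location} with ``displacement'' $\delta = z/\sigma^1 - z/\sigma^2 = z(1/\sigma^1 - 1/\sigma^2)$, which yields
\[
\left|\phi\!\left(\frac{z}{\sigma^1}\right) - \phi\!\left(\frac{z}{\sigma^2}\right)\right| \leq L_\phi\!\left(\frac{z}{\sigma^1}\right)|z|\left|\frac{1}{\sigma^1}-\frac{1}{\sigma^2}\right|.
\]
Collecting terms delivers $|\phi(z;0,\sigma^1)-\phi(z;0,\sigma^2)|\leq |1/\sigma^1-1/\sigma^2|\,M_\phi(z)$ with $M_\phi$ as defined in the statement.

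For the multivariate step, apply the standard telescoping identity
\[
\prod_{i=1}^d \phi(z_i;0,\sigma^1) - \prod_{i=1}^d \phi(z_i;0,\sigma^2)
= \sum_{k=1}^d \left[\prod_{i<k}\phi(z_i;0,\sigma^2)\right]\bigl[\phi(z_k;0,\sigma^1)-\phi(z_k;0,\sigma^2)\bigr]\left[\prod_{i>k}\phi(z_i;0,\sigma^1)\right],
\]
take absolute values, and apply the univariate bound to each bracketed factor. The resulting sum is precisely $|1/\sigma^1-1/\sigma^2|\,\bar{M}_\phi(z)$.

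The only genuine obstacle is the univariate inequality, and even there the work is essentially bookkeeping once the add-and-subtract trick is chosen so that Lemma \ref{lm:Lipschitz_Gumbel_location} can be invoked verbatim; the multivariate extension is then a mechanical telescoping. No new tail bounds or smoothness estimates are required beyond those already proved for $\phi$ in Lemma \ref{lm:Lipschitz_Gumbel_location}.
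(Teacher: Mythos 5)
Your proposal is correct and follows essentially the same route as the paper's proof: the same add-and-subtract decomposition $\pm\frac{1}{\sigma^1}\phi(z/\sigma^2)$ in the univariate case, the same invocation of the univariate bound from Lemma \ref{lm:Lipschitz_Gumbel_location} with displacement $z(1/\sigma^1-1/\sigma^2)$, and the same telescoping product identity (with $\sigma^2$ factors before the swapped coordinate and $\sigma^1$ factors after) in the multivariate step. No substantive differences.
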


\begin{proof}
We first show the result for the univariate case.
Let $\sigma^1,\sigma^2 >0$ and $z\in \mathbb{R}$.
\begin{align*}
& \bigg| \phi\left( z; 0, \sigma^1 \right) -\phi\left(z; 0, \sigma^2 \right) \bigg| 
 =\bigg| \frac{1}{\sigma^1}\phi\left(\frac{z}{\sigma^1} \right)-\frac{1}{\sigma^2}\phi\left(\frac{z}{\sigma^2} \right) \pm \frac{1}{\sigma^1}\phi\left(\frac{z}{\sigma^2} \right)  \bigg|\\
&
\leq  \bigg| \frac{1}{\sigma^1}\phi\left(\frac{z}{\sigma^1} \right)- \frac{1}{\sigma^1}\phi\left(\frac{z}{\sigma^2} \right) \bigg|
 +
  \bigg| \frac{1}{\sigma^1}\phi\left(\frac{z}{\sigma^2} \right)- \frac{1}{\sigma^2}\phi\left(\frac{z}{\sigma^2} \right) \bigg|\\
  &=
   \frac{1}{\sigma^1} \bigg| \phi\left(\frac{z}{\sigma^1} \right) - \phi\left(\frac{z}{\sigma^1} -\left( \frac{z}{\sigma^1}-\frac{z}{\sigma^2} \right)  \right)  \bigg|
   +
    \phi\left(\frac{z}{\sigma^2} \right) \bigg|  \frac{1}{\sigma^1}-  \frac{1}{\sigma^2} \bigg| \\
&\leq 
 \frac{1}{\sigma^1} 
 L_\phi\left( \frac{z}{\sigma^1} \right)  \bigg|  \frac{1}{\sigma^1}-  \frac{1}{\sigma^2} \bigg| |z| 
 +
     \phi\left(\frac{z}{\sigma^2} \right) \bigg|  \frac{1}{\sigma^1}-  \frac{1}{\sigma^2} \bigg|
     = M_\phi(z)  \bigg|  \frac{1}{\sigma^1}-  \frac{1}{\sigma^2} \bigg|, 
\end{align*} 
for $L_\phi(\cdot)$ defined in Lemma \ref{lm:Lipschitz_Gumbel_location}.
In the multivariate case, 
\begin{align*}
&  \phi\left( z; 0_d, \sigma^1 \right) -\phi\left(z; 0_d, \sigma^2 \right)  
= \prod_{i=1}^d  \phi\left(z_i; 0, \sigma^1 \right) -  \prod_{i=1}^d  \phi\left(z_i;0,\sigma^2 \right)\\
&
\pm
 \phi\left(z_1;0,\sigma^2 \right)
 \prod_{i=2}^d   \phi\left(z_i;0,\sigma^1 \right)
\pm
 \phi\left(z_1;0,\sigma^2 \right)
 \phi\left(z_2;0,\sigma^2 \right)
 \prod_{i=3}^d  \phi\left(z_i;0,\sigma^1 \right)\\
&
\pm
\cdots 
\pm 
 \prod_{i=1}^d   \phi\left(z_i;0,\sigma^2 \right)\\
 &=
\bigg[ \phi\left(z_1;0,\sigma^1 \right)- \phi\left(z_1;0,\sigma^2 \right) \bigg]  
\prod_{i=2}^d   \phi\left(z_i;0,\sigma^1 \right)\\
&+
\phi\left(z_1;0,\sigma^2 \right)
\bigg[  \phi\left(z_2;0,\sigma^1 \right)- \phi\left(z_2;0,\sigma^2 \right) \bigg]  
\prod_{i=3}^d  \phi\left(z_i;0,\sigma^1 \right)\\
&+\cdots +
\prod_{i=1}^{d-1} \phi\left(z_i;0,\sigma^2 \right) 
\bigg[ \phi\left(z_d;0,\sigma^1 \right)- \phi\left(z_d;0,\sigma^2 \right)  \bigg].
\end{align*}
With the result for the univariate case, the last expression is bounded by 
\begin{align*}
  & \bigg|  \frac{1}{\sigma^1}-  \frac{1}{\sigma^2} \bigg| 
  \bigg\{
  M_\phi(z_1)  \prod_{i=2}^d   \phi\left(z_i;0,\sigma^1 \right)
  +
M_\phi(z_2)  
 \phi\left(z_1;0,\sigma^2 \right)
 \prod_{i=3}^d   \phi\left(z_i;0,\sigma^1 \right)  \\
 &+\cdots+
 M_\phi(z_d)  
 \prod_{i=1}^{d-1}  \phi\left(z_i;0,\sigma^2 \right)
  \bigg\}.
\end{align*}

\end{proof}

\pagebreak

\section{Supplement. Auxiliary Results and Details}
\label{sec:auxiliary_res_det}

\subsection{Newton-Kantorovich algorithm for DP solution}
\label{sec:NK}

For a given value of the model parameter $\psi$, the \textit{Emax} function $Q_\psi$ is the fixed point of the operator $T_\psi(\cdot)$ defined by 
{\small
\begin{align*}
&T_{\psi}( \textbf{Q}) [x] = \sum_{k=1}^m \omega_k \sigma_k \bigg[A_{kx} + E1( e^{-a_{kx}} ) \bigg] \\
&=\sum_{k=1}^m \omega_k \sigma_k \bigg\{ \\
& \log \sum_{j=1}^J \exp \bigg(\frac{ u(x,j,;\psi) +\beta \sum_{y=1}^K G_{xy}^j \textbf{Q}(y) + \mu_{jk}    }{  \sigma_k  }\bigg) +\\
& E1\bigg(    \exp\bigg[ -\frac{  u(0,j,;\psi) + \beta \sum_{y=1}^K G_{xy}^j \textbf{Q}(y) }{\sigma_k  } - \gamma \\
&\;\;\;\;\;\;\;\;\;\;\;\;\;\;\;-\log\sum_{j=1}^J \exp\bigg\{  \frac{ u(x,j,;\psi) + \beta \sum_{y=1}^K G_{xy}^j \textbf{Q}(y) + \mu_{jk}    }{  \sigma_k  }  \bigg\} \bigg]   \bigg) \bigg\}.
\end{align*}
}

To find the fixed point of $T_\psi$, following \cite{Rust:87},  we use the Newton-Kantorovich algorithm, which is essentially a Newton method for solving the nonlinear system of equations, because it is more efficient than the iterations on $T_\psi$. 

\begin{lemma}\label{lemma:NK}

The Newton-Kantorovich algorithm has the update rule 
\begin{align}
Q^{(n+1)} = Q^{(n)} - \bigg[ I - T'_\psi( Q^{(n)}  ) \bigg]^{-1} \bigg[ I - T_\psi   \bigg] (Q^{n}), \label{eq: newton}
\end{align}
where 
\[
T'_\psi( Q  ) 
=\beta
\begin{pmatrix} 
\sum_{d=0}^J G_{11}^d P(d|x=1) & \hdots & \sum_{d=0}^J G_{1K}^d P(d|x=1) \\
 \vdots &  & \vdots \\ 
\sum_{d=0}^J G_{K1}^d P(d|x=K) & \hdots & \sum_{d=0}^J G_{KK}^d P(d|x=K)
\end{pmatrix}.
\]
\end{lemma}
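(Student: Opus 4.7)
The plan is to recognize that the Newton--Kantorovich algorithm is simply Newton's method applied to the equation $F(Q)=0$ with $F(Q) \equiv (I-T_\psi)(Q)$. Since $F'(Q)=I-T'_\psi(Q)$, the standard Newton update $Q^{(n+1)} = Q^{(n)} - F'(Q^{(n)})^{-1} F(Q^{(n)})$ immediately gives the stated formula \eqref{eq: newton}. Hence the whole content of the lemma reduces to identifying the Jacobian $T'_\psi(Q)$, i.e., computing $\partial T_\psi(\mathbf{Q})[x]/\partial \mathbf{Q}(y)$ for each pair $(x,y)\in X\times X$, and showing it equals $\beta \sum_{d=0}^{J} G_{xy}^{d}\, p(d\mid x)$ with $p(d\mid x)$ given by Lemma~\ref{lemma:ccp_emax}.

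To carry out the differentiation, I would note that $\mathbf{Q}$ enters $T_\psi(\mathbf{Q})[x]$ only through the linear combinations $\beta G_x^j \mathbf{Q} = \beta \sum_y G_{xy}^j \mathbf{Q}(y)$ appearing inside $A_{kx}$ and $a_{kx}$. A direct calculation gives, with $q_{kj}(x) = \exp\!\big((u(x,j) + \beta G_x^j Q + \mu_{jk})/\sigma_k - A_{kx}\big)$ (the softmax weights that sum to $1$ over $j=1,\dots,J$),
\begin{align*}
\frac{\partial A_{kx}}{\partial \mathbf{Q}(y)} &= \sum_{j=1}^{J} \frac{\beta G_{xy}^j}{\sigma_k}\, q_{kj}(x), \\
\frac{\partial a_{kx}}{\partial \mathbf{Q}(y)} &= \frac{\beta G_{xy}^0}{\sigma_k} - \frac{\partial A_{kx}}{\partial \mathbf{Q}(y)}.
\end{align*}
For the exponential-integral term I would invoke the fundamental theorem of calculus, $E1'(z) = -e^{-z}/z$, together with the chain rule to get $\partial E1(e^{-a_{kx}})/\partial \mathbf{Q}(y) = e^{-e^{-a_{kx}}}\, \partial a_{kx}/\partial \mathbf{Q}(y)$.

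Collecting everything,
\begin{align*}
\frac{\partial T_\psi(\mathbf{Q})[x]}{\partial \mathbf{Q}(y)}
&= \sum_{k=1}^m \omega_k \sigma_k \left[\bigl(1 - e^{-e^{-a_{kx}}}\bigr)\frac{\partial A_{kx}}{\partial \mathbf{Q}(y)} + e^{-e^{-a_{kx}}}\,\frac{\beta G_{xy}^0}{\sigma_k}\right] \\
&= \beta\, G_{xy}^0 \sum_{k=1}^m \omega_k\, e^{-e^{-a_{kx}}} + \beta \sum_{j=1}^{J} G_{xy}^j \sum_{k=1}^m \omega_k\, q_{kj}(x)\,\bigl(1 - e^{-e^{-a_{kx}}}\bigr).
\end{align*}
The final step is to recognize the two sums over $k$ as exactly $p(0\mid x)$ and $p(j\mid x)$ from formula \eqref{eq:ccp}, giving $\partial T_\psi(\mathbf{Q})[x]/\partial \mathbf{Q}(y) = \beta \sum_{d=0}^{J} G_{xy}^d\, p(d\mid x)$, which is the $(x,y)$-entry of the claimed matrix.

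The main obstacle is purely bookkeeping: keeping track of the cancellations between the $1$ and the $\exp(-e^{-a_{kx}})$ contributions and matching the resulting components with the piecewise CCP expression from Lemma~\ref{lemma:ccp_emax} (which is split by whether $d=0$ or $d\ge 1$). Once the softmax weight $q_{kj}(x)$ is correctly identified with the bracketed exponential in \eqref{eq:ccp}, the identification of $T'_\psi(Q)$ with $\beta\, [\,\sum_d G_{xy}^d p(d\mid x)\,]_{x,y}$ is immediate, completing the proof.
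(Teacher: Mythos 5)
Your proposal is correct and follows essentially the same route as the paper's proof: differentiate $A_{kx}$ and the $E1(e^{-a_{kx}})$ term with respect to $Q(y)$, collect the $(1-e^{-e^{-a_{kx}}})$ and $e^{-e^{-a_{kx}}}$ contributions, and recognize the resulting sums over $k$ as the CCPs from Lemma \ref{lemma:ccp_emax}. The only difference is cosmetic — you carry out the computation for a generic $(x,y)$ entry while the paper works out the $(1,1)$ entry and notes the others are analogous (and your use of $G_{xy}^0$ in the $a_{kx}$ derivative is the correct index, where the paper's display has an apparent typo).
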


\begin{proof}[proof of Lemma \ref{lemma:NK}]

Note that 
\[
T'_\psi( Q  ) 
= 
\begin{pmatrix} 
\frac{\partial}{\partial Q(1)}T_\psi (Q)(1) & \hdots & \frac{\partial}{\partial Q(K)}T_\psi (Q)(1) \\
 \vdots &  & \vdots \\ 
\frac{\partial}{\partial Q(1)}T_\psi (Q)(K) & \hdots & \frac{\partial}{\partial Q(K)}T_\psi (Q)(K)
\end{pmatrix}.
\]
Recall that $T_\psi(Q)(x)=\sum_{k=1}^m \omega_k \sigma_k \left[ A_{kx} +E1(e^{-a_{kx}})  \right]$.
Denoting $f( a_{kx}  ) = E1(e^{-a_{kx}})$, we have 
\begin{align*}
\frac{\partial}{\partial Q(1)}T_\psi (Q)(x)&=\sum_{k=1}^m \omega_k \sigma_k \left[ \frac{\partial}{\partial Q(1)} A_{kx} + \frac{\partial}{\partial Q(1)} f(a_{kx})  \right], \text{ for } x=1,\ldots,K.
\end{align*}

Recall that 
\begin{align*}
A_{kx} &= \log \sum_{j=1}^J \exp\left[ \frac{v(x,j)+\mu_{jk}}{\sigma_k} \right] 
 = \log \sum_{j=1}^J \exp\left[ \frac{u(x,j)+\beta\sum_{y=1}^K G_{xy}^jQ(y)+\mu_{jk}}{\sigma_k} \right], \\
a_{kx} &=  \frac{v_0}{\sigma_k} + \gamma - A_k 
=\frac{1}{\sigma_k} \left[ u(x,0) + \beta\sum_{y=1}^K G_{xy}^jQ(y)  \right] + \gamma -A_{kx}. 
\end{align*}
Hence, for example, 
\begin{align*}
\frac{\partial A_{k1} }{\partial Q(1)} 
&= \frac{1}{  \sum_{j=1}^J \exp\left[ \frac{v(1,j)+\mu_{jk}}{\sigma_k} \right]  } \sum_{d=1}^J\frac{\partial  \exp\left[ \frac{v(1,d)+\mu_{dk}}{\sigma_k} \right] }{\partial  \left[ \frac{v(1,d)+\mu_{dk}}{\sigma_k} \right] }\frac{\partial \left[ \frac{v(1,d)+\mu_{dk}}{\sigma_k} \right]  }{\partial Q(1)}\\
&= \frac{1}{  \sum_{j=1}^J \exp\left[ \frac{v(1,j)+\mu_{jk}}{\sigma_k} \right]  } \sum_{d=1}^J \exp\left[ \frac{v(1,d)+\mu_{dk}}{\sigma_k} \right] \frac{\beta}{ \sigma_k  } G_{11}^d\\
&=\frac{\beta}{ \sigma_k  } \sum_{d=1}^J \frac{  \exp\left[ \frac{v(1,d)+\mu_{dk}}{\sigma_k} \right]    }{  \sum_{j=1}^J \exp\left[ \frac{v(1,j)+\mu_{jk}}{\sigma_k} \right]  } G_{11}^d \\
&=\frac{\beta}{ \sigma_k  } \sum_{d=1}^J \exp\left[ \frac{v(1,d)+\mu_{dk}}{\sigma_k} - A_{k1}\right]   G_{11}^d. 
\end{align*}
Also,
\begin{align*}
\frac{\partial f(a_{k1}) }{\partial Q(1)} &= 
\frac{\partial g(e^{-a_{k1}})   }{\partial  e^{-a_{k1}}   } \frac{\partial  e^{-a_{k1}}  }{\partial  (-a_{k1})   }\frac{\partial  (-a_{k1})  }{\partial   a_{k1}  }\frac{\partial  a_{k1}  }{\partial  Q(1)   }=g'(e^{-a_{k1}}) e^{-a_{k1}}  (-1) \frac{\partial  a_{k1}  }{\partial  Q(1)   },
\end{align*}
where we let $g(y) = E1(y)$. We know that $g'(y) = -\frac{e^{-y}}{y}$. Note that $\frac{\partial  a_{k1}  }{\partial  Q(1)   }= \frac{\beta}{ \sigma_k} G_{11}^1 - \frac{\partial  A_{k1}  }{\partial  Q(1)   }$.
Hence 
\begin{align*}
\frac{\partial f(a_{k1}) }{\partial Q(1)} &= -\exp\left[ - e^{-a_{k1}}\right] e^{a_{k1}} e^{-a_{k1}} (-1) \frac{\partial  a_{k1}  }{\partial  Q(1)   }
=\exp\left[ - e^{-a_{k1}}\right]  \left[  \frac{\beta}{ \sigma_k} G_{11}^1 - \frac{\partial  A_{k1}  }{\partial  Q(1)   }  \right].
\end{align*}
Now, 
\begin{align*}
\frac{\partial}{\partial Q(1)}T_\psi (Q)(1)&=
\sum_{k=1}^m \omega_k \sigma_k \left[ \frac{\partial  A_{k1} }{\partial Q(1)}  + \exp\left[ - e^{-a_{k1}}\right]  \left[  \frac{\beta}{ \sigma_k} G_{11}^1 - \frac{\partial  A_{k1}  }{\partial  Q(1)   }  \right] \right]\\
&=
\sum_{k=1}^m \omega_k \sigma_k \left[ \frac{\partial  A_{k1} }{\partial Q(1)}  \bigg(1-  \exp\left[ - e^{-a_{k1}}\right]   \bigg)
+  \exp\left[ - e^{-a_{k1}}\right]   \frac{\beta}{ \sigma_k} G_{11}^1\right]. 
\end{align*}
Note that 
\begin{align*}
&\sum_{k=1}^m \omega_k \sigma_k \frac{\partial  A_{k1} }{\partial Q(1)}  \bigg(1-  \exp\left[ - e^{-a_{k1}}\right]   \bigg)\\
&=\sum_{k=1}^m \omega_k \sigma_k 
\underbrace{
 \frac{\beta}{ \sigma_k  } \sum_{d=1}^J \exp\left[ \frac{v(1,d)+\mu_{dk}}{\sigma_k} - A_{k1}\right]   G_{11}^d
}_\text{$\frac{\partial A_{k1} }{\partial Q(1)} $} 
\bigg(1-  \exp\left[ - e^{-a_{k1}}\right]   \bigg) \\
&=\beta \sum_{d=1}^J G_{11}^d \sum_{k=1}^m \omega_k
\exp\left[ \frac{v(1,d)+\mu_{dk}}{\sigma_k} - A_{k1}\right] \bigg(1-  \exp\left[ - e^{-a_{k1}}\right]   \bigg) =\beta \sum_{d=1}^J G_{11}^d P(d|x=1)\\
&\sum_{k=1}^m \omega_k \sigma_k \exp\left[ - e^{-a_{k1}}\right]  \frac{\beta}{ \sigma_k} G_{11}^1 = \beta \sum_{k=1}^m \omega_k  \exp\left[ - e^{-a_{k1}}\right]  G_{11}^1 =\beta G_{11}^1 P(0|x=1).
\end{align*}
Hence,
\begin{align*}
\frac{\partial}{\partial Q(1)}T_\psi (Q)(1)&=\beta \sum_{d=0}^J G_{11}^d P(d|x=1).
\end{align*}

We can similarly compute other elements. 
\end{proof}


\subsection{Properties of univariate extreme value distributions}\label{EV}
Let $Z \sim \phi(\cdot)$. Then it has zero mean and variance $\frac{\pi^2}{6}$. 
Its density is
$e^{-z-\gamma -e^{-z-\gamma}}$
and 
its cdf is $e^{ -e^{-z-\gamma}  }$, 
where  $\gamma$ is the Euler-Mascheroni constant.

Lemma \ref{lemma:median} and Lemma \ref{lemma:truncated_integral} show how to compute median and truncated integrals of a random variable that follows a mixture of extreme value distributions, which 
is helpful for imposing the location and scale normalizations after a MCMC run.

\begin{lemma}\label{lemma:median}
Median
\begin{enumerate}
	\item Let $X \sim \phi(\cdot;\mu,\sigma)$. Then its median is $\mu - \sigma \log\log 2 -\sigma\gamma$.
	\item Let $X \sim \sum_{k=1}^m \omega_k\phi(\cdot; \mu_k,\sigma_k)$. Then there is no closed form for its median. It has to be solved for via a root-finding algorithm.		
\end{enumerate}
\end{lemma}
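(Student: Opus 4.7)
The plan is to compute the median in part 1 directly from the cumulative distribution function of the scaled and shifted Gumbel density and then explain why the same approach fails to yield a closed form in part 2. Both parts reduce to solving $F(x) = 1/2$ for $x$, where $F$ is the c.d.f. obtained by integrating the density defined in \eqref{eq:phi_gumbel}.

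For part 1, I would first write down the c.d.f. of $X \sim \phi(\cdot;\mu,\sigma)$. Since the standardized density $\phi(z)=\exp(-z-\gamma-e^{-z-\gamma})$ integrates to the c.d.f. $\exp(-e^{-z-\gamma})$ (as is standard for the Gumbel, with the extra $-\gamma$ shift enforcing zero mean as noted in Appendix~\ref{EV}), a change of variables $z=(x-\mu)/\sigma$ gives
\[
F(x) = \exp\bigl(-e^{-(x-\mu)/\sigma-\gamma}\bigr).
\]
Setting $F(x) = 1/2$ and taking logarithms twice yields $-(x-\mu)/\sigma-\gamma = \log\log 2$, from which $x = \mu - \sigma\log\log 2 - \sigma\gamma$. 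This is the only step of real work and is routine.

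For part 2, the c.d.f. of the mixture is the convex combination
\[
F(x) = \sum_{k=1}^m \omega_k \exp\bigl(-e^{-(x-\mu_k)/\sigma_k - \gamma}\bigr),
\]
and the median is the unique solution $x$ of $F(x)=1/2$ (uniqueness follows from strict monotonicity of each summand and hence of $F$). I would simply observe that this is a transcendental equation in $x$ involving iterated exponentials of $x/\sigma_k$ with generally distinct scales $\sigma_k$, so that $x$ cannot be isolated algebraically in closed form; hence one invokes a univariate root-finding procedure (e.g.\ bisection, which is guaranteed to converge due to the monotonicity of $F$).

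No substantive obstacle is anticipated: part 1 is a direct inversion of a one-parameter c.d.f., and part 2 is essentially a statement that the resulting equation is transcendental in the general case. The only minor care needed is to be explicit that the $-\gamma$ shift in the density \eqref{eq:phi_gumbel} propagates into the median formula, which accounts for the $-\sigma\gamma$ term.
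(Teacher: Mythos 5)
Your proposal is correct and matches the paper's proof: both invert the Gumbel c.d.f. $\exp\bigl(-e^{-(x-\mu)/\sigma-\gamma}\bigr)$ at $1/2$ by taking logarithms twice, and the paper likewise offers no formal argument for part 2 beyond the observation that the mixture equation has no algebraic solution. No differences worth noting.
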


\begin{proof}
Let $X \sim \phi(\cdot; \mu,\sigma)$. By definition of median, we want to find $M$ such that $0.5=Pr(X<M)$. Letting $t(x) = e^{-\left(\frac{x-\mu}{\sigma} \right) } e^{-\gamma}$, this is equivalent to  
\begin{align*}
0.5&=\exp\left[ - t(M) \right]  \iff t(M) = \log 2  \iff 	e^{-\left(\frac{M-\mu}{\sigma}\right) } e^{-\gamma} = \log 2 \\
&\iff -\left(\frac{M-\mu}{\sigma} \right) -\gamma = \log\log 2 \iff M = \mu - \sigma \log\log 2 -\sigma\gamma
\end{align*}
\end{proof}


\begin{lemma}\label{lemma:truncated_integral}
Truncated Integral
\begin{enumerate}
	\item Let $X \sim \phi(\cdot; \mu,\sigma)$. Then 
		\begin{align*}
		\int_{M}^{\infty} x \phi(x; \mu,\sigma)dx = \mu - M \exp\left[- e^{-b} \right] + \sigma E1(e^{-b})
		\end{align*}
		where $b= \frac{M-\mu}{\sigma}+\gamma$
	\item Let $X \sim p(\cdot \vert \psi, m)=\sum_{k=1}^m \omega_k\phi(\cdot; \mu_k,\sigma_k)$. Then 	 
		\begin{align*}
		\int_{M}^{\infty} x p(x \vert \psi, m)dx = \sum_{k=1}^m \omega_k   \bigg\{ \mu_k - M \exp\left[- e^{-b_k} \right] + \sigma_k E1(e^{-b_k}) \bigg\}
		\end{align*}
		where $b_k= \frac{M-\mu_k}{\sigma_k}+\gamma$
\end{enumerate}
\end{lemma}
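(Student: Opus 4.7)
The plan is to prove part 1 directly by two changes of variable, and then obtain part 2 as an immediate consequence of linearity of integration against a finite mixture.

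First, I would change variables to $y = (x-\mu)/\sigma$ to reduce the integral to the standard Gumbel form:
\[
\int_M^\infty x \, \phi(x;\mu,\sigma)\, dx \;=\; \mu \int_{y_0}^\infty \phi(y)\, dy \;+\; \sigma \int_{y_0}^\infty y\, \phi(y)\, dy,
\]
where $y_0 = (M-\mu)/\sigma$ and $\phi(y) = e^{-y-\gamma-e^{-y-\gamma}}$. The first integral is $1 - F(y_0)$ for the standard Gumbel CDF $F(y) = \exp(-e^{-y-\gamma})$, which equals $1 - \exp[-e^{-b}]$ with $b = y_0 + \gamma$. This already supplies the $\mu$ and part of the $-M\exp[-e^{-b}]$ term.

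The nontrivial step is the second integral. I would apply the substitution $u = e^{-y-\gamma}$, so that $y = -\ln u - \gamma$, $dy = -du/u$, and $\phi(y)\,dy = -e^{-u}\,du$ after accounting for the sign via reversed limits. This gives
\[
\int_{y_0}^\infty y\, \phi(y)\, dy \;=\; \int_0^{e^{-b}} (-\ln u - \gamma)\, e^{-u}\, du \;=\; -\gamma\bigl(1 - e^{-e^{-b}}\bigr) \;-\; \int_0^{e^{-b}} (\ln u)\, e^{-u}\, du.
\]
For the remaining integral, I would write $\int_0^{e^{-b}} \ln u\, e^{-u}\, du = -\gamma - \int_{e^{-b}}^\infty \ln u\, e^{-u}\, du$ using the classical identity $\int_0^\infty \ln u\, e^{-u}\, du = -\gamma$, and then integrate the tail by parts with $v = \ln u$, $dw = e^{-u}\,du$ to obtain $\int_{e^{-b}}^\infty \ln u\, e^{-u}\, du = -b\, e^{-e^{-b}} + E1(e^{-b})$. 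After the $-\gamma$'s cancel, the second integral reduces cleanly to $-y_0 \exp[-e^{-b}] + E1(e^{-b})$. Adding the two pieces and using $\mu + \sigma y_0 = M$ yields the stated formula $\mu - M\exp[-e^{-b}] + \sigma E1(e^{-b})$.

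Part 2 is then immediate: since $\int_M^\infty x \sum_k \omega_k \phi(x;\mu_k,\sigma_k)\, dx = \sum_k \omega_k \int_M^\infty x\, \phi(x;\mu_k,\sigma_k)\, dx$ by linearity, applying part 1 to each component (with its own $b_k = (M-\mu_k)/\sigma_k + \gamma$) produces the claimed sum. The only delicate step is the $\ln u$ integral in part 1, which is handled by recognizing that the Euler-Mascheroni constant $\gamma$ built into the standard Gumbel normalization is exactly the one that arises from $\int_0^\infty \ln u\, e^{-u}\, du = -\gamma$, so it cancels and leaves a closed form in $E1$ rather than in some more awkward special function.
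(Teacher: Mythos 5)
Your proof is correct and follows essentially the same route as the paper's: reduce to the standard Gumbel by an affine change of variable, substitute $u=e^{-z}$ to turn the mean-type integral into $\int \ln u\, e^{-u}\,du$, invoke $\int_0^\infty \ln u\, e^{-u}\,du=-\gamma$, and integrate the tail by parts to produce $E1$; the only difference is trivial bookkeeping of where the $\gamma$ shift is absorbed. Part 2 by linearity matches the paper as well.
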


\begin{proof}
Let $X \sim \phi(\cdot; \mu,\sigma)$. 
\begin{align*}
\int_{M}^{\infty} x \phi(x; \mu,\sigma)dx  &= \int_M^\infty x \frac{1}{\sigma}e^{-\left(\frac{x-\mu}{\sigma} \right)} e^{-\gamma} \exp\left[ -e^{-\left(\frac{x-\mu}{\sigma} \right)} e^{-\gamma}   \right] dx \\
&=\int_b^\infty \big[ \sigma z +( \mu - \sigma \gamma ) \big] \frac{1}{\sigma} e^{-z} \exp\left[ -e^{-z} \right] \sigma dz \\
&= \sigma \underbrace{  \int_b^\infty   z  e^{-z} \exp\left[ -e^{-z} \right] dz}_{\text{I}}
 + ( \mu - \sigma \gamma )   \underbrace{ \int_b^\infty    e^{-z} \exp\left[ -e^{-z} \right] dz}_{\text{II}}
\end{align*}
where we let $z = \left(\frac{x-\mu}{\sigma} \right) + \gamma$ which means that $x=\sigma z +( \mu - \sigma \gamma ), dx = \sigma dz$. $x=\infty \implies z = \infty, x = M \implies z =  \frac{M-\mu}{\sigma}+\gamma = b$. For computing the first integral, let $y = e^{-z}$ which means that $dz = -\frac{1}{y}dy, z=-\log y$. We have 
\begin{align*}
I = \int_{e^{-b}}^0 \left( -\log y \right) y e^{-y} (-\frac{1}{y}dy) = - \int_0^{e^{-b}} \log y e^{-y}dy
\end{align*}
Recall that $\gamma = -\int_0^\infty \log y e^{-y} dy = -\left( \int_0^{e^{-b}} \log y e^{-y} dy + \int^\infty_{e^{-b}} \log y e^{-y} dy   \right) $. Hence, $I = \gamma + \int^0_{e^{-b}} \log y e^{-y}dy$. By integration by parts, letting $u=\log y, dv=e^{-y} dy \implies du \frac{1}{y} dy, v=-e^{-y}$, 
\begin{align*}
I &= \gamma + \int^0_{e^{-b}} \log y e^{-y}dy = \gamma + \int udv -\int vdu \\
&= \gamma -\big[ \log y e^{-y} \big]^{\infty}_{e^{-b}} + \int^{\infty}_{e^{-b}} \frac{1}{y} e^{-y} dy =\gamma - \big[ 0 - (-b) \exp\left(-e^{-b}\right)\big] + E1\left( e^{-b}\right)\\
&=\gamma -b \exp\left(-e^{-b}\right) + E1\left( e^{-b}\right)
\end{align*}
where by definition $E1(z) = \int_z^\infty \frac{1}{t} e^{-t}dt$. For the second integral, letting $y = e^{-z}$ which means that $dz = -\frac{1}{y}dy$, we have  
\begin{align*}
II &=\int_b^\infty    e^{-z} \exp\left[ -e^{-z} \right] dz \\
& =   \int^0_{e^{-b}} y e^{-y} (-\frac{1}{y}dy)= \int_0^{e^{-b}} e^{-y} dy = -\big( \exp\left( -e^{-b}\right)-1 \big) \\
& = 1 - \exp\left( -e^{-b}\right)
\end{align*}
Finally,
\begin{align*}
\int_{M}^{\infty} x dF(x) &= \sigma I +( \mu - \sigma \gamma ) II =  \mu - M \exp\left[- e^{-b} \right] + \sigma E1(e^{-b})
\end{align*}

\end{proof}

\subsection{Derivatives of the Log-likelihood}
\label{sec:derivatives}

We utilize a built-in HMC package in Matlab. The package requires the parameters to be unbounded. 
Recall that the vector of parameters is $\psi_{1m} = \big(\theta,\sigma,\gamma_k,\mu_{k}, \tilde{\sigma}_k, \; k=1,\ldots,m\big)$, where the weights and the scales are not unbounded. 
For a given fixed $m$, we denote the transformed parameters by $\chi = (\theta, \log \sigma, \alpha_k,\mu_{k}, \log \tilde{\sigma}_k, \; k=1,\ldots,m )$,
where 
\[
\omega_k = \frac{\gamma_k}{\sum_{l=1}^m \gamma_l}=\frac{e^{\alpha_k}}{1+\sum_{\ell=1}^{m-1} e^{\alpha_\ell}}, k=1,\ldots,m 
\text{ and } \alpha_m = 0 
\]
so that all the components of $\chi$ are unbounded.

Let $L(\chi) = \log p(D^n|\chi,m) = \sum_{d,x} n_{dx} \times \log p(d|x;\chi,m)$ 
denote the log-likelihood as a function of $\chi$, where  $n_{dx}$ is the number of decision makers in the data set that chose $d$ at the observed state $x$.
Below, we present the derivatives of $L(\chi)$ with respect to $\chi$ for $J=1$. The computation for $J>1$ can be done similarly. 

\begin{lemma}\label{lemma:derivatives} 

Derivatives of $L(\chi)$ with respect to $\chi$ for $J=1$.

(1) For $k=1,\ldots,m-1$, 
	\begin{align*}
	\frac{\partial L(\chi)}{\partial \alpha_k} = \sum_{k=1}^{m-1} \frac{\partial L(\chi)}{\partial \omega_\ell} \frac{\partial \omega_\ell}{\partial \alpha_k} 
	\mbox{, where}
	\end{align*}
	\begin{align*}
	\frac{\partial L(\chi)}{\partial \omega_\ell}=&
			\sum_x n_{0x} \frac{p(0|x,\ell) - p(0|x,m) +\sum_k \omega_k \frac{\partial p(0|x,k)}{\partial \omega_\ell}}{\sum_{k'} \omega_{k'} p(0|x,k')} 
			\\&+ \sum_x n_{1x} \frac{p(1|x,\ell) - p(1|x,m)+\sum_k \omega_k \frac{\partial p(1|x,k)}{\partial \omega_\ell} }{\sum_{k'} \omega_{k'} p(1|x,k')},
			\end{align*}
			\[
			\frac{\partial p(0|x,k)}{\partial \omega_\ell} =
			\exp\big[-e^{-a_{xk}}-a_{xk} \big]\frac{\beta}{\sigma_k} \sum_{y=1}^K (G_{xy}^0-G_{xy}^1) \frac{\partial Q(y)}{\partial \omega_\ell},
			\]
			\[
			\frac{\partial \omega_\ell}{\partial \alpha_k} = \frac{e^{\alpha_\ell}}{(1+\sum_{s=1}^{m-1}e^{\alpha_s})^2}\big[ (1+\sum_{s\ne k}^{m-1} 	
			e^{\alpha_s}) 1(k=\ell )  -e^{\alpha_k} 1(k \ne \ell)  \big].
			\]

(2) For $k=1,\ldots,m$,
	\begin{align*}
	 \frac{\partial L(\chi)}{\partial \mu_k}=\sum_x n_{0x} \frac{\sum_\ell \omega_\ell \frac{\partial p(0|x,\ell)}{\partial \mu_k}}{\sum_{k'} \omega_{k'} p(0|x,k')} 
	 +\sum_x n_{1x} \frac{\sum_\ell \omega_\ell \frac{\partial p(1|x,\ell)}{\partial \mu_k}}{\sum_{k'} \omega_{k'} p(1|x,k')} 
	\mbox{, where}
	 \end{align*}
\[
\frac{\partial p(0|x,\ell)}{\partial \mu_k} =
				  \exp\big[-e^{-a_{x\ell}}-a_{x\ell} \big]\frac{\partial a_{x\ell}}{\partial \mu_k},
\]
\[
\frac{\partial a_{x\ell}}{\partial \mu_k} =\frac{\beta}{\sigma_\ell}\sum_{y=1}^K (G_{xy}^0-G_{xy}^1) \frac{\partial Q(y)}{\partial \mu_k}-
\frac{1}{\sigma_\ell}1(k=\ell).
\]	

(3) For $s_k=\log \tilde{\sigma}_k$, $k=1,\ldots,m$,
	\begin{align*}
	 \frac{\partial L(\chi)}{\partial s_k}=\frac{\partial L(\chi)}{\partial \sigma_k} \sigma_k
	\mbox{, where}
	 \end{align*}		 
\[
\frac{\partial L(\chi)}{\partial \sigma_k}=
				  \sum_x n_{0x} \frac{\sum_\ell \omega_\ell \frac{\partial p(0|x,\ell)}{\partial \sigma_k}}{\sum_{k'} \omega_{k'} p(0|x,k')} 
				+\sum_x n_{1x} \frac{\sum_\ell \omega_\ell \frac{\partial p(1|x,\ell)}{\partial \sigma_k}}{\sum_{k'} \omega_{k'} p(1|x,k')}, 
				\]
			\[
			\frac{\partial p(0|x,\ell)}{\partial \sigma_k} =
				  \exp\big[-e^{-a_{x\ell}}-a_{x\ell} \big]\frac{\partial a_{x\ell}}{\partial \sigma_k},
					\]
			\[
			\frac{\partial a_{x\ell}}{\partial \sigma_k} =	\frac{\beta}{\sigma_\ell} \sum_{y=1}^K (G_{xy}^0-G_{xy}^1) 
			\frac{\partial Q(y)}{\partial 					\mu_k}+\frac{1}{\sigma_\ell}[\gamma - a_{xk}]1(k=\ell).
			\]
		
(4) With respect to $s=\log(\sigma)$
	\begin{align*}
	 \frac{\partial L(\chi)}{\partial s}=\sum_{\ell=1}^m \frac{\partial L(\chi)}{\partial s_\ell}. 
	 \end{align*}		

(5) The derivatives of the \textit{Emax} function can be computed by the implicit function theorem: 
\begin{align*}
\frac{\partial Q(x)}{\partial \omega_k}=- \frac{\frac{\partial f_x}{\partial \omega_k}}{\frac{\partial f_x}{\partial Q(x)} },
\quad 
\frac{\partial Q(x)}{\partial \mu_k}=- \frac{\frac{\partial f_x}{\partial \mu_k}}{\frac{\partial f_x}{\partial Q(x)} },
\quad
\frac{\partial Q(x)}{\partial \sigma_k}=- \frac{\frac{\partial f_x}{\partial \sigma_k}}{\frac{\partial f_x}{\partial Q(x)} },
\end{align*}	
where $f_x = Q(x) - \sum_{k=1}^m \omega_k \sigma_k \big[ A_{xk} +E1(e^{-a_{xk}}) \big]$	and 
\[
\frac{\partial f_x}{\partial Q(x)}= 1-\beta G_{xx}^1 -\beta(G_{xx}^0 -G_{xx}^1) \sum_{k=1}^m\omega_k \exp\big[ -e^{-a_{xk}} \big],
\]
\[
\frac{\partial f_x}{\partial \omega_k} = -\sigma_k \big[A_{xk} +E1(e^{-a_{xk}}) \big] +  \sigma_m \big[A_{xm} +E1(e^{-a_{xm}}) \big], 
\]
\[ 
\frac{\partial f_x}{\partial \mu_k} = -\omega_k \big[1 - \exp(-e^{-a_xk}) \big],
\]
\[
 \frac{\partial f_x}{\partial \sigma_k} =-\omega_k \big[E1(e^{-a_{xk}}) + \exp(-e^{-a_xk}) (\gamma - a_{xk})\big].
\]
\end{lemma}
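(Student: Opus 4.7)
The proof is a fairly mechanical chain-rule exercise built on the closed-form expressions for $p(d|x)$ and $Q(x)$ derived in Lemma~\ref{lemma:ccp_emax}. The plan is to start from
\[
L(\chi) = \sum_{d,x} n_{dx}\,\log p(d|x;\chi,m), \qquad p(d|x) = \sum_{k=1}^m \omega_k\, p(d|x,k),
\]
so that $\partial L/\partial \zeta = \sum_x n_{0x} \cdot [\sum_\ell (\cdots)]/[\sum_{k'}\omega_{k'}p(0|x,k')] + \sum_x n_{1x}\cdot(\cdots)$ for any scalar parameter $\zeta$, and the task reduces to differentiating the component probabilities $p(d|x,k)$. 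For $d=0$, $p(0|x,k)=\exp[-e^{-a_{xk}}]$, so $\partial p(0|x,k)/\partial \zeta = \exp[-e^{-a_{xk}}-a_{xk}]\cdot \partial a_{xk}/\partial \zeta$. For $d=1$, the same structure applies with an extra $\exp\{(u(x,1)+\beta G_x^1 Q + \mu_{1k})/\sigma_k - A_{xk}\}$ factor. Since $p(0|x,k)+p(1|x,k)=1$, it suffices to present the $d=0$ derivatives; the $d=1$ case follows by sign flips and bookkeeping.

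Next, for each concrete parameter I would compute $\partial a_{xk}/\partial \zeta$ directly from $a_{xk}=[u(x,0)+\beta G_x^0 Q]/\sigma_k + \gamma - A_{xk}$. The key observations are: (i) $a_{xk}$ depends on $\mu_\ell$ or $\sigma_\ell$ only when $\ell=k$, contributing the $1(k=\ell)$ terms; (ii) via $Q$, every parameter affects every $a_{xk}$, giving the $\beta(G_{xy}^0-G_{xy}^1)\partial Q(y)/\partial\zeta$ terms (the $G_x^1$ appears because $A_{xk}$ contains $u(x,1)+\beta G_x^1 Q$ for $J=1$); (iii) the scale derivative picks up the extra $(\gamma-a_{xk})/\sigma_\ell$ from differentiating the $1/\sigma_k$ factor in $a_{xk}$ at $\ell=k$. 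Parts~(3) and~(4) are then immediate: $\partial L/\partial s_k = \sigma_k\,\partial L/\partial \sigma_k$ by $\sigma_k = e^{s_k}\sigma$, and since $\sigma_k = \tilde\sigma_k\sigma$ with $s=\log\sigma$, we get $\partial L/\partial s = \sum_\ell \partial L/\partial s_\ell$ by the chain rule on each $\sigma_\ell$. Part~(1) is handled by the standard softmax Jacobian for $\omega_\ell = e^{\alpha_\ell}/(1+\sum_{s<m}e^{\alpha_s})$ with $\alpha_m=0$, which gives the stated $\partial \omega_\ell/\partial \alpha_k$; here the direct $\omega_\ell$-derivative of $p(d|x)$ produces the $p(d|x,\ell)-p(d|x,m)$ term (because $\omega_m = 1-\sum_{\ell<m}\omega_\ell$), while the indirect effect through $Q$ yields the $\sum_k \omega_k\,\partial p(d|x,k)/\partial \omega_\ell$ term.

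Finally, Part~(5) comes from the implicit function theorem applied to $f_x(Q,\chi)=Q(x)-T_\chi(Q)(x)=0$, viewing $Q(\chi)$ as an implicit function defined by the Bellman fixed-point equation \eqref{eq:emax}. The contraction property of $T_\chi$ (with modulus $\beta<1$) guarantees $\partial f_x/\partial Q(x)\neq 0$, so $\partial Q(x)/\partial \zeta = -(\partial f_x/\partial \zeta)/(\partial f_x/\partial Q(x))$. The displayed formulas for $\partial f_x/\partial Q(x)$, $\partial f_x/\partial \omega_k$, $\partial f_x/\partial \mu_k$, and $\partial f_x/\partial \sigma_k$ then follow by term-by-term differentiation of $T_\chi(Q)(x)=\sum_k \omega_k \sigma_k[A_{xk}+E1(e^{-a_{xk}})]$, using $E1'(z)=-e^{-z}/z$, the identity $\partial A_{xk}/\partial\zeta$ computed analogously to above, and noting again that for $J=1$ the relevant conditional choice probabilities $\omega_k\exp(-e^{-a_{xk}})$ reassemble into $\sum_k\omega_k p(0|x,k)$.

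The main obstacle is not conceptual but bookkeeping: the parameter $\sigma_k=\tilde\sigma_k\sigma$ enters both directly in $a_{xk}$ and $A_{xk}$ and indirectly through $Q$, while $\omega_\ell$ enters $p(d|x)$ directly via the mixture weights and indirectly through $Q$ (since $Q$ itself is a $\omega$-weighted sum). The cleanest way to avoid sign and indexing errors is to write $a_{xk}$ and $A_{xk}$ as functions of the primitives, differentiate once, and then assemble. No step requires new ideas beyond the chain rule and Lemmas~\ref{lemma:ccp_emax}--\ref{lemma:NK}, so the proof reduces to verifying that each displayed formula matches the assembled chain-rule expression.
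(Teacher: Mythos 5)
Your proposal is correct and follows essentially the same route as the paper's proof: decompose $L$ through the mixture representation $p(d|x)=\sum_k\omega_k p(d|x,k)$, differentiate the component CCPs via $a_{xk}$ separating direct effects (the $1(k=\ell)$ terms) from indirect effects through $Q$, apply the softmax Jacobian and the log-scale chain rules for parts (1), (3), (4), and use the implicit function theorem on the Bellman fixed-point equation for part (5). The only detail worth noting is that the paper's own write-up contains a slip where it states $\partial p(1|x,k)/\partial\omega_\ell = 1 - \partial p(0|x,k)/\partial\omega_\ell$; your "sign flip" via $p(0|x,k)+p(1|x,k)=1$ is the correct handling.
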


\begin{proof}
For (1), 
\begin{align*}
	\frac{\partial L(\chi)}{\partial \alpha_k} &= 
		\sum_{k=1}^{m-1} \frac{\partial L(\chi)}{\partial \omega_\ell} \frac{\partial \omega_\ell}{\partial \alpha_k}, \\
	\frac{\partial L(\chi)}{\partial \omega_\ell}&=
		\sum_x n_{0x} \frac{\frac{\partial }{\partial \omega_\ell}  \sum_k \omega_k p(0|x,k) }{\sum_{k'} \omega_{k'} p(0|x,k')} 
		+ \sum_x n_{1x} \frac{\frac{\partial }{\partial \omega_\ell}  \sum_k \omega_k p(1|x,k)  }{\sum_{k'} \omega_{k'} p(1|x,k')},\\	
 \sum_k^m \omega_k p(d|x,k) &= 
	 	\sum_k^{m-1} \omega_k p(d|x,k) +(1- \sum_k^{m-1} \omega_k )p(d|x,m), 	\\				
  \frac{\partial}{\partial \omega_\ell} \sum_k^m \omega_k p(d|x,k)  	 
	  	&=p(d|x,\ell) 	+\sum_k^{m-1} \omega_k \frac{\partial p(d|x,k)}{\partial \omega_\ell}
	  	-p(d|x,m) +(1- \sum_k^{m-1} \omega_k )\frac{\partial p(d|x,m)}{\partial \omega_\ell} 	\\
	  &=p(d|x,\ell)  -p(d|x,m)+\sum_k^{m} \omega_k \frac{\partial p(d|x,k)}{\partial \omega_\ell}	  \\	   	  
	 \implies \frac{\partial L(\chi)}{\partial \omega_\ell}&=	  	  
			\sum_x n_{0x} \frac{p(0|x,\ell) - p(0|x,m) +\sum_k \omega_k \frac{\partial p(0|x,k)}{\partial \omega_\ell}}{\sum_{k'} \omega_{k'} p(0|x,k')} \\
			&+ \sum_x n_{1x} \frac{p(1|x,\ell) - p(1|x,m)+\sum_k \omega_k \frac{\partial p(1|x,k)}{\partial \omega_\ell} }{\sum_{k'} \omega_{k'} p(1|x,k')},	 \\	  
	  \frac{\partial p(0|x,k)}{\partial \omega_\ell} &=
	  	\exp\big[-e^{-a_{xk}}-a_{xk} \big]\frac{\partial a_{xk}}{\partial \omega_\ell}.	  
\end{align*}
For $J=1$, we have, 
\begin{align*}		
	a_{x\ell}&=\frac{1}{\sigma_\ell}\big[v(x,0)-v(x,1)-\mu_\ell \big] +\gamma \\		
	&=\frac{1}{\sigma_\ell}\left[u(x,0)-u(x,1)+\beta\sum_{y=1}^K (G_{xy}^0-G_{xy}^1) Q(y) -\mu_\ell \right] +\gamma.		 	 			
\end{align*}
Hence,
\begin{align*}
 \frac{\partial p(0|x,k)}{\partial \omega_\ell} &=			
	  \exp\big[-e^{-a_{xk}}-a_{xk} \big]\frac{\beta}{\sigma_k} \sum_{y=1}^K (G_{xy}^0-G_{xy}^1) \frac{\partial Q(y)}{\partial \omega_\ell}. 
\end{align*}
Note that 
\begin{align*}	  
\frac{\partial p(1|x,k)}{\partial \omega_\ell} &= 
	 	1- \frac{\partial p(0|x,k)}{\partial \omega_\ell}. 
\end{align*}
Moreover, recall that $\omega_\ell = \frac{e^{\alpha_\ell}}{1+\sum_{s=1}^{m-1}e^{\alpha_s}}$.\\
If $k = \ell$, then $\frac{\partial \omega_\ell}{\partial \alpha_k} 
= \frac{e^{\alpha_\ell} (1+\sum_{s=1}^{m-1}e^{\alpha_s}) -e^{\alpha_\ell} e^{\alpha_\ell}      }{(1+\sum_{s=1}^{m-1}e^{\alpha_s})^2} 
= \frac{ e^{\alpha_\ell} (1+\sum_{s\ne \ell}^{m-1} e^{\alpha_s}   )}{(1+\sum_{s=1}^{m-1}e^{\alpha_s})^2}$. \\
If $k \ne \ell$, then $\frac{\partial \omega_\ell}{\partial \alpha_k} = \frac{-e^{\alpha_\ell} e^{\alpha_k}  }{(1+\sum_{s=1}^{m-1}e^{\alpha_s})^2} $.

For (2),
\begin{align*}
	\frac{\partial L(\chi)}{\partial \mu_k}&=
		\sum_x n_{0x} \frac{\frac{\partial }{\partial \mu_k}  \sum_k \omega_k p(0|x,k) }{\sum_{k'} \omega_{k'} p(0|x,k')} 
		+ \sum_x n_{1x} \frac{\frac{\partial }{\partial \mu_k}  \sum_k \omega_k p(1|x,k)  }{\sum_{k'} \omega_{k'} p(1|x,k')}\\
	&=\sum_x n_{0x} \frac{\sum_\ell \omega_\ell \frac{\partial p(0|x,\ell)}{\partial \mu_k}}{\sum_{k'} \omega_{k'} p(0|x,k')} 
	 	+\sum_x n_{1x} \frac{\sum_\ell \omega_\ell \frac{\partial p(1|x,\ell)}{\partial \mu_k}}{\sum_{k'} \omega_{k'} p(1|x,k')}, 	\\	
 	\frac{\partial p(0|x,\ell)}{\partial \mu_k} &=
	  	\exp\big[-e^{-a_{x\ell}}-a_{x\ell} \big]\frac{\partial a_{x\ell}}{\partial \mu_k},\\
    \frac{\partial a_{x\ell}}{\partial \mu_k}&=
    \begin{cases}
      \frac{\beta}{\sigma_\ell}\sum_{y=1}^K (G_{xy}^0-G_{xy}^1) \frac{\partial Q(y)}{\partial \mu_k}-\frac{1}{\sigma_\ell}, & \text{if}\ k=\ell \\
      \frac{\beta}{\sigma_\ell}\sum_{y=1}^K (G_{xy}^0-G_{xy}^1) \frac{\partial Q(y)}{\partial \mu_k}, & \text{otherwise.}
    \end{cases}
  \end{align*}

(3) Recall that $\sigma_k = \sigma \tilde{\sigma}_k$. $\tilde{\sigma}_k = e^{s_k}$, so $\frac{\partial \tilde{\sigma}_k}{\partial s_k}=\tilde{\sigma}_k$. We have
\begin{align*}
	\frac{\partial L(\chi)}{\partial s_k}&=\frac{\partial L(\chi)}{\partial \sigma_k}\frac{\partial \sigma \tilde{\sigma}_k}{\partial \tilde{\sigma}_k	} 
	\frac{\partial \tilde{\sigma}_k}{\partial s_k}=	\frac{\partial L(\chi)}{\partial \sigma_k}\sigma \tilde{\sigma}_k=	\frac{\partial L(\chi)}{\partial \sigma_k}\sigma_k,	\\
	\frac{\partial L(\chi)}{\partial \sigma_k}&=
		\sum_x n_{0x} \frac{\frac{\partial }{\partial \sigma_k}  \sum_k \omega_k p(0|x,k) }{\sum_{k'} \omega_{k'} p(0|x,k')} 
		+ \sum_x n_{1x} \frac{\frac{\partial }{\partial \sigma_k}  \sum_k \omega_k p(1|x,k)  }{\sum_{k'} \omega_{k'} p(1|x,k')}\\
	&=\sum_x n_{0x} \frac{\sum_\ell \omega_\ell \frac{\partial p(0|x,\ell)}{\partial \sigma_k}}{\sum_{k'} \omega_{k'} p(0|x,k')} 
	 	+\sum_x n_{1x} \frac{\sum_\ell \omega_\ell \frac{\partial p(1|x,\ell)}{\partial \sigma_k}}{\sum_{k'} \omega_{k'} p(1|x,k')}, 	\\
 	\frac{\partial p(0|x,\ell)}{\partial \sigma_k} &=
	  	\exp\big[-e^{-a_{x\ell}}-a_{x\ell} \big]\frac{\partial a_{x\ell}}{\partial \sigma_k},\\
    \frac{\partial a_{x\ell}}{\partial \sigma_k}&=
    \begin{cases}
      \frac{1}{\sigma_\ell}\left[  \gamma - a_{xk} + \beta\sum_{y=1}^K (G_{xy}^0-G_{xy}^1) \frac{\partial Q(y)}{\partial \sigma_k}  \right], & \text{if}\ k=\ell \\
      \frac{\beta}{\sigma_\ell} \sum_{y=1}^K (G_{xy}^0-G_{xy}^1) \frac{\partial Q(y)}{\partial \sigma_k} , & \text{otherwise.}
    \end{cases}
\end{align*}  

(4) Note that $\sigma = e^{s}$, so $\frac{\partial \sigma}{\partial s}=\sigma$
\begin{align*}
\frac{\partial L(\chi)}{\partial s} =\sum_{k=1}^m \frac{ \partial L(\chi) }{\partial \sigma_k} \frac{\partial \sigma \tilde{\sigma}_k}{\partial \sigma} \frac{\partial \sigma}{\partial s}=\sum_{k=1}^m \frac{ \partial L(\chi) }{\partial \sigma_k}\sigma_k =\sum_{k=1}^m \frac{ \partial L(\chi) }{\partial s_k}.
\end{align*}

(5) Recall that
\[Q(x) = \sum_{k=1}^m \omega_k \sigma_k \big[ A_{xk} +E1(e^{-a_{xk}}) \big],\]
\[f_x = Q(x)-\sum_{k=1}^m \omega_k \sigma_k \big[ A_{xk} +E1(e^{-a_{xk}}) \big],\]
\[a_{xk} =\frac{1}{\sigma_k}\left[u(x,0)-u(x,1)+\beta\sum_{y=1}^K (G_{xy}^0-G_{xy}^1) Q(y) -\mu_k \right] +\gamma, \]
\[A_{xk} =\frac{1}{\sigma_k}\left[u(x,1) + \beta \sum_{y=1}^K G_{xy}^1 Q(y) +\mu_k \right].\]
Hence, we have  
\begin{align*}
\frac{\partial f_x}{\partial Q(x)} &=1-\sum_{k=1}^m \omega_k \sigma_k \left[ \frac{\partial A_{xk}}{\partial Q(x)} +\frac{\partial E1(e^{-a_{xk}})}{\partial Q(x)} \right],\\
 \frac{\partial A_{xk}}{\partial Q(x)} &=\frac{\beta}{\sigma_k} G_{xx}^1,\\
 \frac{\partial E1(e^{-a_{xk}})}{\partial Q(x)}&=\exp\big[ -e^{-a_{xk}}\big] \frac{\partial a_{xk}}{\partial Q(x)}=\exp\big[ -e^{-a_{xk}}\big] \frac{\beta}{\sigma_k}(G_{xx}^0 - G_{xx}^1).\\
 \implies \frac{\partial f_x}{\partial Q(x)} &= 1-\beta G_{xx}^2 -\beta(G_{xx}^1 -G_{xx}^2) \sum_{k=1}^m\omega_k \exp\big[ -e^{-a_{xk}} \big].
  \end{align*}
It is easy to show that 
$\frac{\partial f_x}{\partial \omega_k} = -\sigma_k \big[A_{xk} +E1(e^{-a_{xk}}) \big] +  \sigma_m \big[A_{xm} +E1(e^{-a_{xm}}) \big]$ \\and  
$\frac{\partial f_x}{\partial \mu_k} =-\omega_k\sigma_k\left[ \frac{\partial A_{xk}}{\partial \mu_k} +\frac{\partial E1(e^{-a_{xk}})}{\partial \mu_k}    \right]=-\omega_k \big[1 - \exp(-e^{-a_xk}) \big]$.

Finally, 
\begin{align*}
\frac{\partial f_x}{\partial \sigma_k} &=-\omega_k\big[ A_{xk} + E1(e^{-a_{xk}})   \big] -\omega_k\sigma_k\left[ \frac{\partial A_{xk}}{\partial \sigma_k} +\frac{\partial E1(e^{-a_{xk}})}{\partial \sigma_k}    \right],\\
\frac{\partial A_{xk}}{\partial \sigma_k} &=\frac{1}{\sigma_k^2}\big[v(x,1) +\mu_k \big]= \frac{1}{\sigma_k}A_{xk},\\
 \frac{\partial E1(e^{-a_{xk}})}{\partial Q(x)}&=\exp\big[ -e^{-a_{xk}}\big] \frac{\partial a_{xk}}{\partial \sigma_k}=\exp\left[ -e^{-a_{xk}}\right]\left[ -\frac{1}{\sigma_k^2}(v(x,0)-v(x,1) -\mu_k)  \right] \\
 &=\exp\big[ -e^{-a_{xk}}\big] \frac{1}{\sigma_k}\big[\gamma- a_{xk} \big], \\
 \implies \frac{\partial f_x}{\partial \sigma_k} &=-\omega_k \big[E1(e^{-a_{xk}}) + \exp(-e^{-a_xk}) (\gamma - a_{xk})\big].
 \end{align*}

\end{proof}

\subsection{The Prior for the Transformed Mixing Weights}
\label{prior_weight}

We want to define a prior on $\big( \alpha_1,\dots, \alpha_{m-1} \big)$ which implies the prior $\big( \omega_1,\ldots,\omega_m\big) \sim Dir(a_1,\ldots,a_m)$ or equivalently $\gamma_\ell \sim Gamma(a_\ell,1)$ for $\ell=1,\ldots,m$. Note that the inverse map $g(\alpha)= \omega$ is defined by $\omega_\ell = \frac{e^{\alpha_\ell}}{1+\sum_{s=1}^{m-1}e^{\alpha_s}}, \text{ for } \ell = 1,\ldots, m-1$. Denote the Jacobian $A(\alpha) = \frac{dg(\alpha)}{d\alpha}$. By the change of variable formula, we have $f_\alpha(\alpha) 
= f_\omega(g(\alpha)) det \frac{dg(\alpha)}{d\alpha}
= f_\omega(g(\alpha)) det A(\alpha)
$. 

In order to have $f_\omega(\omega)=Dir(\omega_1,\ldots,\omega_m; \bar{\omega}_1,\ldots,\bar{\omega}_m)$, we can use the above expression for the density $f_\alpha(\alpha)$ to determine the desired prior for $\alpha_1,\ldots,\alpha_m$.

In addition, for using HMC, we need the first order derivative $\frac{\partial f_\alpha(\alpha) }{\partial \alpha}$. Note that in general, $\frac{\partial}{\partial \alpha_j} det A(\alpha) = det A(\alpha) tr\bigg(A(\alpha)^{-1}\frac{\partial}{\partial \alpha_j} A(\alpha)  \bigg)$. Hence, $\frac{\partial f_\alpha(\alpha) }{\partial \alpha_j}=\bigg[ \frac{\partial}{\partial \alpha_j} f_\omega(g(\alpha)) \bigg]det A(\alpha) + f_\omega(g(\alpha)) \bigg[ det A(\alpha) tr\bigg(A(\alpha)^{-1}\frac{\partial}{\partial \alpha_j} A(\alpha)  \bigg)\bigg]$.


\section{Supplement. Implementation Details for Rust's model}
\label{sec:rust_implement_details}

\subsection{\citet{NoretsTang2013}}
\citet{NoretsTang2013} showed that if we assume that the distribution is unknown, then the preference-parameter $\theta = (\theta_0,\theta_1)$ is only set-identified and proposed an algorithm to compute the identified set. In Section 5.1 of their paper, they applied the method to the Rust's model. In their paper, the following setting of the utility function is used:
\begin{align*}
u(x,0) &= \theta_0 + \theta_1 x +\Delta \epsilon, \\
u(x,1) &=0,
\end{align*}
where $\theta_0$ and $\theta_1$ are the data generating values of preference parameters. The term $ \Delta \epsilon = \epsilon_1 - \epsilon_2$ follows some unknown distribution $F$, which is assumed to have the same location-scale normalization as the logistic distribution. Specifically, $\int z dF(z) = 0$ and $\int_{M_F}^{\infty} z dF(z) = \log 2$, where $M_F$ is the median of $F$ or $F(M_F) = 0.5$. 
Figures \ref{fig_Rust_id} and \ref{fig_Rust_id3D} show the identified set of preference-parameters computed by their algorithm. For each $\theta$ inside of the black lines, there is a corresponding unknown distribution $F$ such that the pair $(\theta,F)$ implies the true vector of CCPs.

\subsection{MCMC convergence} 
Let $N$ be the number of samples per state. We generate the data by multiplying $N$ by the true CCP. We run MCMC with variable $m$ with $N \in \{ 3, 10 \}$. 
After each jump proposal block, we run 10 iterations of the HMC block. We obtained 500,000 draws. 
When checking for convergence of the chain, we have to be careful because we are using a mixture model. 
It is well-known that if the likelihood of a mixture model with $m$ components has one mode for a fixed labeling of the components, then it can have $m!$ modes because the likelihood is invariant to a re-labeling of the components and there are $m!$ ways to label components.

Although it is not possible to empirically detect label-switching(s) of our chain, if there was label-switching(s), it would be misleading, for example, to focus our attention on $\mu_1$ when checking convergence of the chain. 
\citet{Geweke:06} points out that this is not a problem as long as both the object of interest and the prior distribution are permutation-invariant. 
Note that we are using exchangeable priors and the object of our interest, the density estimate, is also permutation-invariant. 
We conduct a convergence diagnosis on the mean of $\epsilon$ or $\sum_{k=1}^m \omega_k \mu_k$ which is a permutation-invariant object. 
To check the convergence we perform the mean equality test for the first 10\% and the last 50\% of the samples. 
We conclude that the HMC samples for both $N=3$ and $N=10$ come from stationary distributions after a burn-in period of 10,000 draws.
 
\FloatBarrier


\begin{figure}[ht] 
  \begin{subfigure}[b]{\linewidth}
    \centering
    \includegraphics[width=0.55\linewidth]{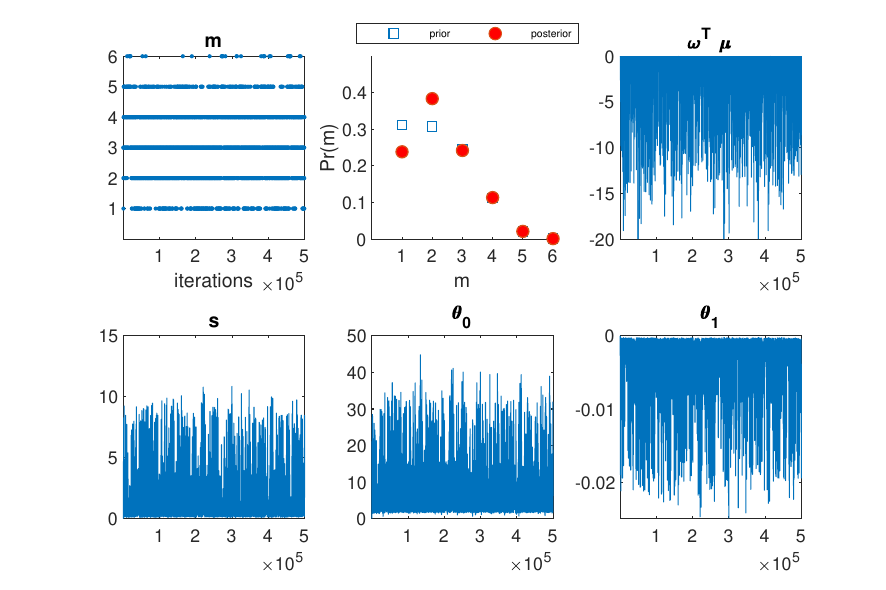} 
    \caption{$N=3$} 
    \vspace{4ex}
  \end{subfigure}
         \hfill
  \begin{subfigure}[b]{\linewidth}
    \centering
    \includegraphics[width=0.55\linewidth]{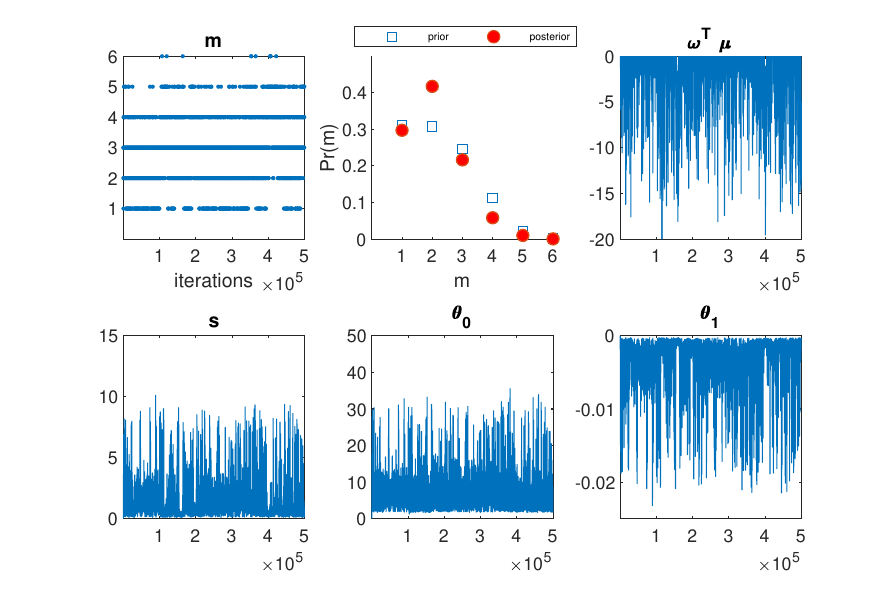} 
    \caption{$N=10$} 
  \end{subfigure}
  \caption{Posterior draws from Rust example. 
  Trace plot of $m$ (upper-left), p.m.f. of $m$ (upper-middle), trace plots of $\sum_{k=1}^m \omega_k \mu_k$ (upper-right), $s$ (bottom-left), $\theta_0$ (bottom-middle), and $\theta_1$ (bottom-right). }
  \label{fig_Rust_trace} 
\end{figure}

\FloatBarrier

\section{Supplement. Implementation Details for Gilleskie's Model}
\subsection{More on State Transitions}\label{state}

The individual contracts an illness and moves to the state $x=(1,0,0)$ with probability 
$\pi^S(H) = 1/[1+\exp( \delta'H)]$.
In each illness period $t \in \{1,\ldots,T\}$, the individual recovers and returns to the state of being well with probability 
$\pi^W(x_t,d_t) = \exp( \eta^T E(x_t,d_t) )/[1+\exp( \eta^T E(x_t,d_t) )]$, 
where $\eta^T E(x_t,d_t)
=
\eta_0 
+ \eta_1v_{t+1} +\eta_2 (v_{t+1} )^2
+ \eta_3a_{t+1} + \eta_4 (a_{t+1})^2 
+ \eta_5 v_{t+1}a_{t+1}
+\eta_6 t +\eta_7 t^2 +\eta_8 t^3
+\xi' H$.

As shown in Table \ref{num_states}, at the beginning of time $t$, there are $t^2$ possible values of the current state. 
\begin{table}[h]
\begin{center}
\begin{tabular}{ c c c }
\hline
 time & value & \# possible values \\ 
\hline 
 t=0 & $(0,0,0)$ & 1 \\ 
 t=1 & $(1,0,0)$ & 1 \\  
 t=2 & $(2,v,a)$ & $2^2=4$ ($v=0,1,a=0,1$) \\  
 t=3 & $(3,v,a)$ & $3^2=9$ ($v=0,1,2,a=0,1,2$) \\  
 \vdots & \vdots & \vdots \\
 t & $(t,v,a)$ & $t^2$ ($v=0,1,\ldots,t-1,a=0,1,\ldots,t-1$) \\  
 \vdots & \vdots & \vdots \\
 T & $(T,v,a)$ & $T^2$ ($v=0,1,\ldots,T-1,a=0,1,\ldots,T-1$) \\   
 \hline 
 Total &  & $K =1+1+2^2+3^2+\cdots T^2=1+\sum_{j=1}^Tj^2=1+\frac{T(T+1)(2T+1)}{6}$ \\ 
\end{tabular}
\end{center}

\caption{State transitions}
\label{num_states}
\end{table}

\FloatBarrier
We summarize the state transitions below. 

\begin{enumerate}
\item The initial state is $(0,0,0)$.
\item If the state is currently at $(0,0,0)$,
\begin{itemize}
\item wp $\pi^S$, $(0,0,0) \to (1,0,0),$
\item wp $1-\pi^S$, $(0,0,0) \to (0,0,0)$.
\end{itemize}

\item In general, for $t=1,\ldots,T-1$, if currently at $(t,v_t,a_t)$,

\begin{itemize}
\item wp $1-\pi^W(x_t,d_t)$,
\begin{equation*}
    (t,v_t,a_t) \to
    \begin{cases}
      (t+1,v_t,a_t), & \text{if}\ \ d_t=0 \\
      (t+1,v_t+1,a_t), & \text{if}\ \ d_t=1 \\
      (t+1,v_t,a_t+1), & \text{if}\ \ d_t=2 \\
      (t+1,v_t+1,a_t+1), & \text{if}\ \ d_t=3 
    \end{cases}
\end{equation*}	
\item wp $\pi^W(x_t,d_t)$, $(t,v_t,a_t) \to (0,0,0)$.
\end{itemize}
 
\item The end of an episode
\begin{align*}
(T,v,a) &\to (0,0,0) \quad \forall v, a,\\
(t,T-1,a) &\to (0,0,0) \quad \forall t, a,\\
(t,v,T-1) &\to (0,0,0) \quad \forall t,v.
\end{align*}

\end{enumerate}

Figure \ref{G_example} shows the state transition matrices when $T=2$. 
In this case, the possible states are $x=(t,v,a) \in \{ 000, 100,   200, 201,210,211 \}$. The number of states is $K=6$. 

\begin{figure}
\begin{center}

\resizebox{0.9\columnwidth}{!}{%
\begin{tabular}{ c | c c c c c c}
  \multicolumn{7}{c}{\bm{$d=0$}} \\
      & 000 & 100 &200 &201&210&211 \\ 
 \hline
 000 & $1-\pi^S$ & $\pi^S$  &0 &0&0&0\\  
 100 & $\pi^W(100,0)$ & 0 &$1-\pi^W(100,0)$  &0&0&0\\  
 200 & 1 & 0  &0 &0&0 &0\\  
 201 & 1 & 0  &0 &0&0&0 \\
  210 & 1 & 0  &0 &0&0&0 \\
    211 & 1 & 0  &0 &0&0&0 
 \end{tabular}
\quad 
\begin{tabular}{ c | c c c c c c}
  \multicolumn{7}{c}{\bm{$d=1$}} \\
      & 000 & 100 &200 &201&210&211 \\ 
 \hline
 000 & $1-\pi^S$ & $\pi^S$  &0 &0&0&0\\  
 100 & $\pi^W(100,1)$ & 0 &0  &0&$1-\pi^W(100,1)$&0\\  
 200 & 1 & 0  &0 &0&0 &0\\  
 201 & 1 & 0  &0 &0&0&0 \\
  210 & 1 & 0  &0 &0&0&0 \\
      211 & 1 & 0  &0 &0&0&0 
 \end{tabular}
 }
\quad 
\resizebox{0.9\columnwidth}{!}{%
\begin{tabular}{ c | c c c c c c}
  \multicolumn{7}{c}{\bm{$d=2$}} \\
      & 000 & 100 &200 &201&210&211 \\ 
 \hline
 000 & $1-\pi^S$ & $\pi^S$  &0 &0&0&0\\  
 100 & $\pi^W(100,2)$ & 0 &0  &$1-\pi^W(100,2)$&0&0\\  
 200 & 1 & 0  &0 &0&0 &0\\  
 201 & 1 & 0  &0 &0&0&0 \\
  210 & 1 & 0  &0 &0&0&0 \\
      211 & 1 & 0  &0 &0&0&0   
 \end{tabular}
\quad 
\begin{tabular}{ c | c c c c c c}
  \multicolumn{7}{c}{\bm{$d=3$}} \\
      & 000 & 100 &200 &201&210&211 \\ 
 \hline
 000 & $1-\pi^S$ & $\pi^S$  &0 &0&0&0\\  
 100 & $\pi^W(100,3)$ & 0 &0  &0&0&$1-\pi^W(100,3)$\\  
 200 & 1 & 0  &0 &0&0 &0\\  
 201 & 1 & 0  &0 &0&0&0 \\
  210 & 1 & 0  &0 &0&0&0 \\
      211 & 1 & 0  &0 &0&0&0 
 \end{tabular}
 }
\end{center}
\caption{Example of transition matrices with $T=2$.}
\label{G_example}
\end{figure}

\FloatBarrier
\subsection{More on Utility Function}\label{sec:utility_derivation}
The system of utility functions in \citet{Gilleskie_Eca:98} can be written as follows
\begin{alignat*}{4}
u(d_t=1,x_t,\epsilon_t) 
&=  \theta_1  +&&\theta_41(t=0)  &&+\theta_6 C(x_t,1)1(t>0)   &&+ \epsilon_{t1},  \\
u(d_t=2,x_t,\epsilon_t) 
&=  \theta_2   +&&\theta_41(t=0)  &&+\theta_6 C(x_t,2)1(t>0)   &&+\epsilon_{t2},\\
u(d_t=3,x_t,\epsilon_t) 
&=  \theta_3   +&&\theta_41(t=0)    &&+\theta_6 C(x_t,3)1(t>0)   &&+\epsilon_{t3},  \\
u(d_t=0,x_t,\epsilon_t) 
&=                &&\theta_51(t=0)                       &&+ \theta_6 C(x_t,0)1(t>0).                    
\end{alignat*} 
Since we do not restrict
the location and scale of $\epsilon_{tj}$, $j=1,2,3$, the values of the intercepts $\theta_j$, $j=1,2,3$ can be set to arbitrary values. 
The coefficient $\theta_4$ is set to  a large negative value so that $d_t=0$ is always chosen when $t=0$ (i.e.,\ the individual is well) and 
$\theta_5$ can be set to an arbitrary positive value.
Define $\bm{\theta} = \left( \theta_1,\theta_2, \theta_3,\theta_4,\theta_5, \theta_6 \right)^\prime$. Then 
\begin{alignat*}{14}
u(d_t=1,x_t,\epsilon_t) 
&= \bigg( &&1, &&0, &&0,    &&1(t=0),  &&0, &&C(x_t,1)1(t>0) \bigg) \bm{\theta}  &&+\epsilon_{t1} &&=\bm{Z_1}(x_t)\bm{\theta}+\epsilon_{t1}, \\
u(d_t=2,x_t,\epsilon_t) 
&= \bigg(&&0, &&1, &&0,     &&1(t=0),  &&0,  &&C(x_t,2)1(t>0)  \bigg) \bm{\theta}  &&+\epsilon_{t2} &&=\bm{Z_2}(x_t)\bm{\theta}+\epsilon_{t2},\\
u(d_t=3,x_t,\epsilon_t) 
&= \bigg(&&0, &&0 ,&&1,     && 1(t=0), &&0, &&C(x_t,3)1(t>0)   \bigg) \bm{\theta}  &&+\epsilon_{t3} &&=\bm{Z_3}(x_t)\bm{\theta}+\epsilon_{t3},\\
u(d_t=0,x_t,\epsilon_t) 
&= \bigg(&&0,&&0, &&0,      &&0,           &&1(t=0), &&C(x_t,0)1(t>0) \bigg) \bm{\theta}  &&   &&=\bm{Z_0}(x_t)\bm{\theta},
\end{alignat*}
where
\small
\begin{align*}
\bm{Z_1}(x)=\big(Z_{11}(x),Z_{12}(x),Z_{13}(x),Z_{14}(x),Z_{15}(x) ,Z_{16}(x)  \big) 
= \bigg( 1,0, 0, 1(t=0),0, C(x,1)1(t>0) \bigg), \\
\bm{Z_2}(x)=\big(Z_{21}(x),Z_{22}(x),Z_{23}(x),Z_{24}(x),Z_{25}(x),Z_{26}(x)  \big) 
= \bigg(0,1,0,1(t=0),0,C(x,2)1(t>0)  \bigg), \\
\bm{Z_3}(x)=\big(Z_{31}(x),Z_{32}(x),Z_{33}(x),Z_{34}(x),Z_{35}(x),Z_{36}(x) \big) 
=\bigg(0,0,1, 1(t=0),0,C(x,3)1(t>0) \bigg),\\
\bm{Z_0}(x)=\big(Z_{01}(x),Z_{02}(x),Z_{03}(x),Z_{04}(x),Z_{05}(x),Z_{06}(x)  \big) 
=\bigg(0,0,0,0, 1(t=0),C(x,0)1(t>0)  \bigg).
\end{align*}
\normalsize
Letting $\bm{Z}_{j,1:5}(x)$ to be the first 5 columns of $\bm{Z_j}(x)$, $j=0,1,2,3$, we can also write
\begin{alignat*}{3}
u(1,x_t,\epsilon_t)&=\bm{Z}_{1,1:5}(x_t)\bm{\hat{\theta}}_{1:5}  +\theta_6 Z_{16}(x_t) +\epsilon_{t1}, \\
u(2,x_t,\epsilon_t)&=\bm{Z}_{2,1:5}(x_t)\bm{\hat{\theta}}_{1:5}  +\theta_6 Z_{26}(x_t) +\epsilon_{t2}, \\
u(3,x_t,\epsilon_t)&=\bm{Z}_{3,1:5}(x_t)\bm{\hat{\theta}}_{1:5}  +\theta_6 Z_{36}(x_t) +\epsilon_{t3}, \\
u(0,x_t,\epsilon_t)&=\bm{Z}_{0,1:5}(x_t)\bm{\hat{\theta}}_{1:5}  +\theta_6 Z_{06}(x_t),  &
\end{alignat*}
where $\bm{\hat{\theta}}_{1:5}$ is the vector of pre-specified coefficients and $\theta_{6}$ is a parameter to be estimated. 

\subsection{Data Generating Parameters}
\label{sec:giil_data_gen}
We compute the data generating CCPs based on the following values, mostly based on estimates in  \citet{Gilleskie_Eca:98}
for Type 2 illness with some adjustments so that the expected number of doctor visits and work absences roughly match with Gilleskie's sample. 
We use the following data-generating parameter values: 
$T=8$, 
$\beta=0.9997$, 
$\theta_1=-1.25$, 
$\theta_2= -0.83$, 
$\theta_3= -2.08$,
$\theta_4=-10000$,
$\theta_5=1$,
$\theta_6=0.0469$,
$Y=100$, 
$PC=15$,
$L=0.7$, 
$\phi_1=5.6$, and $\phi_2=-1.75$. 
For the state transition parameters
$\eta$,  
$\delta$, and 
$\xi$, we use the estimated values in Gilleskie (1998). 
We compute the data generating CCPs based on the parameter values mentioned above assuming that the data generating distribution of the utility shock is the following two-component mixture of extreme value distributions:
$
\sum_{k=1}^2
\omega_k
\phi \left( \epsilon; \mu_k,\sigma_k \right)
$
where 
$\omega_1=0.5568,\omega_2=0.4432$,
$\mu_1=[-0.4683,3.4628 ,-0.0914 ]',\mu_2=[0.9798,-2.2437 , 1.3496 ]' $,
$\sigma_1=3.7045$, and $\sigma_2=0.6378$. 
We then use the CCPs and the state-transition probabilities to sequentially generate 100 illness episodes.

\subsection{MCMC Plots from Gilleskie's Application}
\label{appsec:traceplots}

\FloatBarrier

\begin{figure}[ht] 
  \begin{subfigure}[b]{0.6\linewidth}
    \centering
    \includegraphics[width=\linewidth]{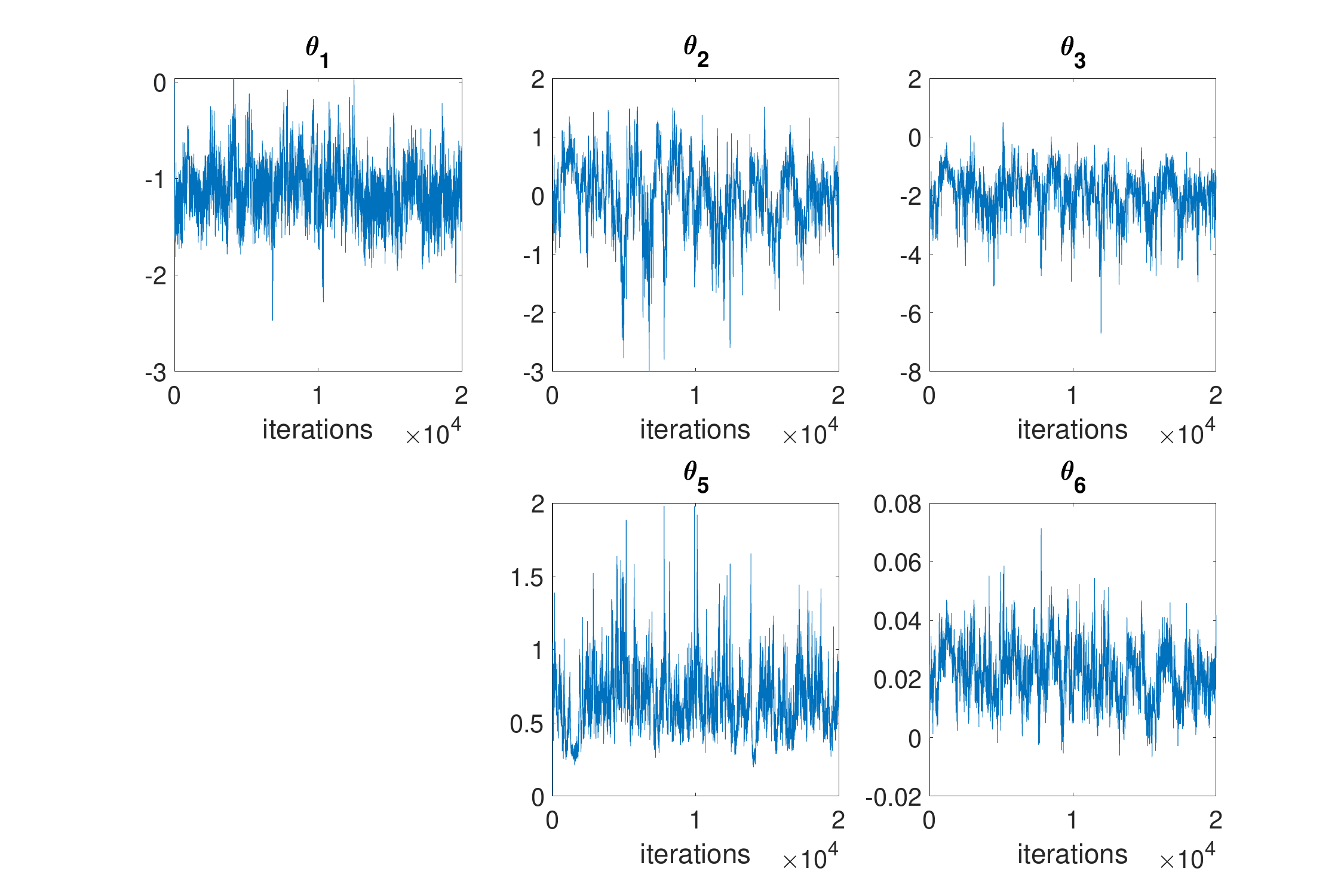} 
\caption{\small{Trace plots of utility parameters}}
  \end{subfigure}
  \hfill
  \begin{subfigure}[b]{0.6\linewidth}
    \centering
    \includegraphics[width=\linewidth]{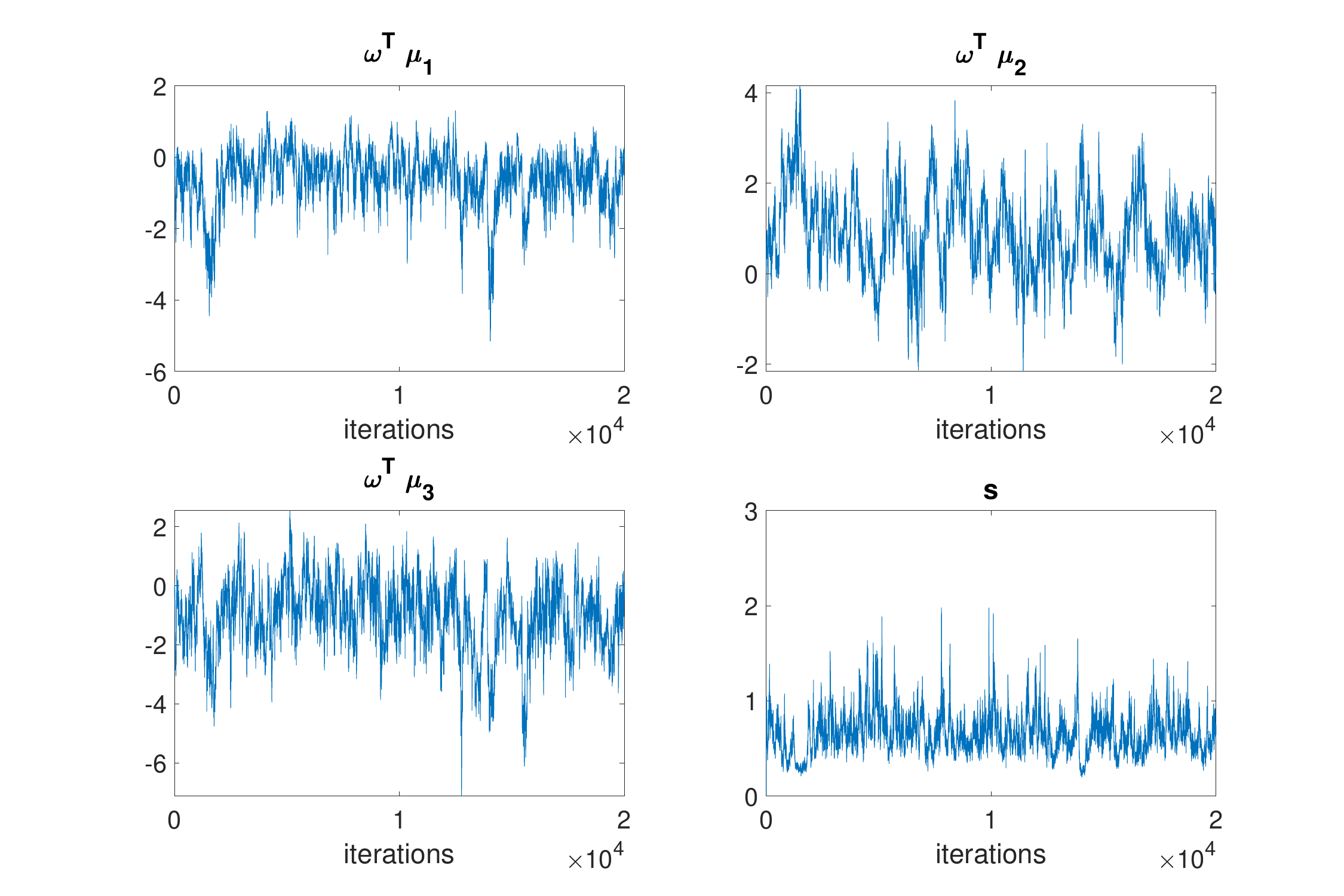} 
\caption{\small{Trace plots of 
  $\sum_{k=1}^m \omega_k \mu_{1k}$ (upper-left), 
    $\sum_{k=1}^m \omega_k \mu_{2k}$ (upper-right), 
  $\sum_{k=1}^m \omega_k \mu_{3k}$ (bottom-left), 
  $s$ (bottom-right), which is defined in Section \ref{subsection:normalizations}.
   }}
  \end{subfigure}
\caption{\small{Trace plots in Gilleskie model }}
\label{fig_Gilleskie_trace}
\end{figure}

\FloatBarrier


\subsection{Prior sensitivity check for Gilleskie's model}\label{sensitivity}
We consider two additional sets of priors for estimating  Gilleskie's model in Section \ref{section:gilleskie} and present results of estimation and counterfactual analysis. 
Despite slight differences, the overall findings remain the same.  
That is,  the credible intervals for the identified sets introduced in Section \ref{sec:inference_IS} include the true counterfactual value of mean doctor visits, and the HPD intervals get very close to it. 
In contrast,  
the confidence intervals computed via MLE assuming the dynamic logit misses the true value. 

\subsubsection{Prior sensitivity  check 1}
First, we consider increasing the prior variances for the component specific scale parameters $\sigma_k$'s and the preference parameter $\theta_6$.
The priors are now specified as follows, 
$\underbar{a}=10$, $A_m=0.05$, and $\tau=5$,
$\mu_{jk} \sim N(0, 3^2)$, 
$\log \sigma_{k}  \sim N(0, 1)$, 
$\log \sigma \sim N(0,0.01^2)$, and 
$\theta_6 \sim N(0,5^2)$. 
\FloatBarrier
\begin{table}[h]
  \centering
  \def\sym#1{\ifmmode^{#1}\else\(^{#1}\)\fi}%
  \begin{tabular}{l*{6}{c}}
    \toprule
    & \multicolumn{3}{c}{MLE under logit}  & \multicolumn{3}{c}{Semiparametric Bayes} \\
    \cmidrule(lr){3-4}\cmidrule(lr){5-7}
    & \multicolumn{1}{c}{True} 
    & \multicolumn{1}{c}{Est.} & \multicolumn{1}{c}{95\%CI (length)}  
    & \multicolumn{1}{c}{Est.} & \multicolumn{1}{c}{95\%HPD (length)}  
                                   & \multicolumn{1}{c}{$\hat{B}^{I_{E(v)}}_{0.95}$ (length)}      \\    
    \midrule
        \addlinespace
     $E(v)$  & 1.49  &  1.35    &  [1.15,    1.55]  (0.40)  & 1.37 &  [1.16,    1.58]    (0.42)   &  [1.02,    1.71]    (0.68)   \\
        \addlinespace
    $c.f.E(v)$ & 2.31  &  1.70      &  [1.46,    1.94]   (0.48) &  1.72   &  [1.18,    2.28]      (1.09)    &  [1.13,    4.97]    (3.84) \\
        \addlinespace        
    \bottomrule
  \end{tabular}
  
\caption{Estimated (Est.)  $E(v)$ and  $c.f.E(v)$: 
the MLE with its 95\% confidence interval and
the posterior mean with the 95\% HPD interval and $\hat{B}^{I_{E(v)}}_{0.95}$. }
\label{table_estimation_check1}  
\end{table}
\FloatBarrier

\begin{figure}[ht] 
  \begin{subfigure}[b]{0.4\linewidth}
    \centering
    \includegraphics[width=\linewidth]{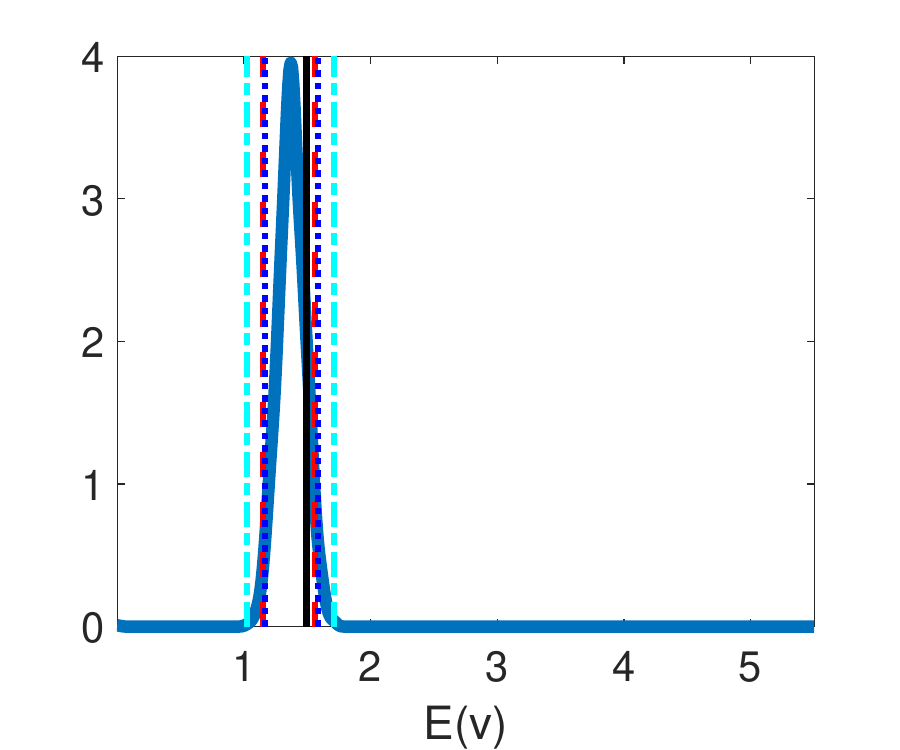} 
\caption{\small{Mean doctor visits }}
  \end{subfigure}
  \begin{subfigure}[b]{0.4\linewidth}
    \centering
    \includegraphics[width=\linewidth]{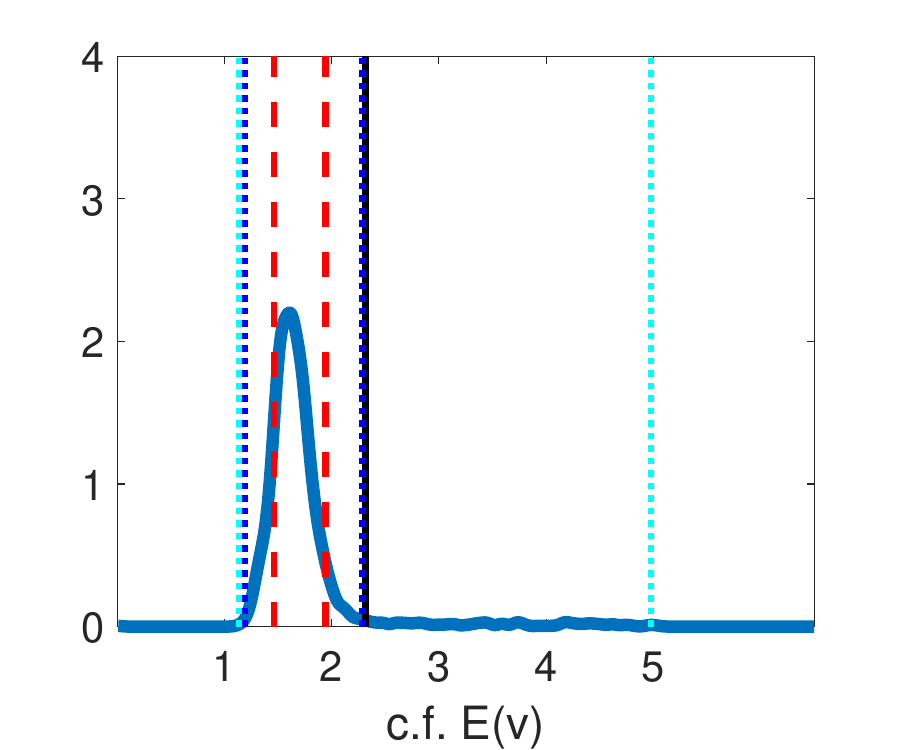} 
\caption{\small{ Mean c.f. doctor visits }}
  \end{subfigure}

\caption{\small{Actual and counterfactual (c.f.) expected number of doctor visits. 
Posterior density (blue solid),
95\% confidence interval (red dashed), 
95\% HPD interval (blue dotted),
$\hat{B}^{I_{E(v)}}_{0.95}$ (light blue dash-dotted),
and data generating value (black solid). }}
\label{fig_Gilleskie_EvEa_check1}
\end{figure}
\FloatBarrier

\subsubsection{Prior sensitivity  check 2}
Second, we consider using a set of priors similar to the one used to estimate  Rust model in Section \ref{section:rust}. 
The priors are now specified as follows, 
$\underbar{a}=10$, $A_m=0.05$, and $\tau=5$,
$\mu_{jk} \sim 0.5N(2,2^2) + 0.5N(-3,2^2)$, 
$\log \sigma_{k}  \sim 0.4N(0,1) + 0.6N(-1,1)$, 
$\log \sigma \sim N(0,0.01^2)$, and 
$\theta_6 \sim N(0,3^2)$.

\FloatBarrier

\begin{table}[h]
  \centering
  \def\sym#1{\ifmmode^{#1}\else\(^{#1}\)\fi}%
  \begin{tabular}{l*{6}{c}}
    \toprule
    & \multicolumn{3}{c}{MLE under logit}  & \multicolumn{3}{c}{Semiparametric Bayes} \\
    \cmidrule(lr){3-4}\cmidrule(lr){5-7}
    & \multicolumn{1}{c}{True} 
    & \multicolumn{1}{c}{Est.} & \multicolumn{1}{c}{95\%CI (length)}  
    & \multicolumn{1}{c}{Est.} & \multicolumn{1}{c}{95\%HPD (length)}  
                                   & \multicolumn{1}{c}{$\hat{B}^{I_{E(v)}}_{0.95}$ (length)}      \\   
    \midrule
        \addlinespace
     $E(v)$  & 1.49  &  1.35    &  [1.15,    1.55]  (0.40)  & 1.37 &  [1.16,    1.57]    (0.41)   &  [1.02,    1.77]    (0.75)   \\
        \addlinespace
    $c.f.E(v)$ & 2.31  &  1.70      &  [1.46,    1.94]   (0.48) &  1.80   &  [1.16,    3.04]      (1.87)    &  [1.08,    5.16]    (4.07) \\        
        \addlinespace        
    \bottomrule
  \end{tabular}
  
\caption{Estimated (Est.)  $E(v)$ and  $c.f.E(v)$: 
the MLE with its 95\% confidence interval and
the posterior mean with the 95\% HPD interval and $\hat{B}^{I_{E(v)}}_{0.95}$. }
\label{table_estimation_check2}  
\end{table}
\FloatBarrier
\begin{figure}[ht] 
  \begin{subfigure}[b]{0.4\linewidth}
    \centering
    \includegraphics[width=\linewidth]{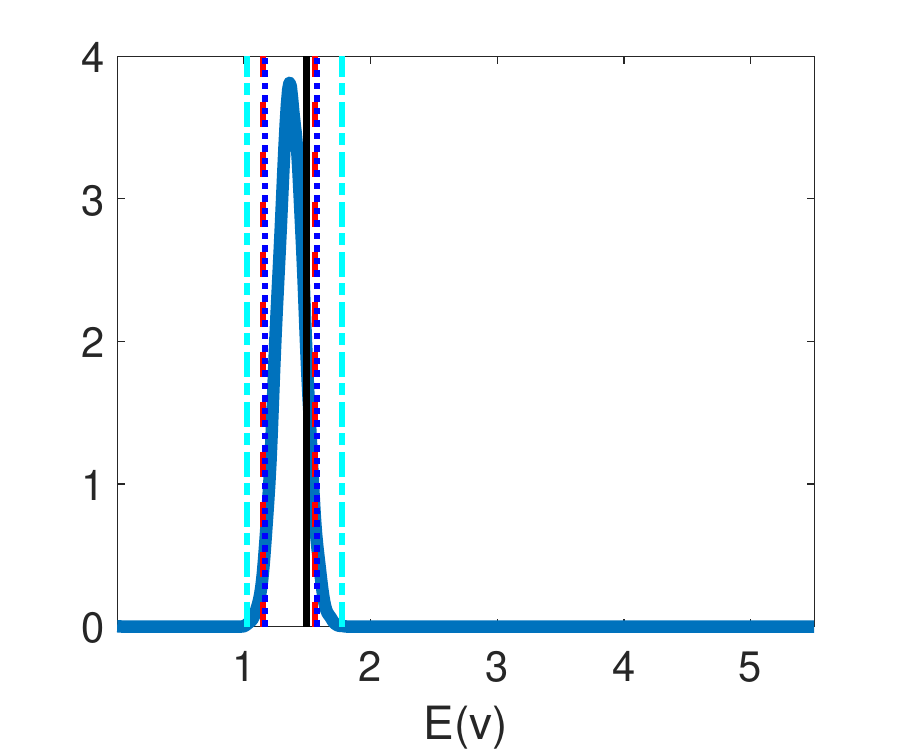} 
\caption{\small{Mean doctor visits }}
  \end{subfigure}
  \begin{subfigure}[b]{0.4\linewidth}
    \centering
    \includegraphics[width=\linewidth]{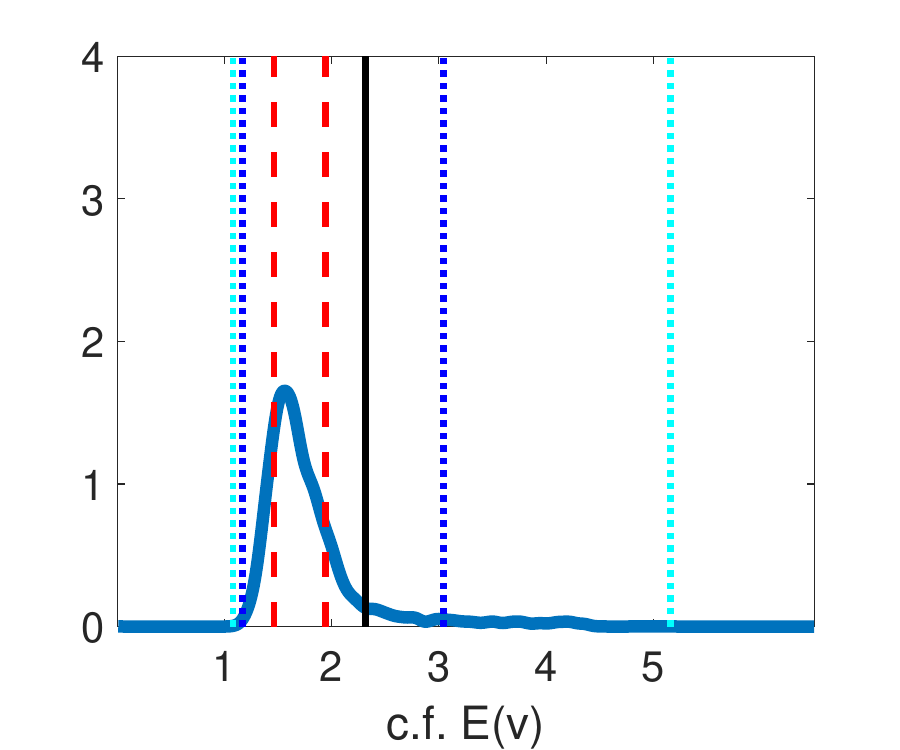} 
\caption{\small{ Mean c.f. doctor visits }}
  \end{subfigure}

\caption{\small{Actual and counterfactual (c.f.) expected number of doctor visits. 
Posterior density (blue solid),
95\% confidence interval (red dashed), 
95\% HPD interval (blue dotted),
$\hat{B}^{I_{E(v)}}_{0.95}$ (light blue dash-dotted),
and data generating value (black solid). }}
\label{fig_Gilleskie_EvEa_check2}
\end{figure}
\FloatBarrier



\subsection{Results of Gilleskie's model when the data generating process is dynamic logit}\label{gilleskie_logit}
Recall that in Section \ref{section:gilleskie}, we let the data generating distribution of the utility shocks to be a two-component mixture of extreme value distributions. Here we show results when it is logistic, and therefore the dynamic logit model is correctly specified.
Table \ref{table_estimation_logit} shows estimation results.
The overall findings are similar to the case with a mixture data generating distribution: 
the credible intervals of our semiparametric approach are much wider than the confidence intervals based on the dynamic logit assumption.
However, now the confidence interval includes the true counterfactual value of doctor visits $E(v)$ (See Figure \ref{fig_Gilleskie_EvEa_logit}) while it did not under the mixture data generating distribution in Section \ref{section:gilleskie}. This is probably a consequence of the dynamic logit model being correctly specified in this particular example. 
We conducted some prior sensitivity checks in Section \ref{sensitivity_logit} and confirm that the general findings are the same.

\FloatBarrier
\begin{table}[h]
  \centering
  \def\sym#1{\ifmmode^{#1}\else\(^{#1}\)\fi}%
  \begin{tabular}{l*{6}{c}}
    \toprule
    & \multicolumn{3}{c}{MLE under logit}  & \multicolumn{3}{c}{Semiparametric Bayes} \\
    \cmidrule(lr){3-4}\cmidrule(lr){5-7}
    & \multicolumn{1}{c}{True} 
    & \multicolumn{1}{c}{Est.} & \multicolumn{1}{c}{95\%CI (length)}  
    & \multicolumn{1}{c}{Est.} & \multicolumn{1}{c}{95\%HPD (length)}  
                                   & \multicolumn{1}{c}{$\hat{B}^{I_{E(v)}}_{0.95}$ (length)}      \\      
    \midrule
        \addlinespace
     $E(v)$  & 1.53 &  1.43     &  [1.22,    1.65]  (0.42)  & 1.45 &  [1.24,    1.66]    (0.41)   &  [1.01,    1.82]    (0.80)   \\
        \addlinespace
    $c.f.E(v)$  & 1.64 &  1.67      &  [1.43,    1.91]   (0.47) &  1.74   &  [1.35,    2.12]      (0.76)    &  [1.18,    3.29]    (2.11) \\
        \addlinespace        
    \bottomrule
  \end{tabular}
  
\caption{Estimated (Est.)  $E(v)$ and  $c.f.E(v)$: 
the MLE with its 95\% confidence interval and
the posterior mean with the 95\% HPD interval and $\hat{B}^{I_{E(v)}}_{0.95}$. }
\label{table_estimation_logit}  
\end{table}

\FloatBarrier

\begin{figure}[ht] 
  \begin{subfigure}[b]{0.4\linewidth}
    \centering
    \includegraphics[width=\linewidth]{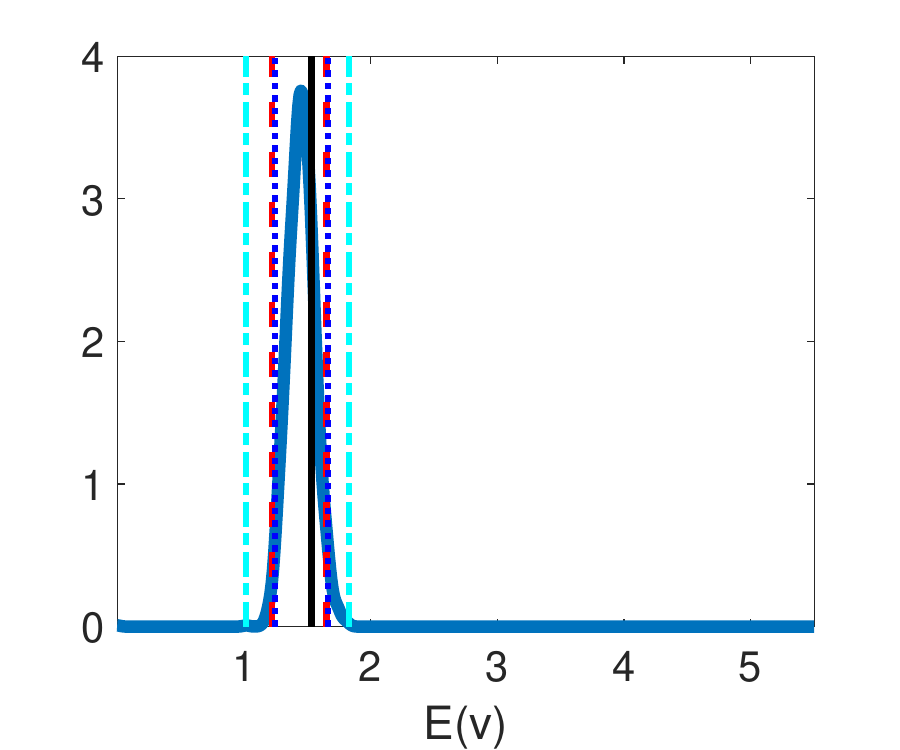} 
\caption{\small{Mean doctor visits }}
  \end{subfigure}
  \begin{subfigure}[b]{0.4\linewidth}
    \centering
    \includegraphics[width=\linewidth]{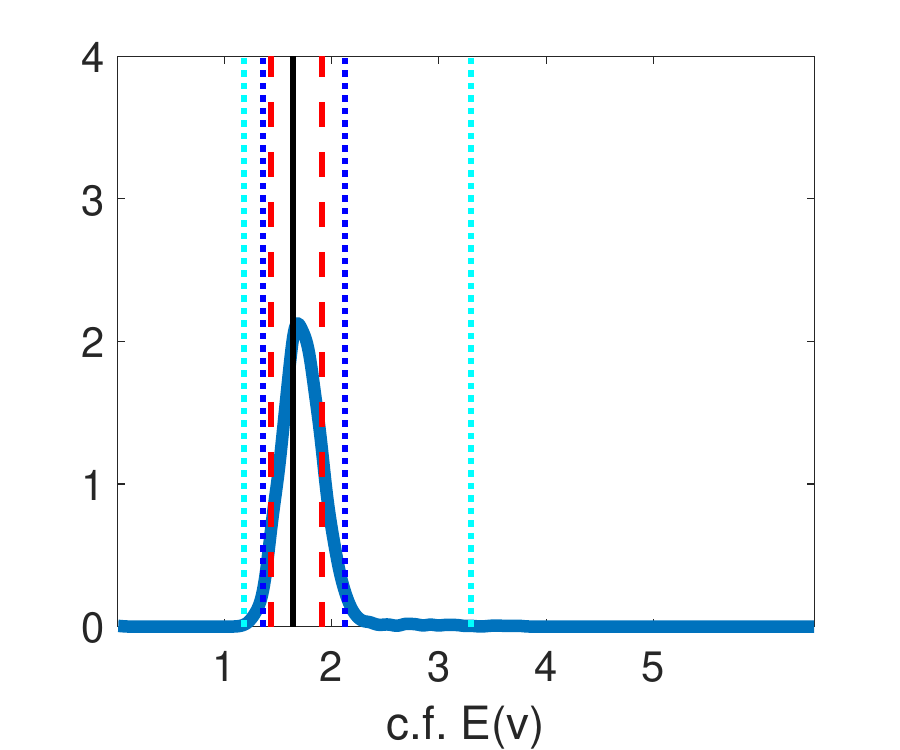} 
\caption{\small{ Mean c.f. doctor visits }}
  \end{subfigure}

\caption{\small{Actual and counterfactual (c.f.) expected number of doctor visits. 
Posterior density (blue solid),
95\% confidence interval (red dashed), 
95\% HPD interval (blue dotted),
$\hat{B}^{I_{E(v)}}_{0.95}$ (light blue dash-dotted),
and data generating value (black solid). }}

\label{fig_Gilleskie_EvEa_logit}
\end{figure}
\FloatBarrier

\subsubsection{Prior sensitivity  checks}\label{sensitivity_logit}

\paragraph{Prior sensitivity  check 1}
First, we consider increasing the prior variances for the component specific scale parameters $\sigma_k$'s and the preference parameter $\theta_6$.
The priors are now specified as follows, 
$\underbar{a}=10$, $A_m=0.05$, and $\tau=5$,
$\mu_{jk} \sim N(0, 3^2)$, 
$\log \sigma_{k}  \sim N(0, 1)$, 
$\log \sigma \sim N(0,0.01^2)$, and 
$\theta_6 \sim N(0,5^2)$. 
The results are presented below.
\FloatBarrier
\begin{table}[h]
  \centering
  \def\sym#1{\ifmmode^{#1}\else\(^{#1}\)\fi}%
  \begin{tabular}{l*{6}{c}}
    \toprule
    & \multicolumn{3}{c}{MLE under logit}  & \multicolumn{3}{c}{Semiparametric Bayes} \\
    \cmidrule(lr){3-4}\cmidrule(lr){5-7}
    & \multicolumn{1}{c}{True} 
    & \multicolumn{1}{c}{Est.} & \multicolumn{1}{c}{95\%CI (length)}  
    & \multicolumn{1}{c}{Est.} & \multicolumn{1}{c}{95\%HPD (length)}  
                                   & \multicolumn{1}{c}{$\hat{B}^{I_{E(v)}}_{0.95}$ (length)}      \\     
    \midrule
        \addlinespace
     $E(v)$  & 1.53 &  1.43     &  [1.22,    1.65]  (0.42)  & 1.45 &  [1.25,    1.66]    (0.41)   &  [1.11,    1.82]    (0.71)   \\
        \addlinespace
    $c.f.E(v)$  & 1.64 &  1.67      &  [1.43,    1.91]   (0.47) &  1.79   &  [1.24,    2.48]      (1.24)    &  [1.20,    3.77]    (2.56) \\    
        \addlinespace        
    \bottomrule
  \end{tabular}
  
\caption{Estimated (Est.)  $E(v)$ and  $c.f.E(v)$: 
the MLE with its 95\% confidence interval and
the posterior mean with the 95\% HPD interval and $\hat{B}^{I_{E(v)}}_{0.95}$. }
\label{table_estimation_logit_check1}  
\end{table}

\FloatBarrier

\begin{figure}[ht] 
  \begin{subfigure}[b]{0.4\linewidth}
    \centering
    \includegraphics[width=\linewidth]{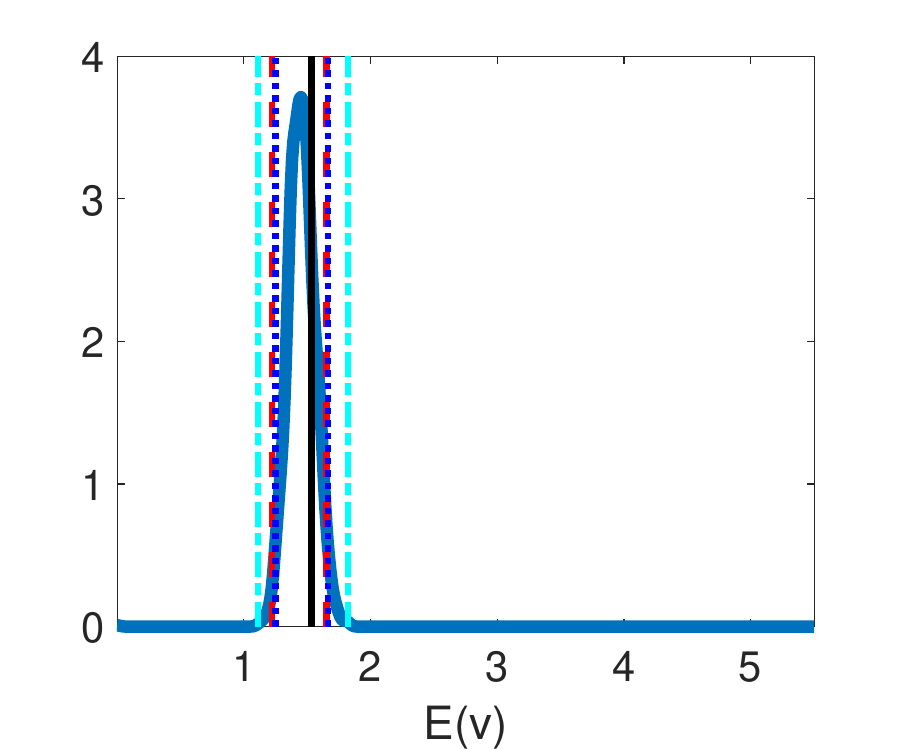} 
\caption{\small{Mean doctor visits }}
  \end{subfigure}
  \begin{subfigure}[b]{0.4\linewidth}
    \centering
    \includegraphics[width=\linewidth]{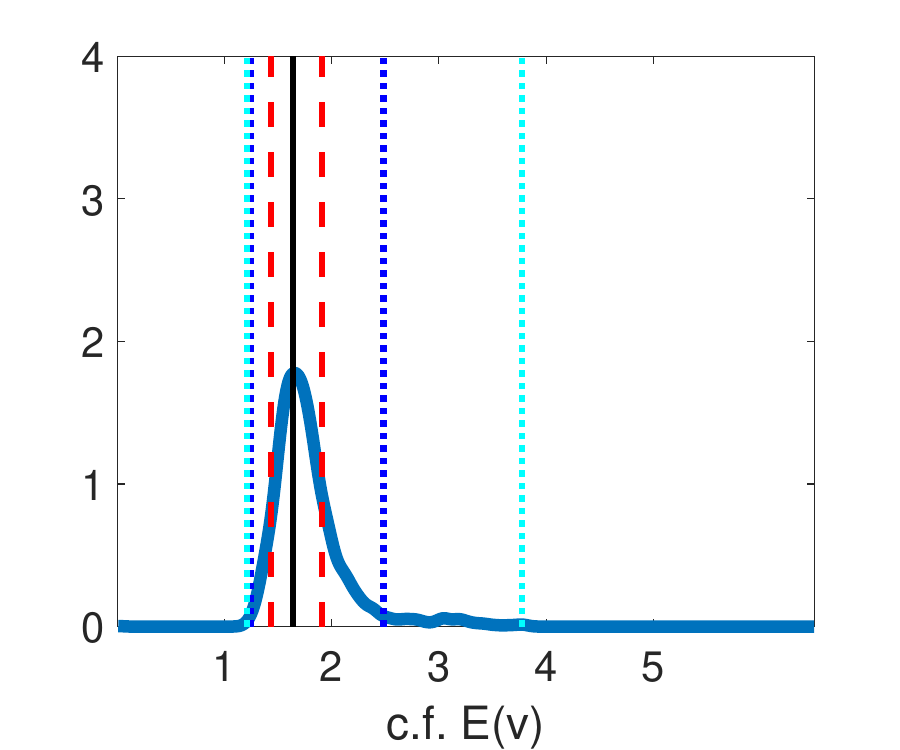} 
\caption{\small{ Mean c.f. doctor visits }}
  \end{subfigure}

\caption{\small{Actual and counterfactual (c.f.) expected number of doctor visits. 
Posterior density (blue solid),
95\% confidence interval (red dashed), 
95\% HPD interval (blue dotted),
$\hat{B}^{I_{E(v)}}_{0.95}$ (light blue dash-dotted),
and data generating value (black solid). }}

\label{fig_Gilleskie_EvEa_logit_check1}
\end{figure}
\FloatBarrier

\paragraph{Prior sensitivity  check 2}
Second, we consider using a set of priors similar to the one used to estimate  Rust model in Section \ref{section:rust}. 
The priors are now specified as follows, 
$\underbar{a}=10$, $A_m=0.05$, and $\tau=5$,
$\mu_{jk} \sim 0.5N(2,2^2) + 0.5N(-3,2^2)$, 
$\log \sigma_{k}  \sim 0.4N(0,1) + 0.6N(-1,1)$, 
$\log \sigma \sim N(0,0.01^2)$, and 
$\theta_6 \sim N(0,3^2)$.
The results are shown below.
\FloatBarrier
\begin{table}[h]
  \centering

  \def\sym#1{\ifmmode^{#1}\else\(^{#1}\)\fi}%
  \begin{tabular}{l*{6}{c}}
    \toprule
    & \multicolumn{3}{c}{MLE under logit}  & \multicolumn{3}{c}{Semiparametric Bayes} \\
    \cmidrule(lr){3-4}\cmidrule(lr){5-7}
    & \multicolumn{1}{c}{True} 
    & \multicolumn{1}{c}{Est.} & \multicolumn{1}{c}{95\%CI (length)}  
    & \multicolumn{1}{c}{Est.} & \multicolumn{1}{c}{95\%HPD (length)}  
                                   & \multicolumn{1}{c}{$\hat{B}^{I_{E(v)}}_{0.95}$ (length)}      \\        
    \midrule
        \addlinespace
     $E(v)$  & 1.53 &  1.43     &  [1.22,    1.65]  (0.42)  & 1.44 &  [1.23,    1.65]    (0.41)   &  [1.12,    1.79]    (0.66)   \\
        \addlinespace
    $c.f.E(v)$  & 1.64 &  1.67      &  [1.43,    1.91]   (0.47) &  1.61   &  [1.29,    1.95]      (0.66)    &  [1.16,    2.17]    (1.00) \\           
        \addlinespace        
    \bottomrule
  \end{tabular}
  
\caption{Estimated (Est.)  $E(v)$ and  $c.f.E(v)$: 
the MLE with its 95\% confidence interval and
the posterior mean with the 95\% HPD interval and $\hat{B}^{I_{E(v)}}_{0.95}$. 
}
\label{table_estimation_logit_check2}

\end{table}

\begin{figure}
  \begin{subfigure}{0.4\linewidth}
    \centering
    \includegraphics[width=\linewidth]{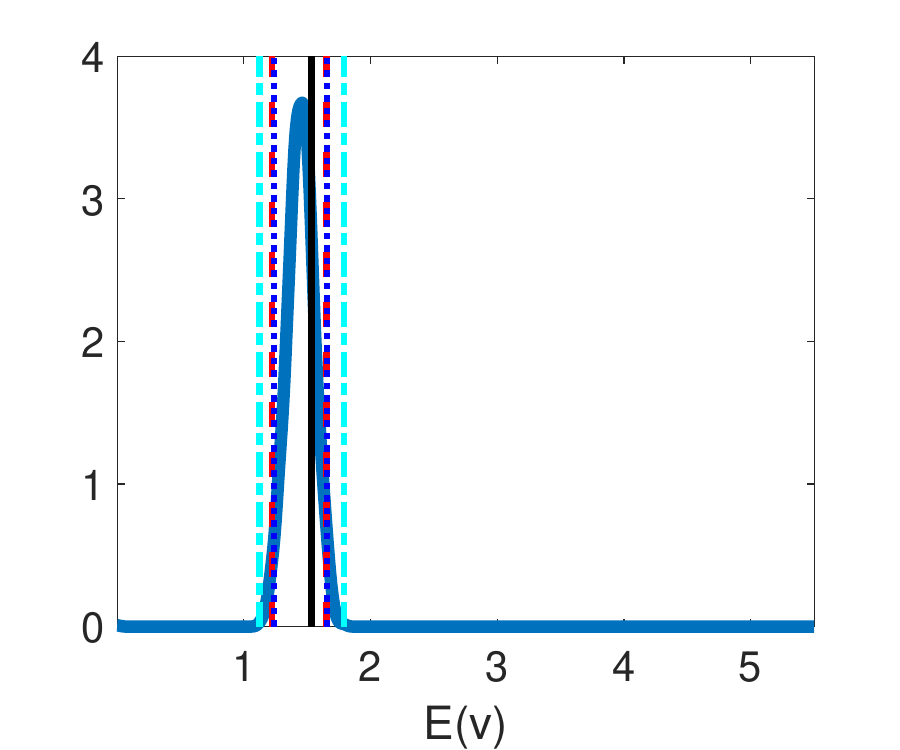} 
\caption{\small{Mean doctor visits }}
  \end{subfigure}
  \begin{subfigure}{0.4\linewidth}
    \centering
    \includegraphics[width=\linewidth]{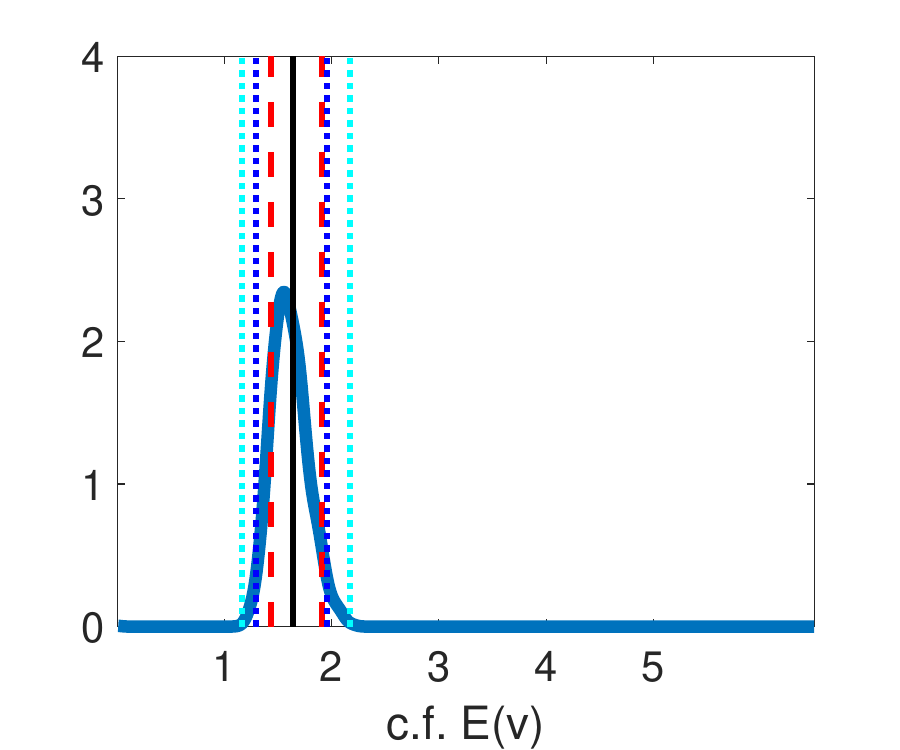} 
\caption{\small{ Mean c.f. doctor visits }}
  \end{subfigure}

\caption{\small{Actual and counterfactual (c.f.) expected number of doctor visits. 
95\% confidence interval (red dashed), 
95\% HPD interval (blue dotted),
$\hat{B}^{I_{E(v)}}_{0.95}$ (light blue dash-dotted),
and data generating value (black solid).
 }}

\label{fig_Gilleskie_EvEa_logit_check2}
\end{figure}
\FloatBarrier

\pagebreak

\end{document}